\newcolumntype{d}[1]{D{.}{.}{#1}} 
\DeclareMathOperator*{\argmin}{arg~min}
\DeclareMathOperator{\E}{\mathbb{E}}
\DeclareMathOperator{\Prob}{\mathbb{P}}
\DeclareMathOperator{\vect}{\mathrm{vec}}
\DeclareMathOperator{\diag}{\text{diag}}
\newcommand{\abs}[1]{\left\lvert#1\right\rvert}
\newcommand{\norm}[1]{\left\lVert#1\right\rVert}
\newcommand{\normoneinf}[1]{\left\lVert#1\right\rVert_{\vdash}}
\newcommand{\calF}{\mathcal{F}}
\newcommand{\bA}{\bm A}
\newcommand{\bB}{\bm B}
\newcommand{\bC}{\bm C}
\newcommand{\bc}{\bm c}
\newcommand{\bD}{\bm D}
\newcommand{\be}{\bm e}
\newcommand{\bF}{\bm F}
\newcommand{\bI}{\bm I}
\newcommand{\bM}{\bm M}
\newcommand{\bQ}{\bm Q}
\newcommand{\bq}{\bm q}
\newcommand{\bu}{\bm u}
\newcommand{\bV}{\bm V}
\newcommand{\bW}{\bm W}
\newcommand{\bw}{\bm w}
\newcommand{\bx}{\bm x}
\newcommand{\by}{\bm y}
\newcommand{\bbeta}{\bm \beta}
\newcommand{\bepsilon}{\bm \epsilon}
\newcommand{\biota}{\bm \iota}
\newcommand{\blambda}{\bm \lambda}
\newcommand{\bsigma}{\bm \sigma}
\newcommand{\bDelta}{\bm \varDelta}
\newcommand{\bSigma}{\bm \varSigma}
\newcommand{\bzeros}{\boldsymbol{0}}
\newcommand{\bZeros}{\mathbf{O}}
\newcommand{\tran}{'}
\newcommand{\ind}[1]{\mathbbm{1}_{\{#1\}}}
\newcommand{\hbanded}[2]{\mathcal{B}_{h_{#2}}\big(#1\big)}
\newcommand{\setG}{\mathcal{G}}
\newcommand{\sR}{\mathbb{R}}
\theoremstyle{definition}
\newtheorem{assumption}{Assumption}
\newtheorem{remark}{Remark}
\theoremstyle{plain}
\newtheorem{theorem}{Theorem}
\newtheorem{lemma}{Lemma}
\newtheorem{corollary}{Corollary}
\newcommand{\xoffset}{0.25}
\title{Sparse Generalized Yule–Walker Estimation for Large Spatio-temporal Autoregressions with an Application to NO\textsubscript{2} Satellite Data}
\author[1]{Hanno Reuvers\thanks{Corresponding author: Department of Econometrics, Erasmus School of Economics, Erasmus University Rotterdam, Burgemeester Oudlaan 50, 3062 PA Rotterdam, The Netherlands. E-mail address: reuvers@ese.eur.nl. Phone: +31 10 40 81257.}}
\author[2]{Etienne Wijler}
\affil[1]{Department of Econometrics, Erasmus University Rotterdam, 3062 PA Rotterdam, The Netherlands}
\affil[2]{Department of Quantitative Economics, Maastricht University, 6200 MD Maastricht, The Netherlands}
\date{\today}
\begin{document}

\maketitle
\begin{abstract}
We consider a high-dimensional model in which variables are observed over time and space. The model consists of a spatio-temporal regression containing a time lag and a spatial lag of the dependent variable. Unlike classical spatial autoregressive models, we do not rely on a predetermined spatial interaction matrix, but infer all spatial interactions from the data. Assuming sparsity, we estimate the spatial and temporal dependence fully data-driven by penalizing a set of Yule-Walker equations. This regularization can be left unstructured, but we also propose customized shrinkage procedures when observations originate from spatial grids (e.g. satellite images). Finite sample error bounds are derived and estimation consistency is established in an asymptotic framework wherein the sample size and the number of spatial units diverge jointly. Exogenous variables can be included as well. A simulation exercise shows strong finite sample performance compared to competing procedures. As an empirical application, we model satellite measured $\text{NO}_2$ concentrations in London. Our approach delivers forecast improvements over a competitive benchmark and we discover evidence for strong spatial interactions.\\

\textit{Keywords}: Spatio-Temporal Models, SPLASH, Satellite Data, Yule-Walker, High-Dimensional

\textit{JEL-Codes}: C33, C53, C55
\end{abstract}

\section{Introduction}

Spatio-temporal models are powerful tools to explain and exploit dependencies between variables that are observed over both time and space, but they come with a number of challenges. In particular, endogeneity issues arise because the contemporaneous observations occur on both sides of the model equation. Furthermore, the inclusion of both spatial and temporal lags quickly results in heavily parameterized models. To circumvent these issues, a large part of the literature incorporates predetermined spatial weight matrices that govern the contemporaneous interactions between spatial units. Examples of this modelling strategies are: the spatial autoregressive model with a Gaussian quasi-maximum likelihood estimator (QMLE) by \citet{lee2004}; the QMLE estimation of stationary spatial panels with fixed effects detailed in \citet{yudejonglee2008}; the extension of these spatial panels to include spatially autoregressive disturbances as in \citet{leeyu2010}; a further extension to a non-stationary setting in which units can be spatially cointegrated in \citet{Yu2012}; and the computationally beneficial generalized method of moments (GMM) estimator by \citet{Lee2014}. While the choice of the spatial weight matrix is a key element of the model specification, its selection process can feel somewhat arbitrary and/or tedious. The arbitrariness might prevail when practical considerations fail to suggest a particular mechanism for the spatial interactions. Accordingly, more recent literature focuses on either incorporating multiple weight matrices \citep[e.g.][]{DebarsyLeSage2018,zhangyu2018} or, at the expense of estimating many parameters, directly inferring all spatial interactions from the data \citep[e.g.][]{lamsouza2019,Gao2019,MaGuoWang2021}. We contribute to the latter strand of literature with the development of a new estimator that provides several unique benefits.

In this paper, we propose the SPatial LAsso-type SHrinkage (SPLASH) estimator as a fully data-driven estimator of spatio-temporal interactions. Apart from a generous bandwidth upper bound, this SPLASH estimator leaves the spatial weight matrix and autoregressive matrix unspecified while employing a lasso approach to recover sparse solutions. Building upon previous works by \citet{DouParrellaYao2016} and \citet{Gao2019}, we resolve the endogeneity problem in our spatio-temporal regressions by estimating the generalized Yule-Walker equations. Our contributions are five-fold. First, assuming sparsity in the coefficient matrices and general mixing conditions on the innovations, we derive finite-sample performance bounds for the estimation and prediction error of our estimator. We subsequently utilize these bounds to derive asymptotic consistency in a variety of settings. For example, in the special case of a finitely bounded bandwidth and unstructured sparsity, it follows that the number of spatial units $N$ may grow at any polynomial rate of the number of temporal observations $T$. Second, we adopt a banded estimation procedure for the autocovariance matrices that underlie the generalized Yule-Walker equations. The faster convergence rates of these banded autocovariance matrix estimators are shown to translate into better convergence rates of our SPLASH estimator. Third, we show that dependence between neighbouring units that are ordered on a spatial grid translates to diagonally structured forms of sparsity in the spatial weight matrix. A tailored regularization procedure reminiscent of the sparse group lasso is proposed. Fourth, we generalize the model to include exogenous variable and demonstrate that an extended system of Yule-Walker equations continues to provide consistent estimators. Our simulation study confirms these points. Fifth and final, we employ SPLASH to predict $\text{NO}_2$ concentrations in London from satellite data.

Elaborating on the empirical application, we collect daily $\text{NO}_2$ column densities from August 2018 to October 2020, recorded by the TROPOspheric Monitoring Instrument (TROPOMI) on board of the Corpernicus Sentinel-5 Precursor satellite. Each spatial unit is an aggregation of a small number of pixels on the satellite image. We find that SPLASH constructs more accurate one-step ahead predictions for all spatial units compared to the procedure in \citet{Gao2019}, while outperforming a competitive penalized VAR benchmark for the majority of spatial units. In addition, we find evidence for spatial interactions between first-order neighbours and second-order neighbours (i.e. neighbours of neighbours).

There are two strands of literature that are closely linked to this work: the literature on the estimation of (nonparametric) spatial weight matrices and the literature on spatio-temporal vector autoregressions. Some similarities and differences are as follows. \citet{lamsouza2014} consider a model specification where the spatial units depend linearly on a spatial lag and exogenous regressors. The adaptive lasso is proven to select the correct sparsity pattern. To solve the endogeneity issue, they require the error variance to decay to zero as the time dimension grows large. \citet{Ahrens2015} solve the endogeneity problem using external instruments. Their two-step lasso estimation procedure selects the relevant instruments in the first step and the relevant spatial interactions in the second step. The theoretical properties of this estimator are derived using moderate deviation theory as in \citet{Jing2003}. This approach requires the instruments and the idiosyncratic component to be serially independent. Clearly, a serial independence assumption is unrealistic for the spatio-temporal models we consider here. Finally, \citet{lamsouza2019} augment a spatial lag model with a set of potentially endogenous variables (the augmenting set). They decompose the spatial weight matrix into a pre-determined component based on expert knowledge and a sparse adjustment matrix that represents specification errors. The adjustment matrix is sparsely estimated based on a penalized version of instrumental variables (IV) regression. If these instrumental variables are selected as temporal lags of the dependence variable, then their IV regressions are similar to generalized Yule-Walker estimation. In contrast to our approach, \citet{lamsouza2019} do not regularize the interactions between the dependent variables and the variables in the augmenting set, and they assume the number of such interactions to be fixed. A fixed number of interactions is inappropriate in high-dimensional settings in which the number of spatial units is allowed to diverge.

Closest related to the our work are \citet{Gao2019}, and \citet{MaGuoWang2021}. Both papers consider the same model and an estimation procedure that relies on generalized Yule-Walker equations. The key difference with our paper lies in the method by which the model complexity is controlled during estimation. \citet{Gao2019} assume the coefficient matrices to be banded with a bandwidth that is small compared to the number of spatial units. The bandwidth is determined from the data and all parameters within the selected bandwidth are left unregularized. Our SPLASH estimator, however, has the ability to exploit (structured) sparsity within the bandwidth and thereby improve estimation and forecasting performance. In addition, apart from a generous upper bound on the bandwidth to ensure identification, SPLASH does not require an a priori choice regarding the bandwidth. The recently developed bagging approach in \citet{MaGuoWang2021} does allow for sparsity within the bands, yet it also requires the calculation of so-called solution paths. That is, a forward addition and backward deletion stage are needed to determine the variables that enter the final model specification. In contrast, the SPLASH estimator provides this solution at once. Furthermore, their approach is not designed to detect diagonally structured forms of sparsity, while the ability to do so results in clear performance improvements of SPLASH in both the simulations and the empirical application considered below.

This paper is organized as follows. Section \ref{sec:spatiotempmodel} introduces the spatio-temporal vector autoregression and the banded autocovariance estimator that underlies the generalized Yule-Walker estimation approach. The SPLASH estimator and its theoretical properties are discussed in Section \ref{sec:lassoestimation}. The simulation results in Section \ref{sec:MC} and the empirical application in Section \ref{sec:empapplic} demonstrate the benefits of the SPLASH estimator. Section \ref{sec:conclusion} concludes.

\subsection*{Notation}
The indicator function $\ind{A}$ equals 1 if $A$ is true and zero otherwise. For a vector $\bx \in \sR^N$, the $L_p$-norm of $\bx$ is denoted $\norm{\bx}_p = \big(\sum_{i=1}^N |x_i|^p\big)^{1/p}$, with $\norm{\bx}_\infty = \max_i \abs{x_i}$ as an important special case. The total number elements in $\bx$ is denoted by $\abs{\bx}$ and the number of non-zero elements in $\bx$ is denoted by $\mathcal{M}(\bx) = \sum_{i=1}^N \mathbbm{1}\lbrace x_i \neq 0\rbrace$. The Orlicz norm is defined as $\norm{\cdot}_{\psi} = \inf\left\{ c>0 : \E \big[\psi\left(|\cdot|/c \right)\big]  \leq 1 \right\}$ for any $\psi(\cdot): \mathbb{R}^+ \to \mathbb{R}^+$ being a convex, increasing function with $\psi(0)=0$ and $\psi(x)\to \infty$ as $x\to \infty$. In addition, we rely on several matrix norms. For a matrix $\bA \in \sR^{M \times N}$, the matrix norms induced by the vector $L_p$-norms are given by $\norm{\bA}_p = \sup_{\bx \in \mathbb{R}^M} \big( \norm{\bA\bx}_p / \norm{\bx}_p\big)$. Noteworthy examples are: $\norm{\bA}_1 = \max_{1 \leq j \leq N} \sum_{i=1}^M \abs{a_{ij}}$, the spectral norm $\norm{\bA}_2 = \big[\lambda_{\max}(\bA\tran\bA)\big]^{1/2}$ where $\lambda_{\max}(\cdot)$ stands for the maximum eigenvalue, and $\norm{\bA}_\infty = \max_{1 \leq i \leq M} \sum_{j=1}^N \abs{a_{ij}}$. The Frobenius norm of $\bA$ is $\norm{\bA}_{F}= \big( \sum_{i=1}^M \sum_{j=1}^N |a_{ij}|^2 \big)^{1/2}$. Finally, we define $\norm{\bA}_\max = \max_{i,j} \abs{a_{ij}}$ and $\normoneinf{\bA}=\max\left\{\norm{\bA}_1,\norm{\bA}_\infty \right\}$. Let $S \subseteq \lbrace 1, \ldots, N \rbrace$ denote an index set with cardinality $\abs{S}$. Then, $\bx_S$ denotes the $\abs{S}$-dimensional vector with the elements of $\bx$ indexed by $S$, whereas $\bA_S$ denotes the $(M \times \abs{S})$-dimensional matrix containing the columns of $\bA$ indexed by $S$. In addition, we define $\mathcal{D}_A(k) = \left\lbrace a_{ij} \ \vert \ \abs{i-j} = k\right\rbrace$ as the collections of elements lying on (pairs of) the diagonals in the matrix $\bA$. Finally, $C$ is a generic constant that can change value from line-to-line.

\section{The Spatio-temporal Vector Autoregression}\label{sec:spatiotempmodel}
As in the recent paper by \cite{Gao2019}, we consider the spatio-temporal vector autoregression
\begin{equation}
    \by_t = \bA\by_t + \bB\by_{t-1} + \bepsilon_t, \qquad\qquad\qquad\qquad t=1,\ldots,T,
\label{eq:ModelSpec}
\end{equation}
where $\by_t = (y_{1t},\ldots,y_{Nt})^\prime$ stacks the observations at time $t$ over a collection of $N$ spatial units. The contemporaneous spatial dependence between these spatial units is governed by the matrix $\bA = (a_{ij})_{i,j=1}^N$ with $a_{ii} = 0$ for $i=1,\ldots,N$. The matrix $\bB = (b_{ij})_{i,j=1}^N$ incorporates dependence on past realizations. Finally, we have the innovation vector $\bepsilon_t$. We impose the following assumptions on the DGP in \eqref{eq:ModelSpec}.

\begin{assumption}[Stability]\label{assump:stability}\
\begin{enumerate}[(a)]
    \item $\normoneinf{\bA}=\max\left\{ \norm{\bA}_1,\norm{\bA}_\infty \right\} \leq \delta_A <1$. 
    \item $\normoneinf{\bB}\leq C_B$ and $\frac{C_B}{1-\delta_A}<1$.
\end{enumerate}
\end{assumption}

\begin{remark}
Assumption \ref{assump:stability} is defined in terms of $\normoneinf{\cdot}$. Since $\norm{\bA}_1 = \norm{\bA\tran}_\infty \leq \normoneinf{\bA}$ for any matrix $\bA$, the norm $\normoneinf{\cdot}$ is convenient when bounding products of matrices containing transposes.
\label{remark:oneinfnorm}
\end{remark}

\begin{assumption}[Innovations]\label{assump:momentcond}\
\begin{enumerate}[(a)]
    \item The sequence $\{\bepsilon_t\}$ is a covariance stationary, martingale difference process with respect to the filtration $\calF_{t-1}=\sigma\left(\bepsilon_{t-1},\bepsilon_{t-2},\ldots\right)$, and geometrically strong mixing ($\alpha$-mixing). That is, the mixing coefficients $\{\alpha_m\}$ satisfy $\alpha_m\leq c_2 e^{-\gamma_\alpha m }$ for all $m$ and some constants $c_2,\gamma_\alpha>0$. The largest and smallest eigenvalues of $\bSigma_\epsilon= \E(\bepsilon_1 \bepsilon_1^\prime)=(\sigma_{ij})_{i,j=1}^N$ are bounded away from $0$ and $\infty$. 
    \item Either one of the following assumptions holds:
    \begin{enumerate}
     \item[(b1)] For $\psi(x) = x^d$, we require $\sup_{i,t} \norm{\epsilon_{it}}_\psi = \left(\E |\epsilon_{it}|^d \right)^{1/d} \leq \mu_d < \infty$ for $d\geq 4$.
     \item[(b2)] For $\psi(x) = \exp(x)-1$, we have $\sup_{i,t} \norm{\epsilon_{it}}_\psi \leq \mu_\infty < \infty$.
    \end{enumerate}
 \end{enumerate}
\end{assumption}

Assumption \ref{assump:stability} ensures that $\by_t = \bA\by_t + \bB\by_{t-1} + \bepsilon_t$ has a stable reduced form VAR(1) specification. This follows from the following two observations. First, Assumption \ref{assump:momentcond}(a) bounds the maximum row and column sums of $\bA$ and thereby constraints the contemporaneous dependence between the time series. This assumption reminds of the spatial econometrics literature in which the spatial parameter $\lambda$ is bounded from above and the prespecified spatial weight matrix $\bW_N$ is standardized (see, e.g. \cite{lee2004} and \cite{leeyu2010}). Typically, the product $\lambda \bW_N$ -- the natural counterpart of the matrix $\bA$ -- is required to fulfil conditions similar to $\normoneinf{\bA}\leq \delta_A <1$.\footnote{For instance, it is not uncommon to row-normalize $\bW_N$ (each absolute row sum equal to 1) and restrict $\lambda<1$, see pages 1903-1904 of \cite{lee2004}. If $\bW_N$ is symmetric, then also $\normoneinf{\lambda\bW_N}<1$.} Invertibility of $\bI_N-\bA$ is guaranteed because $\norm{\bA}_2\leq \sqrt{\norm{\bA}_1\,\norm{\bA}_\infty}\leq \delta_A\leq 1$ and we have the reduced-form representation $\by_t = \bC \by_{t-1}+ \bD \bepsilon_t$ with $\bC=(\bI_N-\bA)^{-1} \bB$ and $\bD =(\bI_N-\bA)^{-1}$. From $\normoneinf \bD \leq \sum_{j=0}^\infty \normoneinf{\bA}^j=\frac{1}{1-\delta_A}$, we infer that the absolute row and column sum of $\bI_N-\bA$ are bounded. The latter is the logical counterpart of assumption B2 in \cite{DouParrellaYao2016}. Second, Assumption \ref{assump:momentcond}(b) controls serial dependence. Indeed, we conclude from $\norm{\bC}_2\leq \normoneinf{\bC} \leq \frac{C_B}{1-\delta_A}<1$ that both unit root and explosive behaviour of the reduced form specification are ruled out. The resulting stable VAR(1) representation is convenient to study the theoretical properties of our penalized estimator.

The assumptions on the innovation process $\{\bepsilon_t\}$, Assumption \ref{assump:momentcond}, are closely related to those in \cite{Masini2019}. Assumption \ref{assump:momentcond}(a) places restrictions on the time series properties of the error term through martingale difference (m.d.) and mixing assumptions. The m.d. assumption implies that $\E(\bepsilon_t \by_{t-j}\tran)=\bzeros$ while the mixing assumption controls the serial correlation in the data. Polynomial or exponential tail decay of the distribution of the innovations is imposed through either Assumption \ref{assump:momentcond}(b1) or Assumption \ref{assump:momentcond}(b2), respectively. The type of tail decay will directly influence the growth rates we can allow for $N$ and $T$. The discussions in \cite{Masini2019} demonstrate that Assumption \ref{assump:momentcond} allows for a wide range of innovation models.

Any further structure being absent, there are $(2N-1)N$ unknown parameters in $\bA$ and $\bB$ to estimate. Three complications are encountered when estimating these parameters. First, if $\bA\neq \bZeros$, then $\by_t$ occurs on both sides of the equation, and we face an endogeneity problem which renders OLS estimation inconsistent. Second, the number of unknown parameters grows quadratically in the cross-sectional dimension $N$. The model thus quickly becomes too large to estimate accurately without regularization. Finally, the multitude of parameters raises concerns about identifiability. These three complications are addressed by: (1) imposing structure on the matrices $\bA$ and $\bB$, and (2) estimating the unknown coefficients using the Yule-Walker equations \citep[e.g.][p. 420]{Brockwell1991}.

There are several possibilities to introduce structure into $\bA$ and $\bB$. Early spatial econometrics models, e.g. the spatial autoregressive (SAR) model or spatial Durbin model (SDM), incorporate spatial effects through the product $\lambda \bW_N$ (with $\bW_N$ pre-specified). The specification $\bA = \lambda \bW_N$ imposes substantial structure on $\bA$ and leaves only the single parameter $\lambda$ to estimate. \cite{DouParrellaYao2016} consider a more general setting in which each row of $\bW_N$ receives its own spatial autoregressive parameter. Specifically, they set $\bA=\diag(\blambda_0)\bW_N$ and $\bB=\diag(\blambda_1)+\diag(\blambda_2)\bW_N$, and estimate the $3N$ coefficients in $(\blambda_0\tran,\blambda_1\tran,\blambda_2\tran)\tran$. \cite{Gao2019} require $\bA$ and $\bB$ to be banded matrices.\footnote{The matrix $\bA$ has bandwidth $k$ if the total number of nonzero entries in any row or column is at most $k$.} We employ a similar assumption. 

\begin{assumption}[Banded matrices]\
Recall $\bA = (a_{ij})_{i,j=1}^N$, $\bB = (b_{ij})_{i,j=1}^N$, and $\bSigma_\epsilon = (\sigma_{ij})_{i,j=1}^N$. We have: (a) $a_{ij} = b_{ij} = 0$ for all $\abs{i-j} > k_0$ with $k_0 < \lfloor N/4 \rfloor$, and (b) $\sigma_{ij} = 0$ for all $\abs{i-j} > l_0$.
\label{assump:bandedness}
\end{assumption}

Assumption \ref{assump:bandedness} serves two purposes. First, for each spatial unit $i=1,\ldots,N$, the matrices $\bA$ and $\bB$ are banded to have no more than $N$ unknown parameters per equation. With $N$ moment conditions for each $i$, Assumption \ref{assump:bandedness}(a) is key in identifying the parameters. Our discussions in Section \ref{sec:lassoestimation} illustrate that this assumption is realistic when the data is observed on a regular grid. The combination of Assumptions \ref{assump:bandedness}(a)--(b) is exploited in the Yule-Walker estimation approach. This approach requires estimation of the $(N\times N)$ autocovariance matrices $\bSigma_j=\E(\by_t \by_{t-j}\tran)$. Especially in our large $N$ settings, it is crucial to rely on covariance matrix estimators that converge at a fast rate. If $\bA$, $\bB$, and $\bSigma_\epsilon$ are banded, then the following result applies.

\begin{theorem}[Convergence rates for banded sample autocovariance matrices]\label{th:diag_approx}
For any matrix $\bM=(m_{ij})$, its $h$-banded counterpart is defined as $\hbanded{\bM}{}=(m_{ij}\ind{|i-j|\leq h})$.
Define the $(N\times 2N)$ matrix $\hat{\bV}_h=\big[\hbanded{\hat\bSigma_1}{}\tran \; \hbanded{\hat\bSigma_0}{}\big]$ with $\hat \bSigma_1 = \frac{1}{T} \sum_{t=2}^T \by_t \by_{t-1}\tran$ and $\hat \bSigma_0 = \frac{1}{T} \sum_{t=2}^T \by_t \by_t\tran$, and choose
\begin{equation}
\begin{aligned}
 h=h(\epsilon) = \left(\max\Big\{s^*,\frac{\log\left(C_4 /(1-\delta_C)\epsilon \right)}{|\log(\delta_A)|} \Big\}+1\right)&\left( 2 \frac{\log(C_4 /\epsilon)}{|\log(\delta_C)|}+3 \right)(k_0-1) \\
  &+ 2l_0+1,
\end{aligned}
\end{equation}
then $\normoneinf{\widehat{\bV}_h - \bV }\leq 6 \epsilon$ with a probability of at least
\begin{enumerate}[(a)]
    \item $1-2 \mathcal{P}_1(\epsilon,N,T)$ under Assumptions \ref{assump:stability}--\ref{assump:bandedness} using Assumption \ref{assump:momentcond}(b1) (polynomial tails),
    \item $1- 2\mathcal{P}_2(\epsilon,N,T)$ under Assumptions \ref{assump:stability}--\ref{assump:bandedness} using Assumption \ref{assump:momentcond}(b2) (exponential tails),
\end{enumerate}
where
$$
 \mathcal{P}_1(\epsilon,N,T) = N^2\left[ \left(b_1 T^{(1-\delta)/3}+\frac{[2h(\epsilon)+1]b_3}{\epsilon}\right) \exp\left(-\frac{T^{(1-\delta)/3}}{2b_1^2}\right) + \frac{b_2 [2h(\epsilon)+1]^d}{\epsilon^d T^{\frac{\delta}{2}(d-1)}}\right],
$$
for some $0<\delta<1$, and
$$
 \mathcal{P}_2(\epsilon,N,T) = N^2 \left[\frac{\kappa_1[2h(\epsilon)+1]}{\epsilon} + \frac{2}{\kappa_2}\left(\frac{T \epsilon^2}{[2h(\epsilon)+1]^2}\right)^{\frac{1}{7}} \right] \exp\left(- \frac{1}{\kappa_3}\left(\frac{T \epsilon^2}{[2h(\epsilon)+1]^2}\right)^{\frac{1}{7}} \right).
$$
All constants ($C_4$, $\delta_C$, $s^*$, etc.) are positive and independent of $N$ and $T$. The proof (see the Appendix) shows how these constants are related to quantities in Assumptions \ref{assump:stability}--\ref{assump:bandedness}.
\end{theorem}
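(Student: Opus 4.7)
The plan is to decompose the error as a bias (banding approximation) plus a variance (sampling) contribution. Using the triangle inequality,
\begin{equation*}
\normoneinf{\widehat{\bV}_h - \bV}\;\le\; \normoneinf{\hbanded{\hat\bSigma_0}{}-\hbanded{\bSigma_0}{}} + \normoneinf{\hbanded{\hat\bSigma_1}{}-\hbanded{\bSigma_1}{}}
+ \normoneinf{\hbanded{\bSigma_0}{}-\bSigma_0} + \normoneinf{\hbanded{\bSigma_1}{}-\bSigma_1},
\end{equation*}
so it suffices to show that each of the four terms is at most $\tfrac{3}{2}\epsilon$ (the factor $6$ in the statement absorbs the splitting). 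The two right-most (bias) terms are deterministic and controlled by the banded structure of the model; the two left-most (variance) terms are controlled by concentration of the sample autocovariances entry-by-entry within the band of width $h$.

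For the bias, I would use the reduced form $\by_t=\bC\by_{t-1}+\bD\bepsilon_t$ with $\bC=(\bI_N-\bA)^{-1}\bB$ and $\bD=(\bI_N-\bA)^{-1}$ implied by Assumption \ref{assump:stability}, together with
\begin{equation*}
\bSigma_0 \;=\; \sum_{j=0}^{\infty} \bC^{j}\bD\bSigma_\epsilon \bD\tran (\bC\tran)^{j},\qquad \bSigma_1=\bC\,\bSigma_0 .
\end{equation*}
Because $\bA$ is $(2k_0{+}1)$-banded, Neumann's series $\bD=\sum_{s\ge 0}\bA^{s}$ combined with $\normoneinf{\bA}\le\delta_A$ gives a tail $\normoneinf{\bD-\sum_{s\le s_0}\bA^{s}}\le \delta_A^{s_0+1}/(1-\delta_A)$. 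Truncating at $s_0\asymp \log(N/(1{-}\delta_C)\epsilon)/|\log\delta_A|$ and noting that the truncated operator is $(2s_0 k_0{+}1)$-banded approximates $\bD$ in $\normoneinf{\cdot}$ by a banded matrix up to error $\epsilon/N$. An identical argument for $\bC$ — now using $\normoneinf{\bC}\le\delta_C:=C_B/(1-\delta_A)<1$ — truncates the outer summation at $J\asymp \log(N/\epsilon)/|\log\delta_C|$ and at each step produces a matrix of bandwidth growing linearly in the truncation index. Multiplying the three banded factors (and including the bandwidth $l_0$ of $\bSigma_\epsilon$) yields the bandwidth expression for $h(\epsilon,N)$ in the theorem, and the tail of the omitted terms is bounded via $\normoneinf{\bC}^{J+1}$ together with uniformly bounded $\normoneinf{\bD}$ and $\normoneinf{\bSigma_\epsilon}\le \normoneinf{\bD}^{2}\lambda_{\max}(\bSigma_\epsilon)$. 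This gives $\normoneinf{\hbanded{\bSigma_j}{}-\bSigma_j}\le \tfrac{3}{2}\epsilon$ for $j=0,1$ at the chosen $h$. Remark \ref{remark:oneinfnorm} is used throughout to handle the transposes that appear in $\bSigma_0$.

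For the variance term, the $h$-banded structure lets me use the elementary bound $\normoneinf{\bM}\le (2h{+}1)\norm{\bM}_{\max}$ for any $h$-banded $\bM$. It therefore suffices to control $\norm{\hat\bSigma_j-\bSigma_j}_{\max}$ over the roughly $N(2h{+}1)$ pairs $(i,j)$ inside the band, for $j=0,1$. Each entry is an average of products of components of $\by_t,\by_{t-1}$, which, via the reduced form, are absolutely summable mixed moving averages of the $\alpha$-mixing innovations $\{\bepsilon_t\}$. Under Assumption \ref{assump:momentcond}(b1) I would apply a Fuk–Nagaev-type inequality for $\alpha$-mixing sequences with polynomial moments (the Markov-plus-truncation route that produces the two summands in $\mathcal{P}_1$: one from the truncated Bernstein piece with the $T^{(1-\delta)/3}$ tuning, one from the untruncated polynomial piece), followed by a union bound over the $\lesssim N^2$ entries inside the band; under Assumption \ref{assump:momentcond}(b2) I would instead invoke a Bernstein-type inequality for geometrically $\alpha$-mixing sequences with exponential Orlicz moments (giving the $(\cdot)^{1/7}$ exponent typical for products of two sub-exponential series), again combined with a union bound. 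Choosing the deviation level as $\epsilon/(2h{+}1)$ produces the factors $[2h(\epsilon,N){+}1]$ in both $\mathcal{P}_1$ and $\mathcal{P}_2$.

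The main obstacle will be the bias step: I have to track, simultaneously, that $(i)$ powers $\bA^{s}$ and $\bC^{j}$ are banded with bandwidths growing linearly in $s$ and $j$, $(ii)$ their $\normoneinf{\cdot}$ norms decay geometrically, and $(iii)$ the triple product $\bC^{j}\bD\bSigma_\epsilon\bD\tran(\bC\tran)^{j}$ remains a banded matrix whose bandwidth combines all three contributions additively — this combinatorics is what yields the specific multiplicative form of $h(\epsilon,N)$ with the $k_0$ factor appearing twice and the $l_0$ tail. The constant $s^{*}$ is a minimum truncation depth used to guarantee that the banded approximation of $\bD$ is already non-trivial regardless of how large $\delta_A$ is. The concentration step is comparatively standard once one translates everything into moments of linear functionals of $\{\bepsilon_t\}$ via the reduced form, using Assumption \ref{assump:momentcond}(a) to bound $\lambda_{\max}(\bSigma_\epsilon)$ and the mixing coefficients uniformly.
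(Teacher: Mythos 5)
Your proposal follows essentially the same route as the paper: the paper likewise builds banded approximants $\bSigma_0^{r,s}=\sum_{j=0}^{r}\widetilde\bC_s^{\,j}\widetilde\bD_s\bSigma_\epsilon\widetilde\bD_s\tran(\widetilde\bC_s\tran)^{j}$ and $\bSigma_1^{r,s}=\widetilde\bC_s\bSigma_0^{r,s}$ by truncating the Neumann series for $\bD$ at depth $s$ and the MA representation at depth $r$, tracks the additively growing bandwidths to obtain $h(\epsilon,N)$, and controls the stochastic part via $\normoneinf{\hbanded{\hat\bSigma_j-\bSigma_j}{}}\le(2h+1)\norm{\hat\bSigma_j-\bSigma_j}_{\max}$ together with entrywise truncation-based concentration and a union bound over the $N^2$ pairs. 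The only differences are bookkeeping (the paper splits the error as sampling term plus two bias terms per autocovariance matrix and allocates $\epsilon$ slightly differently to reach the factor $6$), so your outline is correct.
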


Theorem \ref{th:diag_approx} shows that banded estimators for $\bSigma_0$ and $\bSigma_1$ provide an accurate approximation to $\bV = \big[ \bSigma_1\tran \; \bSigma_0 \big]\tran$. Each of these banded matrices has at most $2h(\epsilon)+1$ nonzero elements in their columns/rows. In other words, given $\epsilon$, $l_0$ and $k_0$, Assumptions \ref{assump:stability}--\ref{assump:bandedness} guarantee that $\bSigma_0$ and $\bSigma_1$ can be well-approximated by matrices with bandwidths smaller than $N$. This improves the convergence rate of our estimator.

\section{Sparse Estimation} \label{sec:lassoestimation}

\subsection{The SPLASH($\alpha$,$\lambda$) Estimator}\label{subsec:SPLASH}

Even under Assumption \ref{assump:bandedness}, the number of unknown parameters in $\bA$ and $\bB$ continues to grow quadratically in $N$. For large $N$, the accurate estimation of all these parameters becomes infeasible rather quickly. To alleviate this curse of dimensionality, we rely on sparsity. Sparsity naturally occurs when two spatial units do not interact with each other. We demonstrate, however, that a special, and exploitable, sparsity pattern arises whenever the spatial units are ordered in a structured way.

\begin{figure}
\begin{center}
\subfigure[]{
\includegraphics[height=0.23\textheight]{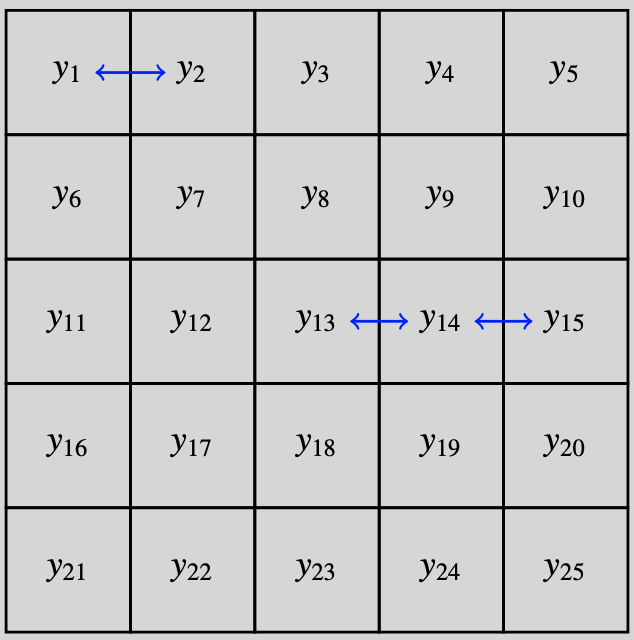}
}
\qquad
\subfigure[]{
\includegraphics[height=0.23\textheight]{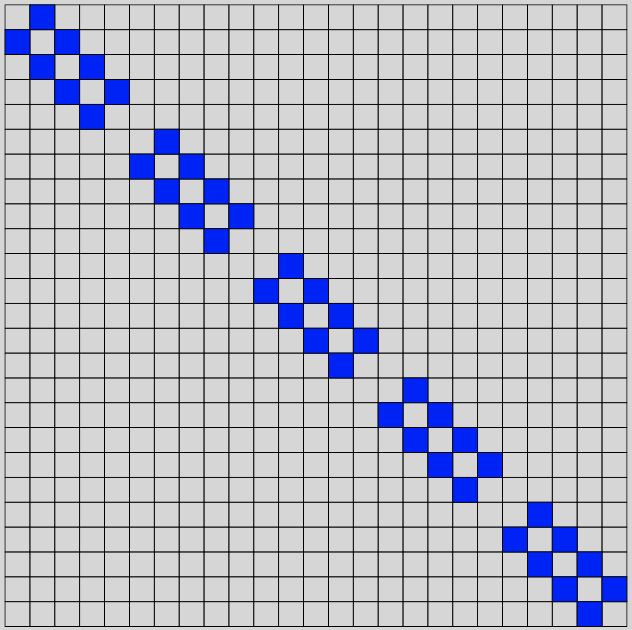}
}\\
\subfigure[]{
\includegraphics[height=0.23\textheight]{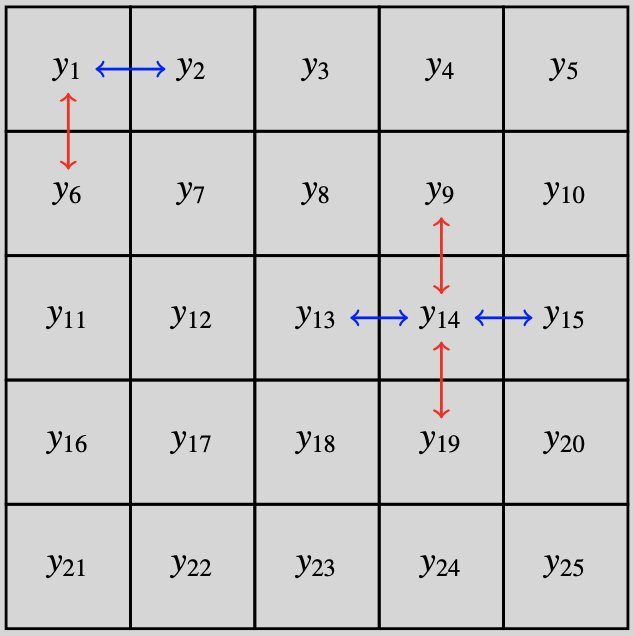}
}
\qquad
\subfigure[]{
\includegraphics[height=0.23\textheight]{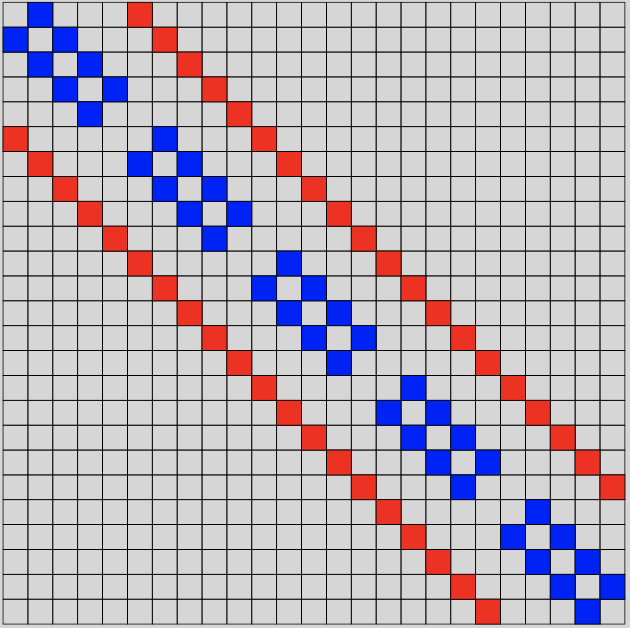}
}\\
\subfigure[]{
\includegraphics[height=0.23\textheight]{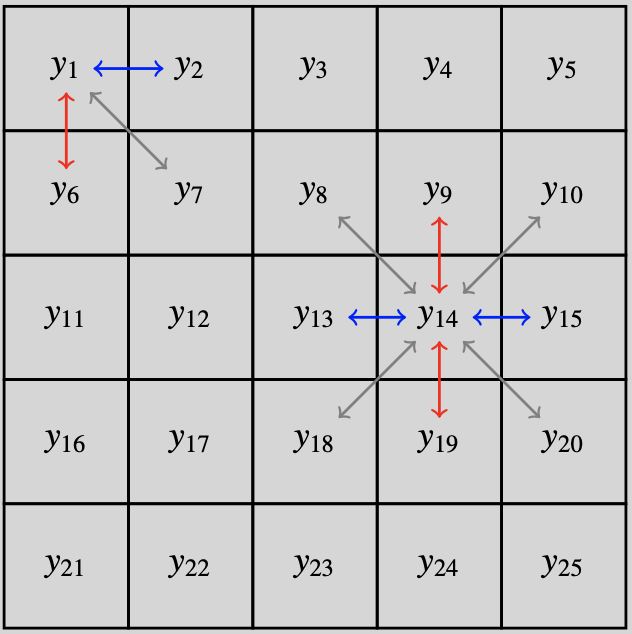}
}
\qquad
\subfigure[]{
\includegraphics[height=0.23\textheight]{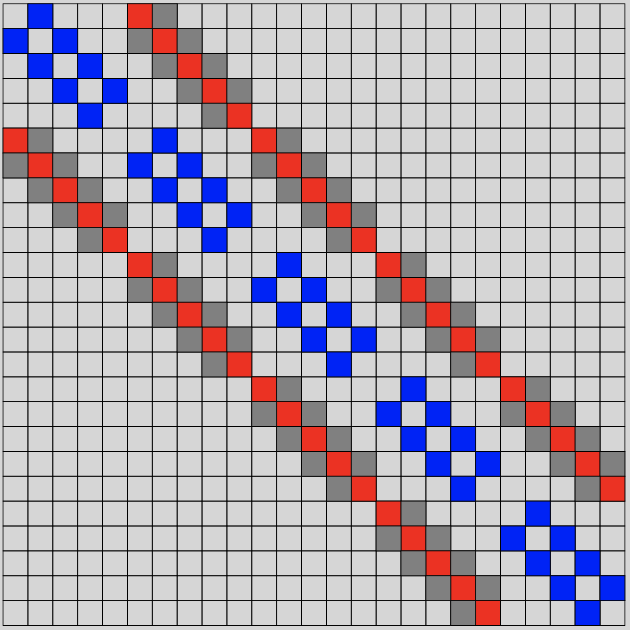}
}
\end{center}
\caption{The left figures show $(5\times 5)$ grids of spatial units with arrows depicting horizontal (blue), vertical (red), and diagonal interactions (grey). The right figures illustrate the sparsity pattern associated with these spatial layouts. For example, spatial unit $y_1$ interacts horizontally with $y_2$, vertically with $y_6$, and diagonally with $y_7$. If these are the only possible interactions for spatial unit $y_1$, then only $a_{12}$, $a_{16}$, and $a_{17}$ in the first row of $\bA$ are possibly nonzero (the interactions for $y_{14}$ are indicated as well). The right figures show the implied corresponding sparsity pattern for all spatial units.}
\label{fig:spatial_overview}
\end{figure}

As an illustrative example, let us consider repeated measurements on the $(5\times5)$ spatial grids shown in the left column of Figure \ref{fig:spatial_overview}. The $N=25$ spatial units are labelled $y_1$ up to $y_{25}$ and enumerated row-wise. This ordering of the spatial entities creates an implicit notion of proximity and we intuitively expect economic/physical interactions to be most pronounced at short length scales. In Figure \ref{fig:spatial_overview}(a) we start from the situation in which the spatial units are restricted to communicate horizontally. Blue arrows indicate explicitly that $y_1$ interacts with $y_2$, and $y_{14}$ interacts with both $y_{13}$ and $y_{15}$. Such interactions occur among all elements in the grid. More importantly, if only these horizontal interaction exist, then the $(25\times 25)$ matrices $\bA$ and $\bB$ feature a sparsity pattern as shown in Figure \ref{fig:spatial_overview}(b). The blue elements are potentially nonzero whereas uncolored elements are zero. The nonzero elements in $\bA$ and $\bB$ are seen to cluster in specific, dense diagonals with the occasional zero when horizontal neighbours are absent (at the boundary of the grid). This diagonal sparsity pattern is not an artifact of allowing horizontal interactions only. Figures \ref{fig:spatial_overview}(c) adds the vertical interactions and the accompanying sparsity pattern again manifests itself along diagonals (Figure \ref{fig:spatial_overview}(d)). Finally, with diagonal nearest neighbours being horizontal neighbors of vertical elements, we observe a ``thickening'' of the diagonals in Figure \ref{fig:spatial_overview}(f). Guided by these considerations we combine generalized Yule-Walker estimation with a sparse group penalty \citep[e.g.][]{Simon2013}. The Yule-Walker estimator will control for endogeneity, while the sparse group penalty will shrink towards diagonal structures by including/omitting complete diagonals and thus selecting the required interactions. Compared to \citet{Gao2019}, we hereby gain the ability to exploit sparsity within banded matrices.

A formal definition of our estimator requires further notation. Part of this notation comes naturally if we briefly review the generalized Yule-Walker estimator. After post-multiplying by $\by_{t-1}\tran$ and taking expectations, we find $\bSigma_1=\bA \bSigma_1 + \bB \bSigma_0$ or, equivalently, 
\begin{equation}
 \bSigma_1\tran =
 \begin{bmatrix}
 \bSigma_1\tran & \bSigma_0
 \end{bmatrix}
 \begin{bmatrix}
  \bA & \bB \end{bmatrix}\tran
 =: \bV \bC\tran.
\label{eq:YWpopulation}
\end{equation}
The $i$\textsuperscript{th} column of $\bC\tran$ contains all coefficients that belong to the $i$\textsuperscript{th} equation in \eqref{eq:ModelSpec}. Assumption \ref{assump:bandedness} requires several of these coefficients to be zero so we exclude these from the outset. We collect all remaining (possibly) nonzero coefficients in the $i$\textsuperscript{th} equation in the vector $\bc_i$, and define $\bV_i$ as the matrix containing the corresponding columns from $\bV$. In the population, we have $\bV_i \bc_i = \bSigma_1\tran \be_i =: \bsigma_i$ for $i=1,\ldots,N$. Sample counterparts of $\bV_i$ and $\bsigma_i$ are readily available from the sample autocovariance matrices. More explicitly, \cite{Gao2019} set $\hat\bsigma_i = \frac{1}{T}\sum_{t=2}^T \by_{t-1}y_{it}$ and construct $\hat{\bV}_i$ from the appropriate columns of $\hat \bV =\big[ \hat\bSigma_1\tran\;\hat\bSigma_0 \big]$. Motivated by $\bsigma_i - \bV_i \bc_i=\bzeros$, they define their estimator $\hat{\bc}^{GMWY}_i$ as the following minimizer:
\begin{equation}
 \hat\bc^{GMWY}_i = \argmin_{\bc} \norm{ \hat\bsigma_i - \hat\bV_i\bc}_2^2.
\label{eq:singleL2contri}
\end{equation}

\begin{figure}
\centering
\begin{tikzpicture}
    [
        box/.style={rectangle,draw=black,thick, minimum size=0.5cm},
    ]
\foreach \x in {1,...,5}{\node[box] at (\xoffset+0.5*\x,3-0.5*\x){};}
\foreach \x in {1,...,4}
{
 \ifthenelse{\x=2}{\node[box,fill=lightgray] at (\xoffset+0.5+0.5*\x,3-0.5*\x){$3$};}{\node[box,fill=lightgray] at (\xoffset+0.5+0.5*\x,3-0.5*\x){};}
}
\foreach \x in {1,...,3}{\node[box] at (\xoffset+1+0.5*\x,3-0.5*\x){};}
\foreach \x in {1,...,2}{\node[box] at (\xoffset+1.5+0.5*\x,3-0.5*\x){};}
\node[box] at (\xoffset+2.5,2.5){};
\foreach \x in {1,...,4}
{
 \ifthenelse{\x=1}{\node[box,fill=lightgray] at (\xoffset+0.5*\x,2.5-0.5*\x){$1$};}{\node[box,fill=lightgray] at (\xoffset+0.5*\x,2.5-0.5*\x){};}
}
\foreach \x in {1,...,3}{\node[box] at (\xoffset+0.5*\x,2-0.5*\x){};}
\foreach \x in {1,...,2}{\node[box] at (\xoffset+0.5*\x,1.5-0.5*\x){};}
\node[box] at (\xoffset+0.5,0.5){};

\draw [decorate,decoration={brace,mirror,amplitude=10pt}]
(0.5,0) -- (3,0)node [black,midway,yshift=-0.62cm] {\footnotesize $\bA$};

\foreach \x in {1,...,5}{
 \ifthenelse{\x=2}{\node[box,fill=lightgray] at (\xoffset+2.75+0.5*\x,3-0.5*\x){7};}{\node[box,fill=lightgray] at (\xoffset+2.75+0.5*\x,3-0.5*\x){};}
}
\foreach \x in {1,...,4}
{
 \ifthenelse{\x=2}{\node[box,fill=lightgray] at (\xoffset+2.75+0.5+0.5*\x,3-0.5*\x){8};}{\node[box,fill=lightgray] at (\xoffset+2.75+0.5+0.5*\x,3-0.5*\x){};}
}
\foreach \x in {1,...,3}{\node[box] at (\xoffset+2.75+1+0.5*\x,3-0.5*\x){};}
\foreach \x in {1,...,2}{\node[box] at (\xoffset+2.75+1.5+0.5*\x,3-0.5*\x){};}
\node[box] at (\xoffset+2.75+2.5,2.5){};
\foreach \x in {1,...,4}
{
 \ifthenelse{\x=1}{\node[box,fill=lightgray] at (\xoffset+2.75+0.5*\x,2.5-0.5*\x){6};}{\node[box,fill=lightgray] at (\xoffset+2.75+0.5*\x,2.5-0.5*\x){};}
}
\foreach \x in {1,...,3}{\node[box] at (\xoffset+2.75+0.5*\x,2-0.5*\x){};}
\foreach \x in {1,...,2}{\node[box] at (\xoffset+2.75+0.5*\x,1.5-0.5*\x){};}
\draw [decorate,decoration={brace,mirror,amplitude=10pt}]
(3.25,0) -- (5.75,0)node [black,midway,yshift=-0.62cm] {\footnotesize $\bB$};
\node[box] at (\xoffset+2.75+0.5,0.5){};

\draw[draw=red,line width=1pt,minimum size = 0.5cm] (\xoffset,1.75cm) rectangle ++(5.75,0.5);
\node at (\xoffset+2.875,-1.25) {(a)};

\foreach \x in {1,...,5}{\node[box,fill=lightgray] at (\xoffset+6+0.5,3-0.5*\x){};}
\foreach \x in {1,...,5}{\node[box] at (\xoffset+6+1,3-0.5*\x){};}
\foreach \x in {1,...,5}{\node[box,fill=lightgray] at (\xoffset+6+1.5,3-0.5*\x){};}
\foreach \x in {1,...,5}{\node[box] at (\xoffset+6+2,3-0.5*\x){};}
\foreach \x in {1,...,5}{\node[box] at (\xoffset+6+2.5,3-0.5*\x){};}
\foreach \x in {1,...,5}{\node[box,fill=lightgray] at (\xoffset+6+3,3-0.5*\x){};}
\foreach \x in {1,...,5}{\node[box,fill=lightgray] at (\xoffset+6+3.5,3-0.5*\x){};}
\foreach \x in {1,...,5}{\node[box,fill=lightgray] at (\xoffset+6+4,3-0.5*\x){};}
\foreach \x in {1,...,5}{\node[box] at (\xoffset+6+4.5,3-0.5*\x){};}
\foreach \x in {1,...,5}{\node[box] at (\xoffset+6+5,3-0.5*\x){};}
\node at (\xoffset+6+0.5,3) {1};
\node at (\xoffset+6+1.5,3) {3};
\node at (\xoffset+6+3.0,3) {6};
\node at (\xoffset+6+3.5,3) {7};
\node at (\xoffset+6+4.0,3) {8};
\draw [decorate,decoration={brace,mirror,amplitude=10pt}]
(6.5,0) -- (11.5,0)node [black,midway,yshift=-0.62cm] {\footnotesize $\hat \bV_h = \left[\hbanded{\hat\bSigma_1}{}\tran \; \hbanded{\hat\bSigma_0}{}\right]$};
\node at (\xoffset+8.75,-1.25) {(b)};

\foreach \x in {1,...,5}{\node[box,fill=lightgray] at (\xoffset+12,3-0.5*\x){};}
\foreach \x in {1,...,5}{\node[box,fill=lightgray] at (\xoffset+12.5,3-0.5*\x){};}
\foreach \x in {1,...,5}{\node[box,fill=lightgray] at (\xoffset+13,3-0.5*\x){};}
\foreach \x in {1,...,5}{\node[box,fill=lightgray] at (\xoffset+13.5,3-0.5*\x){};}
\foreach \x in {1,...,5}{\node[box,fill=lightgray] at (\xoffset+14,3-0.5*\x){};}
\node at (\xoffset+12,3) {1};
\node at (\xoffset+12.5,3) {3};
\node at (\xoffset+13,3) {6};
\node at (\xoffset+13.5,3) {7};
\node at (\xoffset+14,3) {8};
\draw [decorate,decoration={brace,mirror,amplitude=10pt}]
(12,0) -- (14.5,0)node [black,midway,yshift=-0.62cm] {\footnotesize $\hat{\bV}_{2,h}$};
\node at (\xoffset+13,-1.25) {(c)};
\end{tikzpicture}
\caption{A visualization on the construction of $\hat \bV_{2,h}$ for $N=5$. \textbf{(a)} If $h=1$, then grey elements in $\bA$ and $\bB$ are (potentially) nonzero whereas white elements are zero by construction. Enumerating along the second row, the active elements are in the set $\{1,3,6,7,8\}$. \textbf{(b)} We select the columns from $\hat \bV_h$ corresponding to the active set. \textbf{(c)} The matrix $\hat{\bV}_{2,h}$ is the submatrix of $\hat \bV_h$ with only active columns.
}
\label{fig:singleequationoverview}
\end{figure}
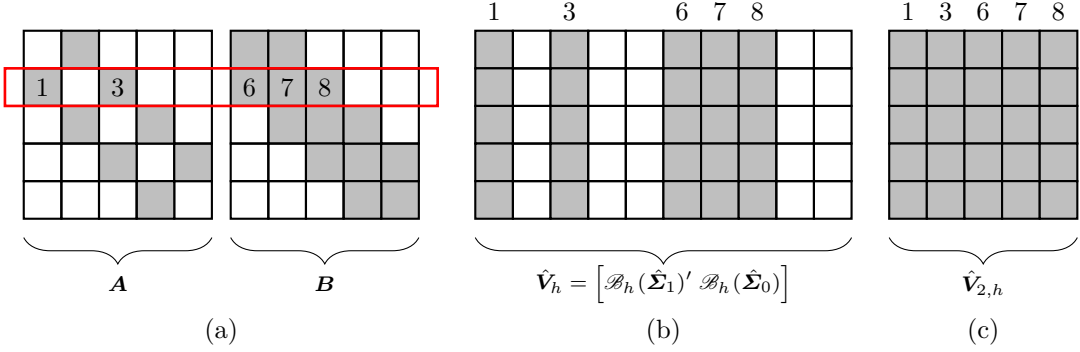

We will adjust this objective function in three ways. First, we define our estimator in terms of banded estimated covariance matrices, which allows us to exploit the results in Theorem \ref{th:diag_approx}. Second, our group penalty penalizes parameters across equations so we can no longer estimate the parameters equation-by-equations. We therefore define $\hat \bsigma_h = \vect\big(\hbanded{\hat\bSigma_1}{}\tran \big)$ and $\hat\bV_h^{(d)}=\diag\big(\hat\bV_{i,h},\ldots,\hat\bV_{N,h} \big)$, with $\hat\bV_{i,h}$ being constructed similarly to $\bV_i$ (see Figure \ref{fig:singleequationoverview} for an illustration).\footnote{In general, all quantities derived from autocovariance matrices come in three versions: (1) the population quantity, (2) the estimated counterpart without banding denoted with an additional ``hat'', and (3) the estimated counterpart with banding featuring both a ``hat'' and the subscript $h$.} In this notation, the expression $\big\|\hat{\bsigma}_h - \hat{\bV}_h^{(d)}\bc\big\|_2^2$ defines the joint objective function that sums the individual contributions in  \eqref{eq:singleL2contri} over all equations. Finally, we construct the penalty function. We define an index set that partitions the vector $\bc$ into sub-vectors, denoted $\lbrace\bc_g\rbrace$, that contain the non-zero diagonals of $\bA$ and $\bB$ that are admissible under Assumption \ref{assump:bandedness} as
\begin{equation}
\begin{split}
    \setG_A &:= \left\lbrace g \subset \mathbb{N} : \bc_g = \mathcal{D}_A(k), k \in \lbrace 1, \ldots, \lfloor N/4 \rfloor \right\rbrace,\\
    \setG_B &:= \left\lbrace g \subset \mathbb{N} : \bc_g = \mathcal{D}_B(k), k \in \lbrace 0, \ldots, \lfloor N/4 \rfloor \right\rbrace,
\end{split}
\end{equation}
respectively, and let $\setG = \setG_A \cup \setG_B$. Based on this notation, we define our objective function as
\begin{equation}
\mathcal{L}_\alpha(\bc;\lambda)
=  \norm{\hat{\bsigma}_h - \hat{\bV}_h^{(d)}\bc}_2^2 + \lambda\Bigg(\underbrace{(1-\alpha)\sum_{g \in \setG}\sqrt{\abs{g}}\norm{\bc_g}_2 + \alpha\norm{\bc}_1}_{=: P_\alpha(\bc)}\Bigg)
= \norm{\hat{\bsigma}_h - \hat{\bV}_h^{(d)}\bc}_2^2 + \lambda P_\alpha(\bc).
\label{eq:SGLobjfunc}
\end{equation}
The spatial lasso-type shrinkage estimator, abbreviated SPLASH($\alpha$,$\lambda$) or SPLASH in short, is defined as the minimizer of \eqref{eq:SGLobjfunc}, i.e. $\hat{\bc}=\argmin_\bc \mathcal{L}_\alpha(\bc;\lambda)$. The importance of the penalty function $P_\alpha(\bc)$ is governed by the penalty parameter $\lambda$ and the second hyperparameter $\alpha$ balances group-structured sparsity versus individual sparsity. At the extremities of  $\alpha\in[0,1]$ we find the group lasso ($\alpha=0$) and the lasso ($\alpha=1$). Intermediate values of $\alpha$ will shrink both groups of diagonal coefficients in $\bA$ and $\bB$ and individual parameters. The SPLASH solution promotes completely sparse diagonals and sparse elements within nonzero diagonals, and thus shrinks towards sparsity patterns of the type displayed in Figure \ref{fig:spatial_overview}(b). As the structure of our estimator is similar to that of the Sparse Group Lasso (SGL), efficient algorithms are available to compute its solution \citep[see, e.g.][]{Simon2013}. An R/C++ implementation of the SPLASH estimator based on this algorithm is available on one of the author's website.\footnote{\url{https://sites.google.com/view/etiennewijler/code}}.

\subsection{Theoretical Properties of the SPLASH($\alpha$,$\lambda$) Estimator}

In this section we derive the theoretical properties of the SPLASH estimator. First, however, we require an additional assumption on the DGP in order to ensure that $\bA$ and $\bB$ in \eqref{eq:ModelSpec} are uniquely identified. To this end, we leverage the bandedness assumption in Assumption \ref{assump:bandedness}, which enables unique identification of $\bA$ and $\bB$ via a straightforward full-rank condition on sub-matrices of the autocovariance matrices that appear in the generalized Yule-Walker equations.

\begin{assumption}[Restricted minimum eigenvalue]\label{assump:min_eigen_bound}
Assume that
$$
\phi_\min(\bx) := \underset{\bx \in \mathbb{R}^{2N}: \mathcal{M}(\bx) \leq N}{\text{min}} \frac{\norm{\bV\bx}_2}{\norm{\bx}_2} \geq \phi_0.
$$
\end{assumption}

Assumption \ref{assump:min_eigen_bound} states that every sub-matrix containing $N$ columns from $\bV$ has full column-rank and a minimum singular value bounded away from zero. Related assumptions appear in \citet[][Section 4]{Bickel2009}, who refer to $\phi_\min(\bx)$ as a \textit{restricted eigenvalue} and use this quantity to construct sufficient conditions for their restricted eigenvalue assumptions. Assumption \ref{assump:min_eigen_bound} fits our framework particularly well, as the assumed maximum bandwidth of the matrices $\bA$ and $\bB$ in Assumption \ref{assump:bandedness} imply that the diagonal blocks of the matrix $\bV^{(d)}$ never contain more than $N$ unique columns of $\bV$. Using this property, we show in Lemma \ref{Lemma:min_vs_restricted_eigen} of Appendix \ref{app:selected_lemmas} that a Sparse Group Lasso compatibility condition is implied by Assumption \ref{assump:min_eigen_bound}. 

Equipped with Assumption \ref{assump:min_eigen_bound}, we find the following finite-sample performance bounds on the prediction and estimation error of SPLASH.
\begin{theorem}\label{Thm:sgl}
Define 
\begin{equation*}
    \bar\omega_\alpha = \max \Big\{(1-\alpha)\sum_{g \in \mathcal{G}_S}\sqrt{\abs{g}},\alpha\sqrt{\abs{S}}\Big\},
\end{equation*}
where $\mathcal{G}_s = \left\lbrace g \in \mathcal{G} : \bc_g \neq \bm{0}\right\rbrace$ and $S = \left\lbrace j : c_j \neq 0\right\rbrace$. Under Assumptions \ref{assump:stability}--\ref{assump:min_eigen_bound} and $\normoneinf{\bV}\leq C_V$, it holds that
\begin{equation}
    \norm{\hat{\bV}_h^{(d)}(\hat{\bc}-\bc)}_2^2 + \lambda\left((1-\alpha)\sum_{g \in \mathcal{G}} \sqrt{\abs{g}}\norm{\hat{\bc}_g - \bc_g}_2 + \alpha\norm{\hat{\bc}-\bc}_1\right) \leq \frac{64\bar{\omega}_\alpha^2\lambda^2}{\phi_0^2}
\label{eq:InequalityTheorem2}
\end{equation}
with a probability of at least
\begin{enumerate}[(a)]
    \item $1- 10\mathcal{P}_1\left(f(\lambda,\phi_0),N,T\right)$ when Assumption \ref{assump:momentcond}(b1) (polynomial tail decay) is valid, or
    \item $1- 10\mathcal{P}_2\left(f(\lambda,\phi_0),N,T\right)$ when Assumption \ref{assump:momentcond}(b2) (exponential tail decay) is valid,
\end{enumerate}
where $\mathcal{P}_1\left(x,N,T\right)$ and $\mathcal{P}_2\left(x,N,T\right)$ are defined in Theorem \ref{th:diag_approx} and $f(\lambda,\phi_0) = \min\left(\frac{\lambda^{1/2}}{24},\frac{\lambda}{96C_v},\frac{\phi_0}{12}\right)$.
\end{theorem}

Theorem \ref{Thm:sgl} contains a finite-sample performance bound on the prediction and estimation error for the SPLASH($\alpha$,$\lambda$) estimator. It offers some interesting insights. First, we focus on the probability with which inequality \eqref{eq:InequalityTheorem2} holds. For VAR estimation with a penalized least-squares objective function, such probabilities are governed by tail probabilities of the process $\{\frac{1}{T} \sum_{t=1}^T y_{it} \epsilon_{jt}\}$ (see, e.g. lemma 4 in \cite{KockCallot2015}, or lemmas 5--6 in \cite{Medeiros2016}). Because Yule-Walker estimation relies primarily on autocovariance matrix estimation, our probability depends on the tail decay of the distribution of $\{\big\|\widehat{\bV}_h - \bV\big\|_\vdash \}$. Overall, the probability of \eqref{eq:InequalityTheorem2} improves through faster tail decay of the innovation distribution (compare cases \emph{(a)} and \emph{(b)}) and banded autocovariance matrix estimation (Theorem \ref{th:diag_approx}). Second, we look closer at the performance upper bound itself. The right-hand side of \eqref{eq:InequalityTheorem2} demonstrates that the upper bound of the prediction and estimation error is increasing in $\bar{\omega}_\alpha$, which in turn is increasing in the bandwidths $k_0$ and $l_0$, increasing in the group sizes ($\alpha<1$), and increasing in the number of relevant interactions $\abs{S}$ ($\alpha>0$). Furthermore, the prediction and estimation error increases in the degree of penalization. Whereas this seemingly suggests to minimize $\lambda$ as to improve performance bounds, we emphasize that the effect of regularization in Theorem \ref{Thm:sgl} is two-fold: increasing regularization deteriorates the performance bound, but increases the probability of the set on which the performance bound holds. Intuitively, shrinkage induces finite-sample bias which worsens accuracy, but simultaneously reduces sensitivity to noise, thereby enabling performance guarantees at higher degrees of certainty.

The aforementioned effects can also be demonstrated by means of an asymptotic analysis. Based on Theorem \ref{Thm:sgl}, we derive the conditions for convergence of the prediction and estimation errors in the following corollary. The exact convergence rates are also provided.

\begin{corollary}\label{cor:splash_rates}
Let $\lambda \in O\left(T^{-q_\lambda}\right)$, $N \in O\left(T^{q_N}\right)$, $\abs{\mathcal{G}_S} \in O\left(T^{q_g}\right)$, $\abs{S} \in O\left(T^{q_s}\right)$, $k_0, l_0 \in O\left(T^{q_k}\right)$, where $q_\lambda$, $q_N$, $q_s$, and $q_k$ are fixed and positive constants. Maintain Assumptions \ref{assump:stability}-\ref{assump:min_eigen_bound} and assume that either (i) $q_\lambda < -\frac{2q_N}{d} - q_k + \frac{\delta(d-1)}{2d}$ for some $0 < \delta < 1$ and Assumption \ref{assump:momentcond}(b1) holds, or (ii) $q_\lambda < \frac{1}{2} - q_k$ and Assumption \ref{assump:momentcond}(b2) holds. Then,
\begin{enumerate}[(a)]
    \item $\norm{\hat{\bV}_h^{(d)}\left(\hat{\bc}-\bc\right)}_2^2 = O_p\left((1-\alpha)T^{2q_g +q_N - 2q_\lambda} + \alpha T^{q_s - 2q_\lambda}\right)$,
    \item $(1-\alpha)\sum_{g \in \mathcal{G}}\sqrt{\abs{g}}\norm{\hat{\bc}_g - \bc_g}_2 + \alpha\norm{\hat{\bc} - \bc}_1 =  O_p\left((1-\alpha)T^{2q_g +q_N - q_\lambda} + \alpha T^{q_s - q_\lambda}\right)$.
\end{enumerate}
\end{corollary}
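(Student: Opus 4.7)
The plan is to substitute the prescribed asymptotic growth rates directly into the finite-sample bound of Theorem \ref{Thm:sgl} and then verify that the event on which this bound holds has probability tending to one.

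First, I would unpack the constant $\bar{\omega}_\alpha$ appearing in Theorem \ref{Thm:sgl}. As is typical in sparse group lasso analyses, this quantity arises from a dual-norm control of the noise under the penalty $P_\alpha$, and thus decomposes into a group-sparsity piece scaling like $(1-\alpha)\sum_{g \in \mathcal{G}_S}\sqrt{|g|}$ and an individual-sparsity piece scaling like $\alpha\sqrt{|S|}$. Since $\max_g |g| \le 2N$, this gives
\[
\bar{\omega}_\alpha^2 = O\bigl((1-\alpha)^2 |\mathcal{G}_S|^2 \max_g |g| + \alpha^2 |S|\bigr) = O\bigl((1-\alpha)^2 T^{2q_g + q_N} + \alpha^2 T^{q_s}\bigr).
\]
Multiplying through by $\lambda^2/\phi_0^2$ and using $(1-\alpha)^2 \le 1-\alpha$ and $\alpha^2 \le \alpha$ on $\alpha\in[0,1]$ yields the rate in (a); dividing the finite-sample bound by $\lambda$ instead of $\lambda^2$ yields (b).

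Second, I would show that $\mathcal{P}_j\bigl(f(\lambda,\phi_0),N,T\bigr) \to 0$. Since $\lambda = O(T^{-q_\lambda}) \to 0$, the minimum defining $f$ is eventually attained at the $\lambda/(96 C_v)$ term, so $f \asymp \lambda = T^{-q_\lambda}$. Under polynomial tails (b1), the binding term in $\mathcal{P}_1$ is $N^2 (2h+1)^d/(\epsilon^d T^{\delta(d-1)/2})$; substituting $N = O(T^{q_N})$, $h = O(T^{q_k}\cdot\mathrm{polylog}(T))$, and $\epsilon \asymp T^{-q_\lambda}$ gives order $T^{2q_N + d q_k + d q_\lambda - \delta(d-1)/2}$ (up to logs), which vanishes iff $q_\lambda < \delta(d-1)/(2d) - 2q_N/d - q_k$. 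Under exponential tails (b2), the exponential factor $\exp\bigl(-(T\epsilon^2/(2h+1)^2)^{1/7}/\beta_3\bigr)$ dominates any polynomial prefactor as soon as $T\epsilon^2/h^2 \asymp T^{1-2q_\lambda - 2q_k}$ diverges, i.e.\ $q_\lambda < 1/2 - q_k$. These conditions coincide with those assumed in the corollary, so both $\mathcal{P}_1$ and $\mathcal{P}_2$ tend to zero.

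The only real technical subtlety is bookkeeping for the logarithmic factors hidden inside $h(\epsilon,N)$: the expression for $h$ contains a $\log(N/\epsilon)$ term, so direct substitution does not immediately produce clean polynomial exponents. The remedy is the routine observation that under polynomial growth of $N$ and polynomial decay of $\epsilon$ in $T$, one has $\log(N/\epsilon) = O(\log T)$, and any such polylog factor is absorbed into a strictly positive polynomial exponent. Once the probabilities are shown to vanish under the stated inequalities on $q_\lambda$, the deterministic bound from Theorem \ref{Thm:sgl} delivers the $O_p$ rates in (a) and (b) immediately.
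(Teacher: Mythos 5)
Your proposal is correct and follows essentially the same route as the paper: bound $\bar{\omega}_\alpha$ via $\sum_{g\in\mathcal{G}_S}\sqrt{\abs{g}}\leq \abs{\mathcal{G}_S}\max_g\sqrt{\abs{g}}=O(T^{q_g+q_N/2})$ and $\alpha\sqrt{\abs{S}}=O(T^{q_s/2})$, plug into the finite-sample bound of Theorem \ref{Thm:sgl}, and show $\mathcal{P}_1$ and $\mathcal{P}_2$ vanish under conditions (i) and (ii) by identifying the polynomially-decaying term as binding and absorbing the $\log(N/\epsilon)=O(\log T)$ factors in $h$ into the strictly negative polynomial exponent. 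The only cosmetic difference is that you square the sum explicitly and invoke $(1-\alpha)^2\leq 1-\alpha$, $\alpha^2\leq\alpha$, which the paper leaves implicit.
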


\bigskip
Corollary \ref{cor:splash_rates} provides insights into the determinants of the convergence rate. In particular, the result confirms that the convergence rate decreases in the bandwidths $k_0$ and $l_0$, the number of spatial units $N$, the number of interactions $\abs{S}$ and the degree of penalization $\lambda$.\footnote{Recall that $\lambda \in O\left(T^{-q_\lambda}\right)$, such that a higher $q_\lambda$ implies a faster decay of the penalty term.} To ensure that the set on which the performance bound in Theorem \ref{Thm:sgl} holds occurs with probability converging to one, conditions (i) and (ii) impose that the degree of penalization does not decay too fast. The optimal convergence rate is obtained by choosing $q_\lambda$ as large as possible without violating these conditions. Some concrete examples are provided in Remark \ref{remark:rate_examples}.

\begin{remark}\label{remark:rate_examples}
Insightful special cases can be examined based on Corollary \ref{cor:splash_rates}. For the sake of brevity, we consider two cases while focusing on the estimation error $P_\alpha\left(\hat{\bc}-\bc\right)$ and assuming errors with at least $d$ finite moments (Assumption \ref{assump:momentcond}(b1)). In the absence of within-group shrinkage ($\alpha=0$), Corollary \ref{cor:splash_rates} demonstrates that $P_0\left(\hat{\bc}-\bc\right) = O_p\left(T^{q_N - q_\lambda}\right)$, with $q_\lambda < \frac{1}{2} - \frac{2q_N}{d} - \frac{1}{2d}$ by choosing $\delta$ arbitrarily close to 1. The estimator now converges almost at rate $\frac{T^{1/2 - 1/2d}}{N^{1+2/d}\abs{\mathcal{G}_S}}$. For fixed $N$ and large $d$, this is close to the common $\sqrt{T}$-rate of fixed-dimensional settings without regularization. If shrinkage is imposed at the individual interaction level only ($\alpha=1)$, then $P_1\left(\hat{\bc}-\bc\right) = O_p\left(T^{q_s - q_\lambda}\right)$ and the estimation error converges almost at the rate $\frac{T^{1/2 - 1/2d}}{\abs{S}N^{2/d}}$. Noting that $N^{1+2/d}\abs{\mathcal{G}_S} > \abs{S}N^{2/d}$, we see that SPLASH($1$,$\lambda$) attains a convergence rate at least as fast SPLASH($0$,$\lambda$), and possibly faster when the sparsity is unstructured or the diagonals are highly sparse.
\end{remark}

\subsection{Exogenous variables}
We generalize model specification \eqref{eq:ModelSpec} by accommodating $K$ exogenous variables, i.e.
\begin{equation}
 \by_t = \bA\by_t + \bB\by_{t-1} + \sum_{k=1}^K \diag(\bbeta_k) \bx_{t,k} + \bepsilon_t, \qquad\qquad\qquad\qquad t=1,\ldots,T.
\label{eq:ModelSpecExo}
\end{equation}
Each vector $\bx_{t,k}=(x_{1t,k},\ldots,x_{Nt,k})\tran$ augments the spatio-temporal vector autoregression with an extra regressor. This regressor may vary over time and it is exogenous, i.e. we have $ \E(\bx_{t,k} \bepsilon_t\tran)=\bZeros$ for $k=1,\ldots,K$. For notational brevity, we consider the situation in which the exogenous regressors $x_{it,1}\ldots,x_{it,K}$ can only directly influence spatial unit $i$. This explains the diagonal structure in $\diag(\bbeta_k)$. In Remark \ref{remark:ExoSettingComparison} we argue that this simplification does not greatly hinder generality. In contrast to \citet{MaGuoWang2021}, we allow $\bbeta_k=(\beta_{1k},\ldots,\beta_{Nk})\tran$ to vary with location. We keep $K$ fixed.

To account for the exogenous variables, we modify the generalized Yule-Walker estimator of Section \ref{subsec:SPLASH}. We recall $\bSigma_j = \E(\by_t \by_{t-j}')$, and define the matrices $\bSigma_{j}^{x_k y}=\E(\bx_{t,k} \by_{t-j}')$ and $\bSigma_j^{x_k x_\ell}=\E(\bx_{t,k} \bx_{t-j,\ell}')$. Two sets of Yule-Walker equations, namely
\begin{subequations}
\begin{equation}
 \bSigma_1 = \bA \bSigma_1 + \bB \bSigma_0 + \sum_{k=1}^K \diag(\bbeta_k) \bSigma_1^{x_k y}
 \label{eq:YWwithEXOa}
\end{equation}
and
 \begin{equation}
 (\bSigma_{0}^{x_{j} y})\tran = \bA (\bSigma_{0}^{x_{j} y})\tran + \bB (\bSigma_1^{x_j y})\tran+ \sum_{k=1}^K \diag(\bbeta_k) \bSigma_0^{x_k x_j}, \qquad\text{for $j=1,\ldots, K$}, \label{eq:YWwithEXOb}
\end{equation}
\end{subequations}
are derived by post-multiplying the model by respectively $\by_{t-1}\tran$ and $\bx_{t,k}\tran$, and taking expectations. Compared to \eqref{eq:YWpopulation}, the Yule-Walker equations in \eqref{eq:YWwithEXOa} contain the additional term $\sum_{k=1}^K \diag(\bbeta_k) \bSigma_1^{x_k y}$ to provide information on $\bbeta_1,\ldots,\bbeta_K$. However, if $\bSigma_1^{x_k y}=\bZeros$ (e.g. when $\{\bx_{t,k}\}$ and $\{\by_t\}$ are independent and $\bbeta_k=\bzeros$), then \eqref{eq:YWwithEXOa} alone will not identify $\bbeta_k$. We therefore add the additional Yule-Walker equations in \eqref{eq:YWwithEXOb}. To develop the estimator, we combine \eqref{eq:YWwithEXOa} and \eqref{eq:YWwithEXOb} into
\begin{equation}
 \begin{bmatrix}
  \bSigma_1\tran \\
  \bSigma_0^{x_1 y} \\
  \vdots \\
  \bSigma_0^{x_K y}
 \end{bmatrix}
 =
 \begin{bmatrix}
  \bSigma_1\tran    & \bSigma_0 \\  
  \bSigma_0^{x_1 y} & \bSigma_1^{x_1 y}\\ 
  \vdots            & \vdots    \\
  \bSigma_0^{x_K y} & \bSigma_1^{x_K y}
 \end{bmatrix}
 \begin{bmatrix}
  \bA & \bB 
 \end{bmatrix}\tran
 +
 \sum_{k=1}^K
 \begin{bmatrix}
  (\bSigma_1^{x_k y})\tran \\
  \bSigma_0^{x_1 x_k} \\
  \vdots \\
  \bSigma_0^{x_K x_k}
 \end{bmatrix} \diag(\bbeta_k)
 := \bV^* \bC\tran + \sum_{k=1}^K \bW_k^* \diag(\bbeta_k).
\label{eq:EXOsystem}
\end{equation}
From this point onward, the development of the SPLASHX($\alpha$,$\lambda$) estimator mimics the reasoning of page \pageref{eq:singleL2contri} closely. First, we focus on the $i$\textsuperscript{th} spatial unit and collect all the nonzero coefficients of $\bA$ and $\bB$ (as stipulated by Assumption \ref{assump:bandedness}) in $\bc_i$. Letting $\bV_i^*$ denote the columns in $\bV^*$ related to $\bc_i$ and defining both $\bsigma_i^* = \begin{bmatrix} \bSigma_1 & (\bSigma_0^{x_1 y})\tran & \cdots & (\bSigma_0^{x_K y})\tran\end{bmatrix}\tran \be_i$ and $\bw_{ik}^*=\bW_k^* \be_i$, result \eqref{eq:EXOsystem} implies
$$
\bV_i^* \bc_i + \sum_{k=1}^K \bw_{ik}^* \beta_{ik}
= \bsigma_i^*.
$$
Second, we define (a) the sample counterparts of $\bSigma_j$, $\bSigma_j^{x_k y}$ and $\bSigma_j^{x_k x_l}$ as respectively $\hat \bSigma_j = \frac{1}{T} \sum_{t=j+1}^T \by_t \by_{t-j}\tran$, $\hat \bSigma_j^{x_k y }=\frac{1}{T} \sum_{t=j+1}^T \bx_{t,k} \by_{t-j}\tran$ and $\hat\bSigma_j^{x_k x_\ell}= \frac{1}{T} \sum_{t=j+1}^T \bx_{t,k} \bx_{t-j,\ell}\tran$, and (b) define the quantities $\hat\bsigma_i^*$, $\hat\bw_{ik}^*$ and $\hat\bV_i^*$ based on their underlying sample covariance matrix estimators. Finally, set $\hat\bsigma^*=(\hat\bsigma_1^{*\prime},\ldots,\hat\bsigma_N^{*\prime})\tran$, $\hat\bV^{*(d)}=\diag(\hat\bV_1^*,\ldots,\hat\bV_N^*)$, and $\hat\bW_k^{*(d)}=\diag(\hat\bw_{1k}^*,\ldots,\hat\bw_{Nk}^*)$. The SPLASHX($\alpha$,$\lambda$) objective function is
\begin{equation}
\begin{aligned}
\mathcal{L}_\alpha^*(\bbeta_1,\ldots,\bbeta_K,\bc;\lambda)
&= \norm{\hat{\bsigma}^* - \hat{\bV}^{*(d)}\bc - \sum_{k=1}^K \hat\bW_k^{*(d)} \bbeta_k }_2^2  \\
 &\qquad + \lambda\left( P_\alpha(\bc) + \sum_{k=1}^K (1-\alpha)\sqrt{N}\norm{\bbeta_k}_2 + \alpha \norm{\bbeta_k}_1 \right).
\label{eq:SGLobjfuncX}
\end{aligned}
\end{equation}
This objective function allows for the estimation of $\bbeta_1,\ldots,\bbeta_K$, sparse coefficients, completely sparse vectors $\bbeta_k$, and completely sparse diagonals in the coefficient matrices $\bA$ and $\bB$. There is a clear mathematical resemblance between the SPLASH and SPLASHX estimators. Accordingly, under appropriate modifications to Assumptions \ref{assump:stability}--\ref{assump:min_eigen_bound}, a finding similar to Theorem is attainable. We provide this result as Theorem \ref{Thm:Sglwithexo} and refer the reader to Supplement \ref{appendix:exogeneousvariables} for detailed assumptions and proofs.

\begin{theorem}\label{Thm:Sglwithexo}
Define $\bq = (\bc',\bbeta_1',\dots,\beta_K')'$, $S^* = \left\{ j : q_j \neq 0\right\}$ and
\begin{equation*}
    \bar\omega_\alpha^* = \max \left\{(1-\alpha) \Bigg(\sum_{g \in \mathcal{G}_S}\sqrt{\abs{g}} + \sqrt{N}\sum_{k=1}^K \ind{\bbeta_k \neq \bzeros}\Bigg),\alpha\sqrt{\abs{S^*}}\right\}.
\end{equation*}
Under Assumptions \ref{assump:stability_exogeneous}--\ref{assump:min_eigen_bound_exogeneous} and $\normoneinf{\bQ}\leq C_Q$, it holds that
\begin{equation*}
\begin{aligned}
  &\norm{\hat{\bV}^{*(d)}(\hat{\bc}-\bc)+\sum_{k=1}^K \hat\bW_k^{*(d)} (\hat\bbeta_k- \bbeta_k)}_2^2
  + \lambda\Bigg[(1-\alpha) \left(\sum_{g \in \mathcal{G}} \sqrt{\abs{g}}\norm{\hat{\bc}_g - \bc_g}_2 + \sum_{k=1}^K \sqrt{N}\norm{\hat\bbeta_k -\bbeta}_2 \right)\\
  &\qquad + \alpha \left(\norm{\hat{\bc}-\bc}_1 +\sum_{k=1}^K \norm{\hat\bbeta_k -\bbeta_k}_1\right) \Bigg] \leq \frac{64\bar{\omega}_\alpha^{*2}\lambda^2}{\phi_0^{*2}}
\end{aligned}
\end{equation*}
with a probability of at least
\begin{enumerate}[(a)]
    \item $1- 7(K+1)(K+2)\mathcal{P}_1^*\left(f^*(\lambda,\phi_0^*),N,T\right)$ under Assumption \ref{assump:momentcond}(b1) (polynomial tail decay), or
    \item $1- 7(K+1)(K+2)\mathcal{P}_2^*\left(f^*(\lambda,\phi_0^*),N,T\right)$ under Assumption \ref{assump:momentcond}(b2) (exponential tail decay),
\end{enumerate}
where $f^*(\lambda,\phi_0^*) = \min\left\{\tfrac{\lambda^{1/2}}{12\sqrt{6}} , \tfrac{\lambda}{144C_Q}, \tfrac{\lambda^{1/2}}{12\sqrt{6} C_\beta} , \tfrac{\lambda}{144C_Q C_\beta}, \tfrac{\phi_0^*}{12} \right\}$,
$$
 \mathcal{P}_1^*(\epsilon,N,T) = N^2\left[ \left(b_1 T^{(1-\delta)/3}+\frac{(K+2)Nb_3}{\epsilon}\right) \exp\left(-\frac{T^{(1-\delta)/3}}{2b_1^2}\right) + \frac{b_2 (K+2)^d N^d}{\epsilon^d T^{\frac{\delta}{2}(d-1)}}\right]
$$
for some $0<\delta<1$, and
$$
 \mathcal{P}_2^*(\epsilon,N,T) = N^2 \left[\frac{\kappa_1(K+2)N}{\epsilon} + \frac{2}{\kappa_2}\left(\frac{T \epsilon^2}{(K+2)N}\right)^{1/7} \right] \exp\left(- \frac{1}{\kappa_3}\left(\frac{T \epsilon^2}{(K+2)^2 N^2}\right)^{1/7} \right)
$$
All constants ($b_1$, $b_2$, $\kappa_1$, etc.) are positive and independent of $N$ and $T$, see Theorem \ref{th:diag_approx}.
\end{theorem}

\begin{remark}\label{remark:NoBanding}
The inclusion of exogenous variables affects the autocovariance structure of the data. For example, if $\bB=\bZeros$, then $\by_t= (\bI_n-\bA)^{-1}\big[\sum_{k=1}^K \diag(\bbeta_k) \bx_{t,k} + \bepsilon_t\big]$ and
$$
\E(\by_t \by_t\tran)= (\bI_N -\bA)^{-1}\left[ \sum_{k,\kappa=1}^K \diag(\bbeta_k) \E(\bx_{t,k}\bx_{t,\kappa}) \diag(\bbeta_\kappa) + \bSigma_\epsilon \right] (\bI_N - \bA\tran)^{-1}.
$$
Clearly, $\E(\by_t \by_t\tran)$ now also depends on the various second moments of the exogenous covariates. We do not make any a priori assumptions on $\E(\bx_{t,k}\bx_{t,\kappa})$ and thus define the SPLASHX($\alpha$,$\lambda$) in terms of the \emph{unbanded} autocovariance matrix estimators.
\end{remark}

\begin{remark}\label{remark:ExoSettingComparison}
 Defining the coefficient matrix in front of $\bx_{t,k}$ as diagonal is not restrictive. That is, by letting $\bx_{t,k+1}$ be a reordered version of $\bx_{t,k}$, the former's addition to the model can accommodate for the situation in which the dependent variable is influenced by the exogenous variable $\bx_{t,k}$ from multiple locations.
\end{remark}

\section{Simulations}\label{sec:MC}
\subsection{Simulation setting}\label{sec:simulation_setting}
In this section, we explore the finite sample performance of our estimator by Monte Carlo simulation. The data generating process underlying the simulations is the spatio-temporal VAR in \eqref{eq:ModelSpec}. We study $T\in\{500,1000,2000\}$ and draw all errors $\epsilon_{it}$ independently and $N(0,1)$ distributed. The matrices $\bA$ and $\bB$ and the cross-sectional dimension $N$ are specified in the two designs below. All simulation results are based on $N_{sim}=500$ Monte Carlo replications.

\bigskip
\noindent
\textbf{Design A (Banded specification)}: We revisit simulation Case 1 in \cite{Gao2019}. The matrices $\bA$ and $\bB$ are banded with a bandwidth of $k_0=3$. Specifically, the elements in the matrices $(\bA)_{i,j=1}^N$ and $(\bB)_{i,j}^N$ are generated according to the following two steps:
\begin{enumerate}[\itshape Step 1:]
    \item If $|i-j|=k_0$, then $a_{ij}$ and $b_{ij}$ are drawn independently from a uniform distribution on the two points $\{-2,2\}$. All remaining elements within the bandwidth are drawn from the mixture distribution $\omega I_{\{0\}}+(1-\omega) N(0,1)$ with $\Prob(\omega=1)=0.4$ and $\Prob(\omega=0)=0.6$.
    \item Rescale the matrices $\bA$ and $\bB$ from Step 1 to $\eta_1\times \bA / \norm{\bA}_2$ and $\eta_2\times \bB/\norm{\bB}_2$, where $\eta_1$ and $\eta_2$ are drawn independently from $U[0.4,0.8]$.\footnote{This rescaling does not necessarily imply $\norm{(\bI_N-\bA)^{-1} \bB}_2<1$ (stability). During the simulations we redraw the matrices $\bA$ and $\bB$ whenever $\norm{(\bI_N-\bA)^{-1} \bB}_2>0.95$.}
\end{enumerate}
We vary the cross-sectional dimension over $N\in\{25,50,100\}$.

\bigskip
\noindent
\textbf{Design B (Spatial grid with neighbor interactions)}: As in Figure \ref{fig:spatial_overview}, we consider an $(m\times m)$ grid of spatial units. For $m=5$ ($m=10$), this results in a cross-sectional dimension of $N=25$ ($N=100$). The matrix $\bA$ contains interactions between first horizontal and first vertical neighbours while all other coefficients are zero. The magnitude of these nonzero interactions are 0.2. For $m=5$ ($m=10$), the temporal matrix $\bB$ is a diagonal matrix with elements 0.25 (0.21) on the diagonal. The reduced form VAR matrix $\bC = (\bI_N - \bA)^{-1}\bB$ has a maximum eigenvalue of 0.814 (0.904).

\bigskip
For each design, we report simulation results for three sets of estimators. The first set includes the estimators developed in this paper: (1) the SPLASH(0,$\lambda$) estimator promotes non-sparse groups only, (2) SPLASH($0.5$,$\lambda$) provides equal weight to sparsity at the group and individual level, and (3) SPLASH(1,$\lambda$) encourages unstructured sparsity only.\footnote{The choice for $\alpha=0.5$ is solely made to illustrate the effect of combining both group and individual penalties. For different designs, this choice may or may not be optimal.} In congruence with Theorems \ref{th:diag_approx} and \ref{Thm:sgl}, we rely on banded autocovariance matrices $\hbanded{\hat\bSigma_0}{}$ and $\hbanded{\hat\bSigma_1}{}$. The bandwidth choice is determined by the bootstrap procedure described in \citet[][p. 7]{guowangyao2016}. Second, we include two unpenalized estimators in the spirit of \cite{Gao2019}: GMWY and GMWY($k_0$). The GMWY estimator implements generalized Yule-Walker estimation for banded $\bA$ and $\bB$ with the bandwidth being chosen by the selection rule proposed by \citet[][eq. 2.17]{Gao2019}, whereas GMWY$(k_0)$ is based on the true bandwidth $k_0$. To allow for easy comparison with the simulation results by the aforementioned authors, we implement these GMWY estimators without banding the covariance matrix estimators $\hat \bSigma_0 = \frac{1}{T} \sum_{t=2}^T \by_t \by_t \tran$ and $\hat \bSigma_1 = \frac{1}{T} \sum_{t=2}^T \by_t \by_{t-1}\tran$.\footnote{In unreported simulation results (available upon request), we find that the results are insensitive to this choice.} As GMWY$(k_0)$ is infeasible in practice, it is given a comparative advantage. 
The third set solely contains the $L_1$-penalized reduced form VAR(1) estimator (abbreviated PVAR). In detail, we consider the reduced form VAR($1$) specification $\by_t = \bC \by_{t-1}+\bu_t$ and estimate $\bC$ by minimizing $\mathcal{L}_{pvar}(\bC) = \sum_{t=2}^T \norm{\by_t - \bC\by_{t-1}}_2^2 + \lambda \sum_{i,j=1}^N\abs{c_{ij}}$. This estimator is well-researched in the literature \citep[see, e.g.][]{KockCallot2015,Gelper2016,Masini2019}, albeit in different settings. It will serve as a competitive benchmark for the forecasting performance of our proposed estimation procedure.

The forecasting performance of each estimator will be assessed using the \emph{Relative Mean-Squared Forecast Error (RMSFE)}. Using a superscript $j$ to index a specific Monte Carlo replication, the RMSFE is calculated as
\begin{equation}
    \text{RMSFE} = \frac{\sum_{j=1}^{N_{sim}}\norm{\by_{T+1}^j - \hat{\bC}^j\by_T^j}_2^2}{\sum_{j=1}^{N_{sim}}\norm{\by_{T+1}^j - \bC\by_T^j}_2^2}.
\label{eq:RMSFEdefinition}
\end{equation}

As the SPLASH and GMWY procedures estimate $\bA$ and $\bB$, we can also compare the estimation accuracy. Using the superscript $j$ as before, the \emph{Estimation Error (EE)} of the coefficient matrices are
\begin{equation}\label{eq:MatrixError}
    \text{EE}_{A}= \frac{1}{N_{sim}}\sum_{j=1}^{N_{sim}} \norm{\hat{\bA}^j - \bA}_2\text{ and }\text{EE}_{B}= \frac{1}{N_{sim}}\sum_{j=1}^{N_{sim}} \norm{\hat{\bB}^j - \bB}_2.
\end{equation}

Finally, a word on the selection of the the penalty parameter. For the SPLASH estimator, we calculate the maximum penalty, $\lambda_\max$, as the smallest value producing the zero solution for all values of $\alpha$, i.e.
\begin{equation*}
    \lambda_\max = \max\left(\max_{g \in \mathcal{G}} \frac{T^{-1}\norm{\hat{\bV}_{h,g}^{(d)\prime}\hat{\bsigma}_h}_2}{\sqrt{\abs{g}}}, \max_{1 \leq i \leq N_c}\abs{\hat{\bV}_{h,i}^{(d)\prime}\hat{\bsigma}_h}\right).
\end{equation*}
Given $\lambda_\max$, we define the smallest penalty $\lambda_\min$ as $10^{-4}\lambda_\max$ ($10^{-6}\lambda_\max$) for Design A (B) and construct an ordered grid of 20 equidistant values on a log-scale, say $\lambda_\max = \lambda_1 > \lambda_2 > \ldots > \lambda_{20} = \lambda_\min$. Estimating SPLASH solutions for each $\lambda_i$, a grid of $\alpha$-values, and each individual simulation trial is computationally expensive (especially for large $N$). We instead perform a small-scale preliminary analysis in which we draw a small set of simulations from Designs A and B on which we estimate all solutions for a given value of $T$. Then, we choose the order $i_T \in \lbrace 1,\ldots,20\rbrace$ that minimizes the RMSFE in this preliminary set of simulations. This process of choosing the order $i_T$ on a log-equidistant grid for each value of $T$, is equivalent to setting $\lambda = m_T\lambda_\max$ with $m_T =  10^{-4(i_T-1)/20}$ or $m_T =  10^{-6(i_T-1)/20}$ for designs A and B, respectively. For Design A (B), our selected orders for $T=\{500,1000,2000\}$ are $i_T = \{9,10,11\}$ ($i_T =\{10,11,12\}$), corresponding to $m_T \approx 0.025,0.015,0.01$ ($m_T \approx 0.002,0.001,0.0005$), respectively. Having fixed the preferred order or multiplier, it remains to estimate a single solution per $\alpha$-value, thus resulting in substantial reductions in computation time. The penalized VAR is computationally less expensive. Accordingly, we choose its penalty parameter based on a time series cross-validation (TSCV) scheme \citep[e.g.][]{Hyndman2018}. In our implementation of TSCV, the first 80\% of the data is used to fit multiple solutions on, which are then evaluated based on the MSFE obtained on the latter 20\% of the data. The preferred penalty is chosen as the solution that attains the smallest MSFE.\footnote{We also tried to select the penalty for the PVAR as the sparsest solution whose prediction error lies within one standard error of the minimum prediction error. This selection rule, however, did not lead to an improvement in forecast or estimation accuracy.}

\subsection{Simulation results}
\begin{table}[htp]
\centering
\caption{Simulation results for Design A (Banded specification).}
\resizebox{0.75\textwidth}{!}{%
\begin{threeparttable}
\begin{tabular}{ll cccc  ccc cc}
\toprule
$N$ & $T$ & SPLASH($0$,$\lambda$) & SPLASH($0.5$,$\lambda$) & SPLASH($1$,$\lambda$) & GMWY & GMWY($k_{0}$) & PVAR\tabularnewline
\midrule
\multicolumn{8}{l}{Panel 1: Mean-Squared Forecast Error (MSFE)}\tabularnewline
\midrule 
25 & 500 & 1.023 & 1.024 & 1.027 & 5.485 & 1.048 & 1.125\tabularnewline
 & 1,000 & 1.012 & 1.011 & 1.012 & 1.387 & 1.016 & 1.116\tabularnewline
 & 2,000 & 1.007 & 1.007 & 1.008 & 1.008 & 1.008 & 1.109\tabularnewline
50 & 500 & 1.026 & 1.027 & 1.034 & 21.802 & 1.034 & 1.115\tabularnewline
 & 1,000 & 1.013 & 1.013 & 1.015 & 1.043 & 1.012 & 1.113\tabularnewline
 & 2,000 & 1.007 & 1.008 & 1.008 & 1.010 & 1.007 & 1.110\tabularnewline
100 & 500 & 1.038 & 1.043 & 1.058 & 1.055 & 1.036 & 1.113\tabularnewline
 & 1,000 & 1.022 & 1.024 & 1.030 & 1.019 & 1.017 & 1.104\tabularnewline
 & 2,000 & 1.013 & 1.014 & 1.016 & 1.008 & 1.008 & 1.098\tabularnewline
\midrule
\multicolumn{8}{l}{Panel 2: Estimation Error in A (EEA)}\tabularnewline
\midrule 
25 & 500 & 0.616 & 0.641 & 0.756 & 1.174 & 0.707 & \tabularnewline
 & 1,000 & 0.561 & 0.582 & 0.688 & 1.095 & 0.579 & \tabularnewline
 & 2,000 & 0.526 & 0.534 & 0.620 & 1.027 & 0.490 & \tabularnewline
50 & 500 & 0.654 & 0.689 & 0.836 & 0.892 & 0.634 & \tabularnewline
 & 1,000 & 0.631 & 0.663 & 0.795 & 0.781 & 0.529 & \tabularnewline
 & 2,000 & 0.592 & 0.622 & 0.747 & 0.699 & 0.443 & \tabularnewline
100 & 500 & 0.652 & 0.690 & 0.854 & 0.769 & 0.599 & \tabularnewline
 & 1,000 & 0.664 & 0.705 & 0.855 & 0.686 & 0.527 & \tabularnewline
 & 2,000 & 0.628 & 0.666 & 0.798 & 0.596 & 0.433 & \tabularnewline
\midrule 
\multicolumn{8}{l}{Panel 3: Estimation Error in B (EEB)}\tabularnewline
\midrule 
25 & 500 & 0.278 & 0.281 & 0.312 & 0.414 & 0.245 & \tabularnewline
 & 1,000 & 0.233 & 0.232 & 0.256 & 0.339 & 0.183 & \tabularnewline
 & 2,000 & 0.201 & 0.196 & 0.214 & 0.298 & 0.142 & \tabularnewline
50 & 500 & 0.313 & 0.323 & 0.373 & 0.364 & 0.251 & \tabularnewline
 & 1,000 & 0.264 & 0.268 & 0.306 & 0.277 & 0.188 & \tabularnewline
 & 2,000 & 0.226 & 0.227 & 0.259 & 0.220 & 0.138 & \tabularnewline
100 & 500 & 0.346 & 0.362 & 0.430 & 0.356 & 0.264 & \tabularnewline
 & 1,000 & 0.291 & 0.300 & 0.347 & 0.268 & 0.197 & \tabularnewline
 & 2,000 & 0.248 & 0.251 & 0.285 & 0.206 & 0.148 & \tabularnewline
\bottomrule
\end{tabular}%
	\begin{tablenotes}
	 \footnotesize
	 \item \textbf{Note}: The relative mean-squared forecast error (RMSFE) and estimation errors ($\text{EE}_A$ and $\text{EE}_B$) are defined in \eqref{eq:RMSFEdefinition} and \eqref{eq:MatrixError}, respectively. In general, lower numbers indicate better performance. As PVAR estimates a reduced form VAR, there are no model errors for $\bA$ and $\bB$ to report for this method.
    \end{tablenotes}
    \end{threeparttable}
   }
    \label{table:SimulationsGMWY}
\end{table}

The results for Design A are reported in Table \ref{table:SimulationsGMWY}. First, we consider the predictive performance in Panel 1. For all methods, we observe a monotonic decrease in RMSFE when $T$ increases. The SPLASH estimators and GMWY($k_0$) exhibit the best overall forecast performance, with SPLASH outperforming for smaller sample sizes ($T=500$). Among the SPLASH estimators, SPLASH($0$,$\lambda$) attains the lowest RMSFE in the majority of specifications but differences are generally marginal. The penalized VAR forecasts are less accurate than the aforementioned methods. An explanation is that sparsity patterns in the reduced form representation are less prevalent and thus more difficult to exploit. Direct estimation of the contemporaneous spatial interactions thus delivers forecast improvements over regularized reduced form estimation. The GMWY estimator is highly competitive when $T=2000$ but performs notably worse for small $N$ and $T$. As GMWY has a tendency to select a too large bandwidth (as in \cite{Gao2019}, table 1), this is probably caused by the estimation of redundant parameters. Given that the majority of sparsity in this design comes from the small bandwidth of $\bA$ and $\bB$, which is fully exploited by the infeasible GMWY($k_0$) estimator, we consider it reassuring that the SPLASH estimators attains comparable, and occasionally better, forecast performance without necessitating an a priori specification of the bandwidth.

Next, we explore the estimation accuracy for $\bA$ and $\bB$ in Panels 2 and 3, respectively. As before, all estimators display an improvement in estimation accuracy when $T$ increases. The SPLASH($0$,$\lambda$) attains a lower estimation error than the SPLASH($0.5$,$\lambda$) estimator, which in turn performs better than the unstructured sparsity variant SPLASH($1$,$\lambda$). The tight bandwidth in this design implies that many diagonals ought to be set to zero, which seems to be best effectuated by means of the group penalty. The GMWY($k_0$) estimator appears to deliver somewhat more accurate estimates than SPLASH for larger values of $T$. This apparently slower convergence of the SPLASH estimator might, at least partly, be considered the price of not knowing the true sparsity pattern, as represented by the term $\bar{\omega}_\alpha$ in Theorem \ref{Thm:sgl}. It is worth mentioning, however, that the choice of penalty parameter is motivated based on the predictive performance, which may not be optimal from the perspective of estimation accuracy. Indeed, in an unreported analysis we find that the penalty that minimizes the estimation error is typically higher and delivers sparser solutions. Regarding the GMWY estimator, we note that the detrimental effect of overestimating the bandwidth in smaller sample sizes is again visible, with the estimation error being substantially larger for the $T=500$ setting.

\begin{table}[t]
\centering
\caption{Simulation results for Design B (Spatial grid with neighbor interactions).}
\resizebox{0.75\textwidth}{!}{%
\begin{threeparttable}
\begin{tabular}{ll cccccc}
\toprule
$N$ & $T$ & SPLASH($0$,$\lambda$) & SPLASH($\alpha$,$\lambda$) & SPLASH($1$,$\lambda$) & GMWY & GMWY($k_{0}$) & PVAR\tabularnewline
\midrule
\multicolumn{8}{l}{Panel 1: Mean-Squared Forecast Error (MSFE)}\tabularnewline
\midrule
25 & 500 & 1.012 & 1.011 & 1.012 & 9.924 & 539.490 & 1.108\tabularnewline
 & 1,000 & 1.004 & 1.004 & 1.005 & 29.538 & 66.157 & 1.067\tabularnewline
 & 2,000 & 1.005 & 1.005 & 1.004 & 894.224 & 527.024 & 1.042\tabularnewline
100 & 500 & 1.012 & 1.012 & 1.019 & 1.148 & 355.561 & 1.170\tabularnewline
 & 1,000 & 1.011 & 1.011 & 1.014 & 1.104 & 245.253 & 1.110\tabularnewline
 & 2,000 & 1.007 & 1.007 & 1.006 & 1.080 & 1.128 & 1.078\tabularnewline
\midrule 
\multicolumn{8}{l}{Panel 2: Estimation Error in A (EEA)}\tabularnewline
\midrule
25 & 500 & 0.329 & 0.335 & 0.467 & 4.277 & 0.433 & \tabularnewline
 & 1,000 & 0.279 & 0.276 & 0.377 & 3.931 & 0.316 & \tabularnewline
 & 2,000 & 0.240 & 0.229 & 0.287 & 3.840 & 0.227 & \tabularnewline
100 & 500 & 0.387 & 0.404 & 0.565 & 2.052 & 0.559 & \tabularnewline
 & 1,000 & 0.362 & 0.374 & 0.531 & 2.019 & 0.471 & \tabularnewline
 & 2,000 & 0.366 & 0.356 & 0.495 & 2.017 & 0.380 & \tabularnewline
\midrule 
\multicolumn{8}{l}{Panel 3: Estimation Error in B}\tabularnewline
\midrule 
25 & 500 & 0.105 & 0.116 & 0.173 & 1.298 & 0.138 & \tabularnewline
 & 1,000 & 0.085 & 0.090 & 0.134 & 1.078 & 0.098 & \tabularnewline
 & 2,000 & 0.068 & 0.070 & 0.102 & 0.987 & 0.069 & \tabularnewline
100 & 500 & 0.140 & 0.150 & 0.207 & 0.898 & 0.217 & \tabularnewline
 & 1,000 & 0.113 & 0.121 & 0.167 & 0.683 & 0.161 & \tabularnewline
 & 2,000 & 0.098 & 0.102 & 0.141 & 0.554 & 0.118 & \tabularnewline
\bottomrule
\end{tabular}%
	\begin{tablenotes}
	 \footnotesize
	 \item \textbf{Note}: The relative mean-squared forecast error (RMSFE) and estimation errors ($\text{EE}_A$ and $\text{EE}_B$) are defined in \eqref{eq:RMSFEdefinition} and \eqref{eq:MatrixError}, respectively. In general, lower numbers indicate better performance. As PVAR estimates a reduced form VAR, there are no model errors for $\bA$ and $\bB$ to report for this method.
    \end{tablenotes}
    \end{threeparttable}
   }
    \label{table:SimulationsSpatialGrid}
\end{table}

Simulation results for Design B are shown in Table \ref{table:SimulationsSpatialGrid}. The high RMSFEs for the GMWY estimators are most striking. In the setting $N=25$ and $T=500$, the GMWY estimator frequently selects a bandwidth equal to 1, translating to inferior performance across all metrics. The GMWY($k_0$) estimator, on the other hand, is based on the correct bandwidth. This method, however, forecasts far worse, while its estimation accuracy instead is competitive to SPLASH. Upon closer inspection, we find that the high RMSFE in this case is driven by a few extreme prediction errors. These prediction outliers in turn correspond to simulation trials in which the smallest absolute eigenvalues of the estimated matrix $\bI - \hat{\bA}$ are close to zero (see Fig \ref{fig:error_vs_eigen} in the Supplementary Appendix). This implies that the GMWY estimator may be prone to stability issues when the bandwidth is large relative to the dimension.\footnote{Recall that converting the spatial representation to the reduced form representation requires inverting $\bI-\hat{\bA}$.} Apparently, owing to the implementation of sparsity, the SPLASH estimator does not suffer from such stability issues. For $N=25$ and $T=2,000$, the bandwidth selection in GMWY improves, while its forecast performance ironically worsens as a result of the increasing stability issues. The remaining results tell the same story as in Design A; SPLASH(0,$\lambda$) and SPLASH(0.5,$\lambda$) are forecasting very close to the optimal forecast, and forecast notably better than the PVAR. While the forecast performance of SPLASH(1,$\lambda$) comes across as equivalent to the SPLASH implementation with group-penalization, the estimation accuracy is superior for the latter. Hence, the group penalty seems especially valuable for the purpose of model interpretation.

A small visual analysis provides further evidence on the favourable estimation accuracy obtained by SPLASH with shrinkage towards diagonally structured sparsity. We visualize the capability of recovering the correct sparsity pattern by displaying the absolute value of the coefficients as averaged across all $N_{sim}$ simulation runs. Figure \ref{Fig:A_plot} illustrates the similarity between the true matrix $\bA$ and the average magnitude of the estimated coefficients.

\begin{figure}[th]
\subfigure[]{
\includegraphics[width=0.45\textwidth]{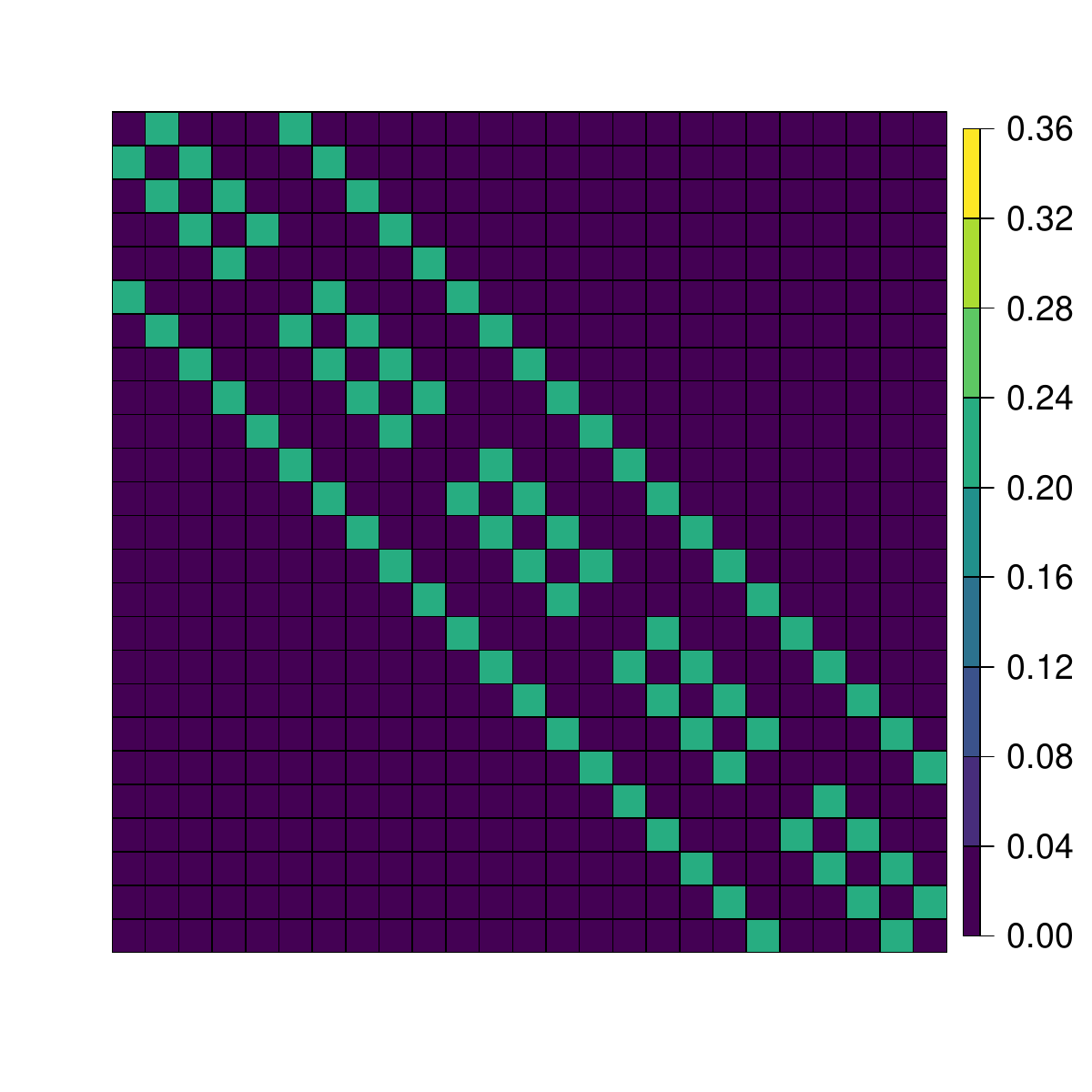}
}
\subfigure[]{
\includegraphics[width=0.45\textwidth]{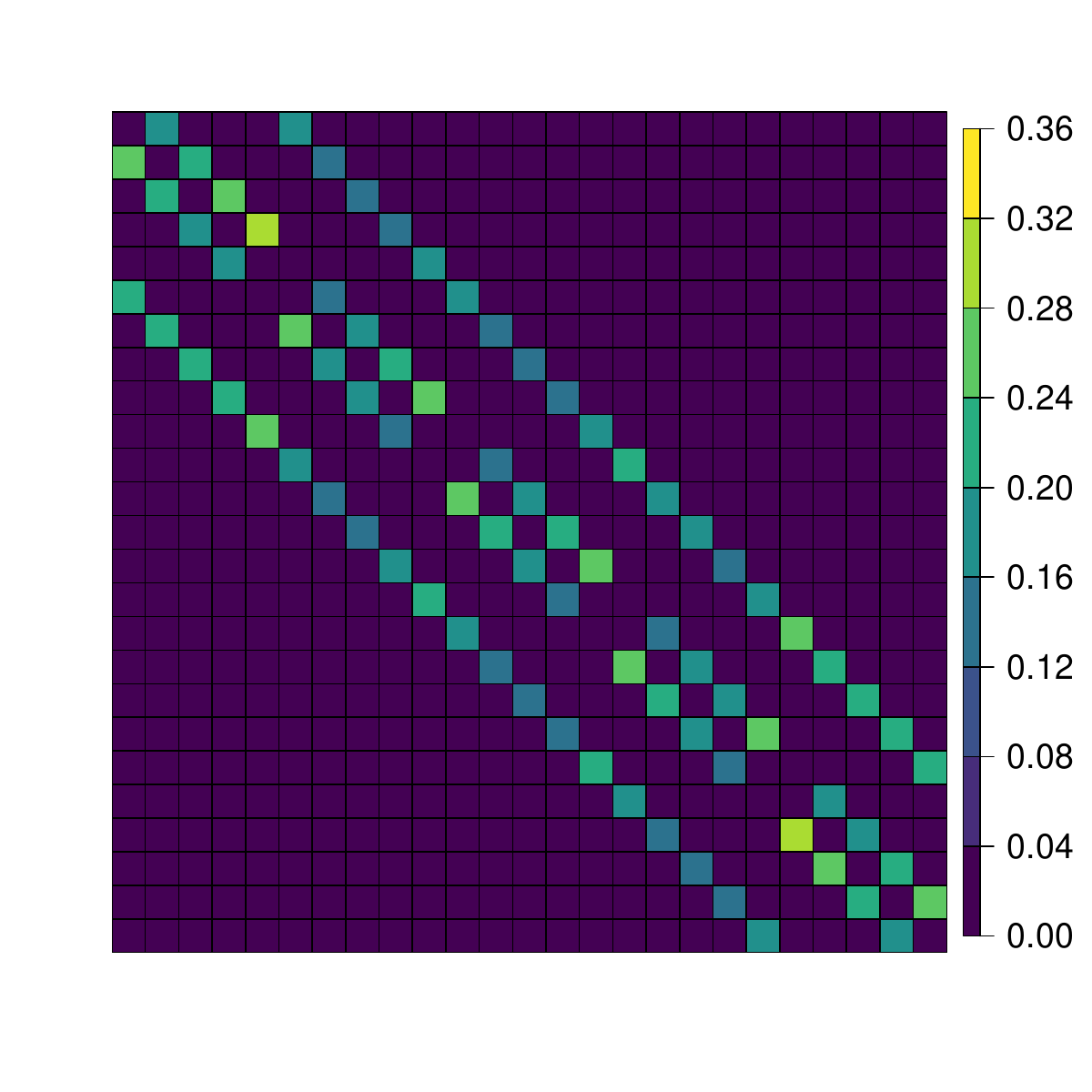}
}
\caption{Visualizations of the true and estimated spatial weight matrix $\bA$ for Design B. (a) The true spatial weight matrix $\bA$ implied by the $(5\times 5)$ spatial grid design (Design B with $m=5$). (b) The average absolute values of the entries in $\hat{\bA}$ as computed by SPLASH($0.5$,$\lambda$) for $N=25$ and $T=1000$. That is, the $(i,j)$\textsuperscript{th} entry in the matrix on the right equals $\frac{1}{N_{sim}} \sum_{k=1}^{N_{sim}} |\hat a_{ij}^k|$ with $\hat a_{ij}^k$ being the estimated $(i,j)$\textsuperscript{th} entry of $\bA$ in the $k$\textsuperscript{th} Monte Carlo replication.}
\label{Fig:A_plot}
\end{figure}

\begin{remark}
In elaborate, though unreported, visual analysis, we find that most zero diagonals are actually not estimated as exactly zero by the (sparse) group lasso. When tuning the penalty parameter by the BIC criterion, in which the number of estimated non-zeros is used as a proxy for the degrees of freedom, the true zero diagonals are typically estimated as exactly zeros. However, the increased amount of regularization that is required to effectuate this is detrimental to the forecast performance.
\end{remark}

\subsection{Simulations with exogenous regressors}

In this section, we examine the estimation performance of our estimator in the presence of exogenous regressors. Simulated data is drawn from
\begin{equation}
    \by_t = \bA\by_t + \bB\by_{t-1} + \diag(\bbeta_1)\bx_{t,1} + \diag(\bbeta_2)\bx_{t,2} + \bepsilon_t,
\label{eq:Results}
\end{equation}
where $\bA$ and $\bB$ are generated analogously to designs A and B in Section \ref{sec:simulation_setting}, and the coefficients of the exogenous regressors are given by $\bbeta_1 = \biota_N$ and $\bbeta_2 = \bm{0}_N$. Hence, only $\bx_{t,1}$ contributes to the variation in $\by_t$. Accordingly, we henceforth refer to $\bx_{t,1}$ and $\bx_{t,2}$ as the relevant and irrelevant exogenous regressor, respectively. All elements of exogenous variables and innovations are drawn i.i.d. from  $N(0,1)$. At each simulation trial, we implement the same estimators as considered in Section \ref{sec:simulation_setting}, with the exception of the penalized VAR which is omitted here. The selection of $\lambda$ for the SPLASH estimator is again done via the construction $\lambda_T = m_T\lambda_\max$, where the sequence of multipliers $m_T$ are the same as those in Section \ref{sec:simulation_setting}. 
Forecasts under \eqref{eq:Results} require predictions of the exogenous variables. This leads to two complications. First, in the reduced-form VAR(1) representation, $\by_t = \bB \by_{t-1} + \big[\bD\diag(\bbeta_1) \big]\bx_{t,1} + \big[\bD\diag(\bbeta_2) \big] \bx_{t,2} + \bD\bepsilon_t$, the coefficient matrices in front of $\bx_{t,1}$ and $\bx_{t,2}$ are no longer diagonal. This would cause an unfair comparison with PVAR so we decided to omit the penalized VAR approach from the comparison. Second, to avoid results that depend on the prediction method employed, we focus solely on the estimation accuracy.

The estimation accuracy for the estimates of $\bA$ and $\bB$ is compared on the basis of the metrics $EE_A$ and $EE_B$, as given in \eqref{eq:MatrixError}. In addition, we also report the average estimation errors of the coefficients for the relevant and irrelevant exogenous regressors, calculated as
\begin{equation}\label{eq:MatrixError_exo}
    \text{EE}_{R}= \frac{1}{N_{sim}}\sum_{j=1}^{N_{sim}} \norm{\hat{\bbeta}_1^j - \bbeta_1}_2,\text{ and }\text{EE}_{IR}= \frac{1}{N_{sim}}\sum_{j=1}^{N_{sim}} \norm{\hat{\bbeta}_2^j}_2.
\end{equation}
The results for Design A and B are reported in Tables \ref{table:SimulationsGMWY_exo} and  \ref{table:SimulationsSpatialGrid_exo}, respectively.

\begin{table}[htp]
\centering
\caption{Simulation results with Exogenous Regressors for Design A (Banded specification).}
\resizebox{0.75\textwidth}{!}{%
\begin{threeparttable}
\begin{tabular}{ll cccc  ccc cc}
\toprule
$N$ & $T$ & SPLASH($0$,$\lambda$) & SPLASH($0.5$,$\lambda$) & SPLASH($1$,$\lambda$) & GMWY & GMWY($k_{0}$)\tabularnewline
\midrule
\multicolumn{7}{l}{Panel 1: Estimation Error in A (EEA)}\tabularnewline
\midrule
25 & 500 & 0.257 & 0.258 & 0.268 & 0.333 & 0.328\tabularnewline
 & 1,000 & 0.200 & 0.193 & 0.191 & 0.224 & 0.223\tabularnewline
 & 2,000 & 0.146 & 0.138 & 0.133 & 0.154 & 0.154\tabularnewline
50 & 500 & 0.311 & 0.323 & 0.356 & 0.449 & 0.403\tabularnewline
 & 1,000 & 0.236 & 0.239 & 0.255 & 0.279 & 0.276\tabularnewline
 & 2,000 & 0.175 & 0.172 & 0.177 & 0.184 & 0.184\tabularnewline
100 & 500 & 0.349 & 0.371 & 0.429 & 0.680 & 0.465\tabularnewline
 & 1,000 & 0.287 & 0.299 & 0.331 & 0.349 & 0.340\tabularnewline
 & 2,000 & 0.217 & 0.221 & 0.238 & 0.232 & 0.232\tabularnewline
\midrule 
\multicolumn{7}{l}{Panel 2: Estimation Error in B (EEB)}\tabularnewline
\midrule
25 & 500 & 0.208 & 0.200 & 0.199 & 0.291 & 0.286\tabularnewline
 & 1,000 & 0.159 & 0.149 & 0.143 & 0.206 & 0.206\tabularnewline
 & 2,000 & 0.120 & 0.108 & 0.101 & 0.145 & 0.145\tabularnewline
50 & 500 & 0.241 & 0.240 & 0.255 & 0.352 & 0.306\tabularnewline
 & 1,000 & 0.186 & 0.179 & 0.182 & 0.220 & 0.216\tabularnewline
 & 2,000 & 0.141 & 0.132 & 0.131 & 0.154 & 0.154\tabularnewline
100 & 500 & 0.276 & 0.283 & 0.317 & 0.513 & 0.318\tabularnewline
 & 1,000 & 0.218 & 0.218 & 0.233 & 0.237 & 0.227\tabularnewline
 & 2,000 & 0.167 & 0.163 & 0.169 & 0.163 & 0.163\tabularnewline
\midrule
\multicolumn{7}{l}{Panel 3: Estimation Error in relevant exogenous regressor (EER)}\tabularnewline
\midrule
25 & 500 & 0.151 & 0.151 & 0.153 & 0.234 & 0.234\tabularnewline
 & 1,000 & 0.102 & 0.102 & 0.103 & 0.142 & 0.142\tabularnewline
 & 2,000 & 0.068 & 0.068 & 0.068 & 0.086 & 0.086\tabularnewline
50 & 500 & 0.181 & 0.183 & 0.187 & 0.364 & 0.366\tabularnewline
 & 1,000 & 0.119 & 0.119 & 0.121 & 0.228 & 0.228\tabularnewline
 & 2,000 & 0.079 & 0.079 & 0.079 & 0.137 & 0.137\tabularnewline
100 & 500 & 0.203 & 0.206 & 0.212 & 0.490 & 0.485\tabularnewline
 & 1,000 & 0.136 & 0.137 & 0.139 & 0.337 & 0.338\tabularnewline
 & 2,000 & 0.092 & 0.092 & 0.093 & 0.215 & 0.215\tabularnewline
\midrule 
\multicolumn{7}{l}{Panel 4: Estimation Error in irrelevant exogenous regressor (EEIR)}\tabularnewline
\midrule
25 & 500 & 0.087 & 0.086 & 0.086 & 0.122 & 0.122\tabularnewline
 & 1,000 & 0.059 & 0.059 & 0.059 & 0.079 & 0.079\tabularnewline
 & 2,000 & 0.043 & 0.042 & 0.042 & 0.054 & 0.054\tabularnewline
50 & 500 & 0.103 & 0.103 & 0.103 & 0.146 & 0.146\tabularnewline
 & 1,000 & 0.072 & 0.072 & 0.072 & 0.097 & 0.097\tabularnewline
 & 2,000 & 0.049 & 0.049 & 0.048 & 0.063 & 0.063\tabularnewline
100 & 500 & 0.117 & 0.116 & 0.116 & 0.157 & 0.157\tabularnewline
 & 1,000 & 0.084 & 0.083 & 0.083 & 0.111 & 0.111\tabularnewline
 & 2,000 & 0.058 & 0.058 & 0.058 & 0.075 & 0.075\tabularnewline
\bottomrule
\end{tabular}%
	\begin{tablenotes}
	 \footnotesize
	 \item \textbf{Note}: This table reports the average estimation errors in $\bA$, $\bB$, $\bbeta_1$ and $\bbeta_2$.
    \end{tablenotes}
    \end{threeparttable}
   }
    \label{table:SimulationsGMWY_exo}
\end{table}

\begin{table}[htp]
\centering
\caption{Simulation results with Exogenous Regressors for Design B (Spatial grid with neighbor interactions).}
\resizebox{0.75\textwidth}{!}{%
\begin{threeparttable}
\begin{tabular}{ll cccc  ccc cc}
\toprule
$N$ & $T$ & SPLASH($0$,$\lambda$) & SPLASH($0.5$,$\lambda$) & SPLASH($1$,$\lambda$) & GMWY & GMWY($k_{0}$)\tabularnewline
\midrule 
\multicolumn{7}{l}{Panel 1: Estimation Error in A (EEA)}\tabularnewline
\midrule
25 & 500 & 0.226 & 0.246 & 0.184 & 0.698 & 0.436\tabularnewline
 & 1,000 & 0.191 & 0.206 & 0.130 & 0.697 & 0.279\tabularnewline
 & 2,000 & 0.161 & 0.173 & 0.098 & 0.700 & 0.177\tabularnewline
100 & 500 & 0.427 & 0.439 & 0.483 & 0.790 & 0.979\tabularnewline
 & 1,000 & 0.332 & 0.339 & 0.383 & 0.778 & 0.735\tabularnewline
 & 2,000 & 0.243 & 0.247 & 0.293 & 0.780 & 0.503\tabularnewline
\midrule
\multicolumn{7}{l}{Panel 2: Estimation Error in B (EEB)}\tabularnewline
\midrule
25 & 500 & 0.161 & 0.162 & 0.114 & 0.554 & 0.504\tabularnewline
 & 1,000 & 0.147 & 0.148 & 0.083 & 0.538 & 0.372\tabularnewline
 & 2,000 & 0.131 & 0.131 & 0.061 & 0.497 & 0.275\tabularnewline
100 & 500 & 0.330 & 0.337 & 0.362 & 0.686 & 0.783\tabularnewline
 & 1,000 & 0.247 & 0.252 & 0.263 & 0.672 & 0.581\tabularnewline
 & 2,000 & 0.214 & 0.227 & 0.183 & 0.651 & 0.442\tabularnewline
\midrule
\multicolumn{7}{l}{Panel 3: Estimation Error in relevant exogenous regressor (EER)}\tabularnewline
\midrule
25 & 500 & 0.149 & 0.148 & 0.137 & 0.416 & 0.340\tabularnewline
 & 1,000 & 0.093 & 0.092 & 0.087 & 0.306 & 0.206\tabularnewline
 & 2,000 & 0.060 & 0.060 & 0.058 & 0.282 & 0.118\tabularnewline
100 & 500 & 0.275 & 0.269 & 0.239 & 0.854 & 0.730\tabularnewline
 & 1,000 & 0.197 & 0.195 & 0.164 & 0.669 & 0.560\tabularnewline
 & 2,000 & 0.150 & 0.146 & 0.109 & 0.446 & 0.391\tabularnewline
\midrule 
\multicolumn{7}{l}{Panel 4: Estimation Error in irrelevant exogenous regressor (EEIR)}\tabularnewline
\midrule 
25 & 500 & 0.107 & 0.105 & 0.097 & 0.361 & 0.137\tabularnewline
 & 1,000 & 0.070 & 0.069 & 0.065 & 0.279 & 0.091\tabularnewline
 & 2,000 & 0.047 & 0.046 & 0.044 & 0.189 & 0.058\tabularnewline
100 & 500 & 0.217 & 0.208 & 0.168 & 0.390 & 0.165\tabularnewline
 & 1,000 & 0.150 & 0.145 & 0.119 & 0.383 & 0.129\tabularnewline
 & 2,000 & 0.098 & 0.095 & 0.079 & 0.325 & 0.092\tabularnewline
\bottomrule
\end{tabular}%
	\begin{tablenotes}
	 \footnotesize
	 \item \textbf{Note}: This table reports the average estimation errors in $\bA$, $\bB$, $\bbeta_1$ and $\bbeta_2$.
    \end{tablenotes}
    \end{threeparttable}
   }
    \label{table:SimulationsSpatialGrid_exo}
\end{table}

First, we consider the results for Design A. The first two panels display the average estimation errors in $\bA$ and $\bB$, respectively. Reassuringly, all estimators display a clear monotonic decrease in estimation accuracy with growing sample size. Comparing the SPLASH estimators among each other, we observe that shrinkage towards group sparsity is most beneficial in high-dimensional settings ($T=500$ or $N=100$). In these instances, the SPLASH($0$,$\lambda$) and SPLASH($0.5$,$\lambda$) estimators obtain the lowest estimation error across all methods. Conversely, when the dimension is small ($N=25$) and sample size is large ($T=2000$), we find little gain in penalizing towards structured sparsity and the SPLASH($1$,$\lambda$) outperforms all other methods. Furthermore, the SPLASH estimators attain a lower estimation error than the GMWY estimators for almost all settings, with the performance gains attained by SPLASH being most pronounced in the case where the sample size is small, i.e. when the exploitation of sparsity matters most. Comparing the GMWY estimators, we find that, in lower-dimensional settings, using a data-driven selection of the bandwidth performs comparable to relying on the true bandwidth. However, when $N=100$ and $T=500$, we find that the bandwidth selection procedure over-estimates the true bandwidth in roughly 40\% of the simulation trials. Accordingly, the GMWY estimator attains inferior estimation accuracy in this particular setting. Regarding the exogenous regressors, we observe a similar monotonic decrease in the estimation error for the coefficients of both the relevant and irrelevant exogenous regressor. The SPLASH estimator outperforms the GMWY estimators across all dimensions and sample sizes, with the performance gain again being most prominent in the higher-dimensional settings. The estimation error obtained by SPLASH for the irrelevant exogenous regressor is remarkably small, further demonstrating the benefits of the incorporated shrinkage.

The results for Design B depict a similar, if not more compelling, story. The SPLASH estimators again outperform across all settings, with the performance differentials between SPLASH and GMWY being more pronounced compared to Design A. Again, we observe that the SPLASH(1,$\lambda$) estimator seems to outperform based on $EE_A$ and $EE_B$ for $N=25$, whereas the SPLASH(0,$\lambda$) and SPLASH($0.5$,$\lambda$) estimators do better when $N=100$. We conjecture that shrinkage towards structured sparsity only becomes beneficial when the group sizes are sizable enough, at which point the accumulation of selection errors by SPLASH(1,$\lambda$) starts to deteriorate the overall estimation accuracy. Contrasting the performance of SPLASH to the GMWY estimators, we observe that the exploitation of sparsity within the bandwidth results in substantial performance gains across all specifications and coefficient matrices. Moreover, the bandwidth selection procedure of the GMWY estimator now frequently selects very small bandwidths. This negatively impacts the estimation accuracy when $N=25$, whilst having a positive impact when $N=100$. In the latter case, the number of parameters to estimate is simply too large without further regularization, such that one might be better off by forcing most diagonals to zero, even if some of those are relevant. Interestingly, the inability to exploit sparsity also affects the estimation accuracy for the relevant exogenous regressors, as the third panel reveals a sizeable difference in the $EE_R$ between the SPLASH and GMWY estimators. Regarding the irrelevant exogenous regressor, we find that the $EE_{IR}$ is substantially larger for GMWY, but comparable across the SPLASH and GMWY($k_0$) estimators. 

Overall, SPLASH unambiguously attains more accurate estimates of all coefficient matrices in the spatial VAR with exogenous regressors. In line with expectations, the performance gain of SPLASH is most notable in high-dimensional designs with substantial degrees of sparsity.  However, even in lower-dimensional designs in which the degree of sparsity is less, SPLASH remains competitive to the GMWY estimators.

\section{Empirical Application} \label{sec:empapplic}

Nitrogen dioxide (NO\textsubscript{2}) is emitted during combustion of fossil fuels (e.g. by motor vehicles) and it has been associated with adverse effects on the respiratory system.\footnote{The direct health effect of nitrogen dioxide is difficult to determine because its emission process is typically accompanied with the emission of other air pollutants (see, e.g. \cite{brunekreefholgate2002}).} The Air Quality Standards Regulations 2010 requires a regular monitoring of NO\textsubscript{2} concentration levels in the UK.\footnote{Source: https://www.legislation.gov.uk/uksi/2010/1001/contents/made.} Using satellite data, we examine the empirical performance of the SPLASH estimator when predicting daily NO\textsubscript{2} concentrations in Greater London. This satellite data is publicly available via the Copernicus Open Access Hub and we consider the time span from 1 August 2018 to 18 October 2020.\footnote{See \url{http://www.tropomi.eu/} for more info on TROPOMI data products and use \url{https://scihub.copernicus.eu/} to access the database.} The original $\text{NO}_2$ concentrations are reported in mol/m$^2$, which we convert to mol/cm$^2$ to avoid numerical instabilities caused by small-scale numbers. The far majority of measurements are captured between 11:00 and 14:00 UTC. The area of interest is divided into a ($5 \times 9$) grid, implying that longitudes and latitudes increment by approximately 0.2 from cell to cell (see Figure \ref{Fig:NO2_coefficient_plot}, part c). All available $\text{NO}_2$ measurements are averaged within each cell and within the same day. The resulting data set contains 0.8\% missing observations, which we impute using the \texttt{Multivariate Time Series Data Imputation (mtsdi)} R package.\footnote{This imputation method is proposed by \cite{JungerDeLeon2015} to impute missing values in time series for air pollutants. The package is written by the same authors and currently maintained by W. L. Junger.}

A rolling-window approach is used to assess the predictive power of the SPLASH estimator. Each window contains 80\% of the data (641 days) allowing 160 one-step ahead forecasts to be made. For each window, we proceed along the following four steps: (i) de-mean the data, (ii) determine the hyperparameters and estimate each model, (iii) produce a forecast for the de-meaned data, and (iv) add the means back to the forecast. In addition to the estimators described in the simulation section (Section \ref{sec:MC}, see page \pageref{sec:MC}), we add another forecast: the window's mean. This new forecast is abbreviated CONST and all other forecasts follow the notational conventions from the simulation section. For SPLASH, we follow the procedure described in Section \ref{sec:simulation_setting} and set $\lambda = 1.8\times 10^{-4}\lambda_\max$. The spatial grid contains $N=5\times 9 = 45$ spatial units, such that the SPLASH($\alpha$,$\lambda$) models contain $2N^2-N = 4,005$ parameters. For the purpose of identifiability, we band the spatial matrix $\bA$ and autoregressive matrix $\bB$ such that $a_{ij} = b_{ij} = 0$ for $\abs{i-j} > \lfloor N/4 \rfloor = 11$. By ordering the spatial units vertically, this banding puts no restrictions on the vertical interactions but allows no more than second-order interaction between horizontal neighbours (see Figure \ref{fig:spatial_grid_London} in the Supplementary Appendix \ref{App:Satellite_Data} for details). 

The forecast performance is measured along three metrics and is always expressed relative to the $L_1$-penalized reduced form VAR($1$) (PVAR) benchmark. That is, we report: (i) the number of spatial units that are predicted more accurately than the PVAR method (\#wins), (ii) the number of spatial units that are predicted \textit{significantly} more accurately based on a Diebold-Mariano test at a 5\% significance level (\#sign. wins), and (iii) the average loss relative to the penalized VAR over all spatial units. These three metrics are calculated based on two loss functions for the forecast errors, namely the mean squared forecast error (MSFE) and the mean absolute forecast error (MAFE). We additionally report the MAFE because the $\text{NO}_2$ column densities display several abrupt spikes which may carry too much weight when relying on a squared loss function. The results are reported in Table \ref{tab:results_satellite}.

\begin{table}[t]
\centering
\caption{Forecast performance of various methods for $\text{NO}_2$ satellite data on a $(5\times 9)$ grid of observations.}
\begin{threeparttable}
\begin{tabular}{l cccc ccc}
\toprule
 & \multicolumn{3}{c}{MSFE} &  & \multicolumn{3}{c}{MAFE}\tabularnewline
 & \#wins & \#sign. wins & RMSFE &  & \#wins & \#sign. wins & RMAFE\tabularnewline
\midrule
CONST & 0 & 0 & 1.187 &  & 0 & 0 & 1.139\tabularnewline
GMWY & 0 & 0 & 153.393 &  & 0 & 0 & 4.855\tabularnewline
SPLASH($0$,$\lambda$) & 45 & 42 & 0.919 &  & 45 & 44 & 0.941\tabularnewline
SPLASH($0.5$,$\lambda$) & 44 & 39 & 0.931 &  & 45 & 42 & 0.949\tabularnewline
SPLASH($1$,$\lambda$) & 44 & 30 & 0.940 &  & 45 & 35 & 0.957\tabularnewline
\bottomrule
\end{tabular}%
	\begin{tablenotes}
	 \footnotesize
	 \item \textbf{Note}: Number of grid points (out of $N=45$) with lower prediction errors (\#wins) and significantly lower prediction errors (\#sign.) compared to the $L_1$-penalized reduced form VAR($1$) estimator (PVAR). Relative MAFE and MSFE are abbreviated by RMAFE and RMSFE, respectively. Values below (above) 1 indicate superior (inferior) performance compared to PVAR.
    \end{tablenotes}
    \end{threeparttable}
\label{tab:results_satellite}
\end{table}

We first look at the mean squared forecast errors (MSFEs). The window-mean forecast (CONST) clearly does not improve the benchmark PVAR forecast for any spatial unit. However, this forecast still attains a RMSFE of 1.185, potentially indicating a low predictability of $\text{NO}_2$ column densities. The GMWY approach obtains the worst forecast performance, possibly because a large bandwidth is needed to allow for second-order horizontal interaction and, consequently, a large number of parameters to estimate. With an RMSFE of 0.919, SPLASH(0,$\lambda$) does manage to improve upon the benchmark. In fact, the MSFEs for all 45 spatial units are smaller than that of the benchmark, 42 of which are found to be significant by a Diebold-Mariano test based on the squared forecast errors. Allowing for sparsity within groups does not seem to deliver additional forecast improvements, as SPLASH(0.5,$\lambda$) attains a slightly worse forecast performance and significantly outperforms the benchmark for only 39 locations. Completely omitting regularization at the group level results in a further deterioration of the forecast performance, with SPLASH(1,$\lambda$) attaining an RMSFE of 0.94 and significantly beating the benchmark for 30 out 45 spatial units. We take this as evidence that the ability to promote diagonally structured sparsity is indeed beneficial in real-life spatial applications, although even estimating the spatial VAR with unstructured sparsity manages to improve upon regularized reduced form VAR estimation.

Next, we focus on the mean absolute forecast error (MAFE). The results are qualitatively similar to those obtained based on the MSFE. In particular, GMWY still has the worst forecast accuracy for GMWY and SPLASH(0,$\lambda$) continues to perform best. The GMWY method, while still standing out, does not score as poorly anymore based on the RMAFE. We conjecture that the absence of regularization may increase sensitivity to noise, thereby resulting in particularly high squared forecast errors at periods of atypical NO$_2$ concentrations.

Finally, we illustrate the second key benefit of SPLASH-type estimators: interpretability. Recall that we convoluted satellite images to a ($5 \times 9$) grid of spatial units. To examine the relevant interactions between these spatial units, we provide several visualizations off the spatial weight matrices estimated by the SPLASH(0.5,$\lambda$) estimator. First, in Figure \ref{Fig:NO2_coefficient_plot}(a), we visualize the absolute magnitude of the spatial interactions. A clear diagonal pattern emerges, with the two diagonals closest to the principal diagonal and the two outer diagonals containing the largest interactions. These four diagonals correspond to first-order vertical and second-order horizontal interactions, respectively. The additional two diagonals, that are sandwiched in between the former, contain the first-order horizontal interactions between spatial units, which surprisingly seem to be smaller in magnitude. In Figure \ref{Fig:NO2_coefficient_plot}(b), each cell indicates the proportion of rolling windows the corresponding spatial interaction is estimated as being non-zero. These proportions are either one (yellow) or zero (purple) indicating very stable selection across samples. It becomes apparent that in addition to the six diagonals that were clear from panel (a), two additional diagonals are always selected, which contain the first order diagonal interactions between spatial units. To facilitate interpretation of this sparsity pattern, we provide a spatial plot of our region of interest with the spatial grid overlaid (Figure \ref{Fig:NO2_coefficient_plot}(c)). We explicitly visualize the interactions implied by \ref{Fig:NO2_coefficient_plot}(b) for two pixels -- pixel 1 (left-top) and pixel 23 (center) -- using arrows whose thickness is determined by the average absolute magnitudes estimated in Figure \ref{Fig:NO2_coefficient_plot}(a). The emerging pattern of spatial interactions shows clearly the interactions between NO$_2$ concentrations of neighbouring districts in London. The wider horizontal interactions, as well as the diagonal interactions, may be explainable by the ``prevailing winds'', which come from the West or South-West and are the most commonly occurring winds in London. Overall, the intuitive sparsity patterns that arise, in combination with the improvement in forecast performance, are encouraging and provide empirical validation for the use of SPLASH on spatial data, especially when the spatial units follow a natural ordering on a spatial grid.

\begin{figure}[htp]
\subfigure[Average absolute magnitudes]{
\includegraphics[width=0.45\textwidth]{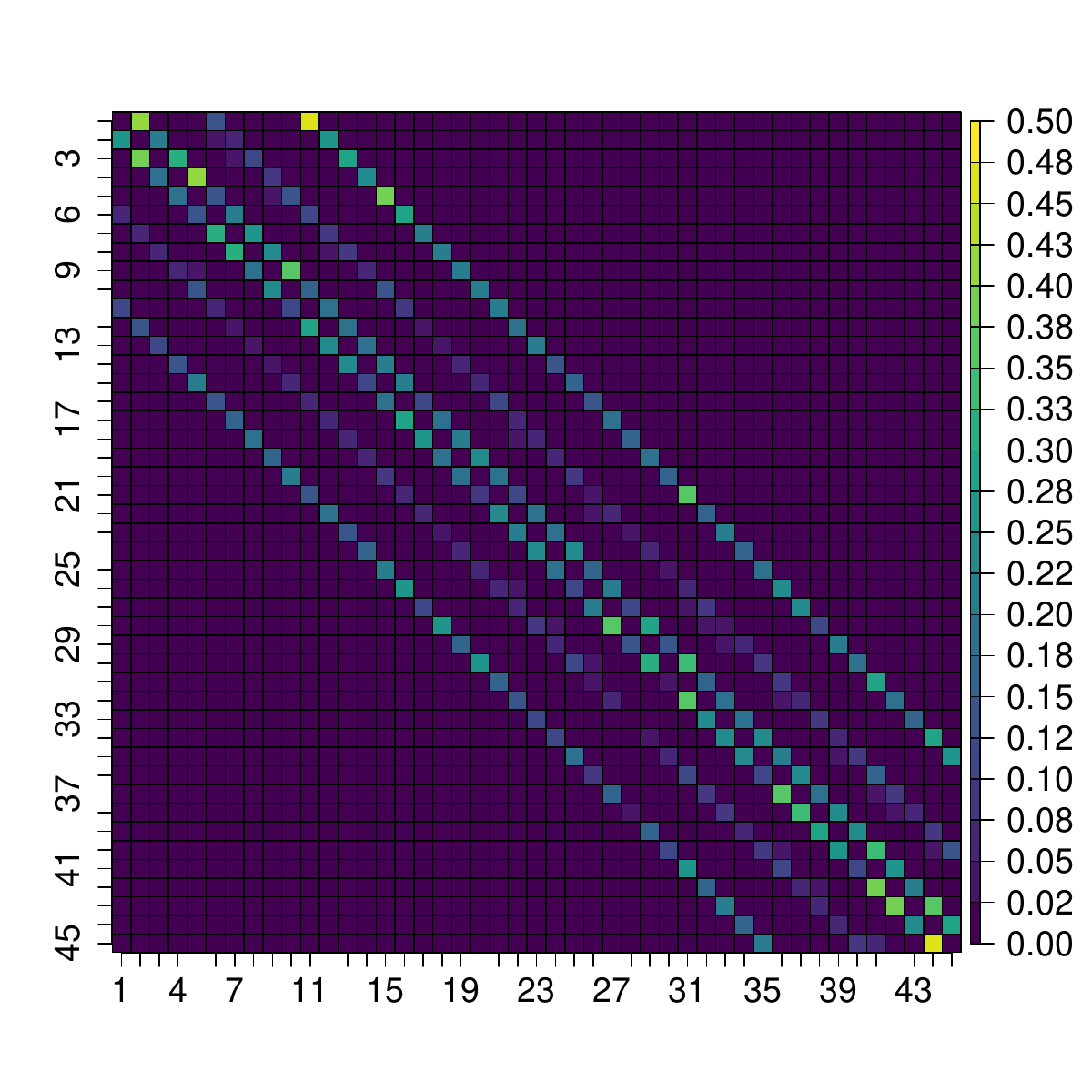}
}
\subfigure[Selection proportions]{
\includegraphics[width=0.45\textwidth]{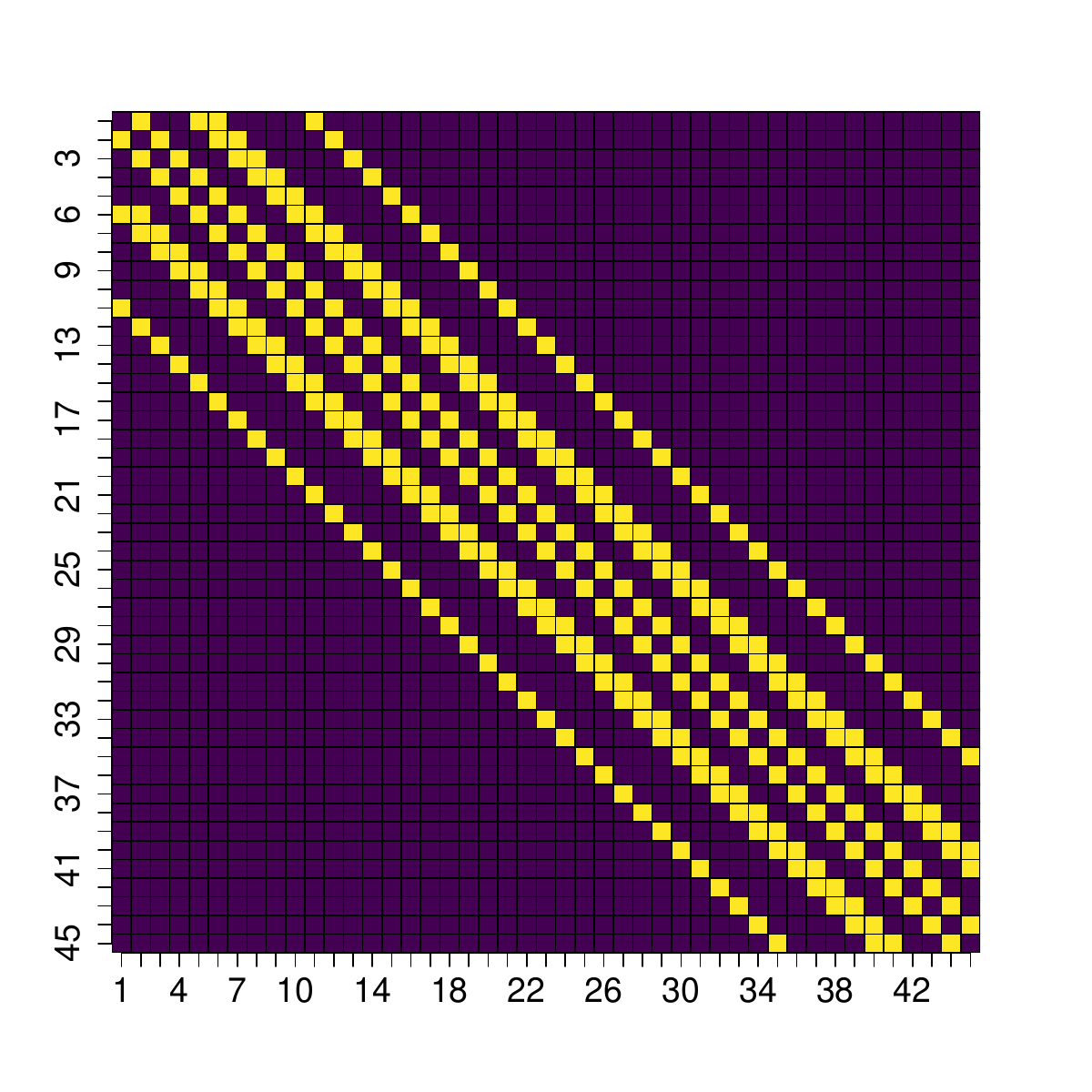}
}

\subfigure[Spatial plot of London]{
\includegraphics[width=0.9\textwidth]{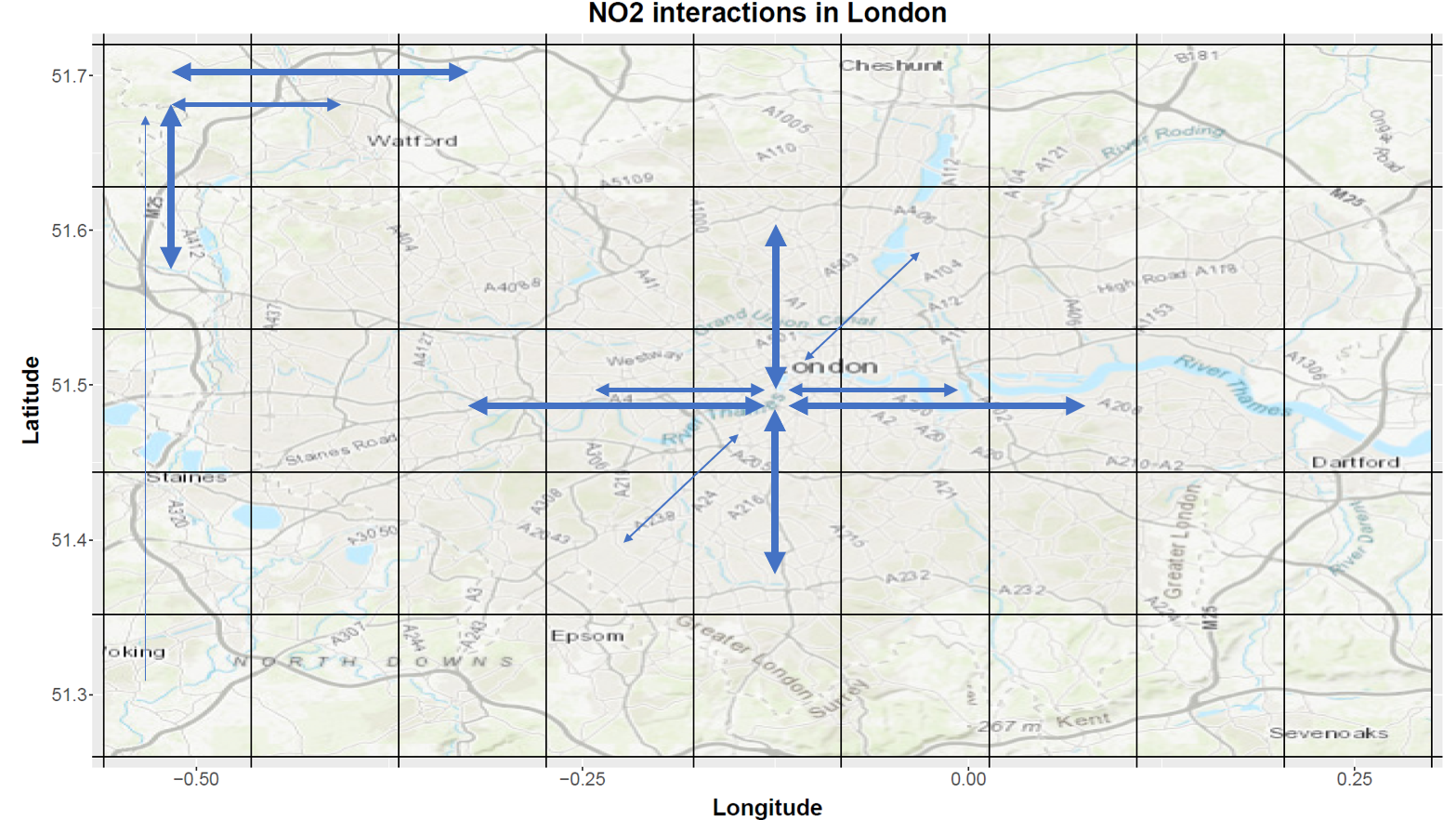}
}
\caption{Sparsity patterns for estimates of $\bA$ based on rolling window samples.}
\label{Fig:NO2_coefficient_plot}
\end{figure}

\section{Conclusion} \label{sec:conclusion}
In this paper, we develop the Spatial Lasso-type Shrinkage (SPLASH) estimator, a novel estimation procedure for high-dimensional spatio-temporal models. The SPLASH estimator is designed to promote the recovery of structured forms of sparsity without imposing such structure a priori. We derive consistency of our estimator in a joint asymptotic framework in which the number of both spatial units and temporal observations diverge. To solve the identifiability issue, we rely on a relatively non-restrictive assumption that the coefficient matrices in the spatio-temporal model are sufficiently banded. Based on this assumption, we consider banded estimation of high-dimensional spatio-temporal autocovariance matrices, for which we derive novel convergence rates that are likely to be of independent interest. The SPLASHX extension explains how to include exogenous variables. As an application, we use SPLASH to predict satellite-measured NO\textsubscript{2} concentrations in London. We find evidence for spatial interactions between neighbouring regions. In addition, our estimator obtains superior forecast accuracy compared to a number of competitive benchmarks, including the recently introduced spatio-temporal estimator by \citet{Gao2019} (the inspiration for the development of SPLASH).

\section{Acknowledgements}
This paper (or earlier versions hereof) has been presented during the internal seminar of the Quantitative Economics department of Maastricht University, the Econometrics Internal Seminar (EIS) at Erasmus University Rotterdam, the Bernoulli-IMS One World Symposium, the workshop on Dimensionality Reduction and Inference in High-dimensional Time Series at Maastricht University, the 2021 Annual Conference of the International Association for Applied Econometrics (IAAE), the 5\textsuperscript{th} Conference on Econometric Models of Climate Change, the internal seminar at Tor Vergata University of Rome, and the 2021 $(\text{EC})^2$ Conference. We gratefully acknowledge comments and feedback from the participants. Suggestions by Stephan Smeekes and Ines Wilms were particularly helpful so we thank them explicitly. All remaining errors are our own.

\label{Bibliography}
\bibliographystyle{apalike}
\bibliography{Bibliography}

\begin{thebibliography}{}

\bibitem[Ahrens and Bhattacharjee, 2015]{Ahrens2015}
Ahrens, A. and Bhattacharjee, A. (2015).
\newblock Two-step lasso estimation of the spatial weights matrix.
\newblock {\em Econometrics}, 3:128--155.

\bibitem[Bickel et~al., 2009]{Bickel2009}
Bickel, P.~J., Ritov, Y., and Tsybakov, A.~B. (2009).
\newblock Simultaneous analysis of lasso and {D}antzig selector.
\newblock {\em The Annals of Statistics}, 37:1705--1732.

\bibitem[Brockwell and Davis, 1991]{Brockwell1991}
Brockwell, P.~J. and Davis, R.~A. (1991).
\newblock {\em Time Series: Theory and Methods}.
\newblock Springer-Verlag, New York, 2nd edition.

\bibitem[Brunekreef and Holgate, 2002]{brunekreefholgate2002}
Brunekreef, B. and Holgate, S.~T. (2002).
\newblock Air pollution and health.
\newblock {\em The Lancet}, 360(9341):1233--1242.

\bibitem[Debarsy and LeSage, 2018]{DebarsyLeSage2018}
Debarsy, N. and LeSage, J. (2018).
\newblock Flexible dependence modeling using convex combinations of different
  types of connectivity structures.
\newblock {\em Regional Science and Urban Economics}, 69:48--68.

\bibitem[Dou et~al., 2016]{DouParrellaYao2016}
Dou, B., Parrell, M.~L., and Yao, Q. (2016).
\newblock Generalized {Y}ule-{W}alker estimation for spatio-temporal models
  with unknown diagonal coefficients.
\newblock {\em Journal of Econometrics}, 194:369--382.

\bibitem[Gao et~al., 2019]{Gao2019}
Gao, Z., Ma, Y., Wang, H., and Yao, Q. (2019).
\newblock Banded spatio-temporal autoregressions.
\newblock {\em Journal of Econometrics}, 208:211--230.

\bibitem[Gelper et~al., 2016]{Gelper2016}
Gelper, S., Wilms, I., and Croux, C. (2016).
\newblock Identifying demand effects in a large network of product categories.
\newblock {\em Journal of Retailing}, 92:25--39.

\bibitem[Guo et~al., 2016]{guowangyao2016}
Guo, S., Wang, Y., and Yao, Q. (2016).
\newblock High-dimensional and banded vector autoregressions.
\newblock {\em Biometrika}, 103:889--903.

\bibitem[Hyndman and Athanasopoulos, 2018]{Hyndman2018}
Hyndman, R.~J. and Athanasopoulos, G. (2018).
\newblock {\em Forecasting: principles and practice}.
\newblock OTexts.

\bibitem[Jing et~al., 2003]{Jing2003}
Jing, B.-Y., Shao, Q.-M., and Wang, Q. (2003).
\newblock Self-normalized {C}ram{\'e}r-type large deviations for independent
  random variables.
\newblock {\em Annals of Probability}, 31:2167--2215.

\bibitem[Junger and Ponce De~Leon, 2015]{JungerDeLeon2015}
Junger, W.~L. and Ponce De~Leon, A. (2015).
\newblock Imputation of missing data in time series for air pollutants.
\newblock {\em Atmospheric Environment}, 102:96--104.

\bibitem[Kock and Callot, 2015]{KockCallot2015}
Kock, A.~B. and Callot, L. (2015).
\newblock Oracle inequalities for high dimensional vector autoregressions.
\newblock {\em Journal of Econometrics}, 186:325--344.

\bibitem[Lam and Souza, 2014]{lamsouza2014}
Lam, C. and Souza, P. C.~L. (2014).
\newblock Regularization for spatial panel time series using the adaptive
  lasso.
\newblock Working paper, London School of Economics and Political Science.

\bibitem[Lam and Souza, 2019]{lamsouza2019}
Lam, C. and Souza, P. C.~L. (2019).
\newblock Estimation and selection of spatial weight matrix in a spatial lag
  model.
\newblock {\em Journal of Business \& Economic Statistics}, 38:1--41.

\bibitem[Lee, 2004]{lee2004}
Lee, L.-F. (2004).
\newblock Asymptotic distributions of quasi-maximum likelihood estimators for
  spatial autoregressive models.
\newblock {\em Econometrica}, 72:1899--1925.

\bibitem[Lee and Yu, 2010]{leeyu2010}
Lee, L.-F. and Yu, J. (2010).
\newblock Estimation of spatial autoregressive panel data models with fixed
  effects.
\newblock {\em Journal of Econometrics}, 154:165--185.

\bibitem[Lee and Yu, 2014]{Lee2014}
Lee, L.-f. and Yu, J. (2014).
\newblock Efficient {GMM} estimation of spatial dynamic panel data models with
  fixed effects.
\newblock {\em Journal of Econometrics}, 180:174--197.

\bibitem[Ma et~al., 2021]{MaGuoWang2021}
Ma, Y., Guo, S., and Wang, H. (2021).
\newblock Sparse spatio-temporal autoregressions by profiling and bagging.
\newblock {\em Journal of Econometrics}, X:XX--XX.

\bibitem[Masini et~al., 2019]{Masini2019}
Masini, R.~P., Medeiros, M.~C., and Mendes, E.~F. (2019).
\newblock Regularized estimation of high-dimensional vector autoregressions
  with weakly dependent innovations.
\newblock {\em Journal of Time Series Analysis}.

\bibitem[Medeiros and Mendes, 2016]{Medeiros2016}
Medeiros, M.~C. and Mendes, E.~F. (2016).
\newblock $\ell_1$-regularization of high-dimensional time series models with
  non-gaussian and heteroskedastic errors.
\newblock {\em Journal of Econometrics}, 191:255--271.

\bibitem[Simon et~al., 2013]{Simon2013}
Simon, N., Friedman, J., Hastie, T., and Tibshirani, R. (2013).
\newblock A sparse-group lasso.
\newblock {\em Journal of Computational and Graphical Statistics}, 22:231--245.

\bibitem[Yu et~al., 2012]{Yu2012}
Yu, J., de~Jong, R., and Lee, L.-f. (2012).
\newblock Estimation for spatial dynamic panel data with fixed effects: The
  case of spatial cointegration.
\newblock {\em Journal of Econometrics}, 167:16--37.

\bibitem[Yu et~al., 2008]{yudejonglee2008}
Yu, J., de~Jong, R.~M., and Lee, L.-F. (2008).
\newblock Quasi-maximum likelihood estimators for spatial dynamic panel data
  with fixed effects when both n and {T} are large.
\newblock {\em Journal of Econometrics}, 146:118--134.

\bibitem[Zhang and Yu, 2018]{zhangyu2018}
Zhang, X. and Yu, J. (2018).
\newblock Spatial weight matrix selection and model averaging for spatial
  autoregressive models.
\newblock {\em Journal of Econometrics}, 203:1--18.

\end{thebibliography}


\begin{thebibliography}{}

\bibitem[Abadir and Magnus, 2005]{Abadirmagnus2005}
Abadir, K.~M. and Magnus, J.~R. (2005).
\newblock {\em Matrix Algebra}.
\newblock Cambridge University Press.

\bibitem[Davidson, 1994]{Davidson1994}
Davidson, J. (1994).
\newblock {\em Stochastic Limit Theory: An Introduction for Econometricians}.
\newblock Oxford University Press.

\bibitem[Ma et~al., 2021]{MaGuoWang2021}
Ma, Y., Guo, S., and Wang, H. (2021).
\newblock Sparse spatio-temporal autoregressions by profiling and bagging.
\newblock {\em Journal of Econometrics}, X:XX--XX.

\bibitem[Masini et~al., 2019]{Masini2019}
Masini, R.~P., Medeiros, M.~C., and Mendes, E.~F. (2019).
\newblock Regularized estimation of high-dimensional vector autoregressions
  with weakly dependent innovations.
\newblock {\em Journal of Time Series Analysis}.

\bibitem[van~der Vaart and Wellner, 1996]{vandervaartwellner1996}
van~der Vaart, A.~W. and Wellner, J.~A. (1996).
\newblock {\em Weak Convergence and Empirical Processes}.
\newblock Springer.

\bibitem[Wainwright, 2019]{Wainwright2019}
Wainwright, M.~J. (2019).
\newblock {\em High-dimensional Statistics: A Non-asymptotic Viewpoint}.
\newblock Cambridge University Press.

\end{thebibliography}

\newpage
\begin{appendices}
\begin{small}
\section{Lemmas}\label{app:selected_lemmas}

\begin{lemma}\label{Lemma:min_vs_restricted_eigen}
Define the quantities $N_c = \abs{\bc}$, $S = \lbrace j : c_j \neq 0 \rbrace$, $\mathcal{G}_S = \left\lbrace g \in \mathcal{G} : \bc_g \neq \bm{0}\right\rbrace$, $\mathcal{G}_S^c = \left\lbrace g \in \mathcal{G} : \bc_g = \bm{0}\right\rbrace$, $\bar\omega_\alpha = \max \left\{(1-\alpha)\sum_{g \in \mathcal{G}_S}\sqrt{\abs{g}},\alpha\sqrt{\abs{S}}\right\}$ and consider
\begin{equation*}
    \bDelta \in \mathcal{C}_{N_c}(\mathcal{G},S) := \left\lbrace \bDelta \in \mathbb{R}^{N_c} : P_{\alpha,S^c}(\bDelta) \leq 3P_{\alpha,S}(\bDelta) \right\rbrace,
\end{equation*}
where 
\begin{equation*}
    \begin{split}
        P_{\alpha,S^c}(\bDelta) &= (1-\alpha)\sum_{g \in \mathcal{G}_S^c}\sqrt{\abs{g}}\norm{\Delta_g}_2 + \alpha\norm{\bDelta_{S^c}}_1,\text{ and }\\
        P_{\alpha,S}(\bDelta) &= (1-\alpha)\sum_{g \in \mathcal{G}_S}\sqrt{\abs{g}}\norm{\Delta_g}_2 + \alpha\norm{\bDelta_S}_1.
    \end{split}
\end{equation*}
Then, under Assumption \ref{assump:min_eigen_bound}, it holds that
\begin{equation}\label{eq:SGL_comp_cond}
    \min_{\bDelta \in \mathcal{C}_{N_c}(\mathcal{G},S)} \frac{\bar{\omega}_\alpha\norm{\bV^{(d)}\bDelta}_2}{P_{\alpha,S}(\bDelta)} \geq \frac{\phi_0}{2}.
\end{equation}
\end{lemma}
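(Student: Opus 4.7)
The plan is to establish two separate estimates and combine them: a restricted eigenvalue-type lower bound $\norm{\bV^{(d)}\bDelta}_2 \geq \phi_0 \norm{\bDelta}_2$ for the block-diagonal operator, and an upper bound $P_{\alpha,S}(\bDelta) \leq 2\bar{\omega}_\alpha \norm{\bDelta}_2$ for the restricted penalty. Taking the ratio will then immediately give $\phi_0/2$. Interestingly, the cone constraint defining $\mathcal{C}_{N_c}(\mathcal{G},S)$ is preserved but never actively used: both bounds hold for every $\bDelta$.

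For the first step, I partition $\bDelta = (\bDelta^{(1)\prime}, \ldots, \bDelta^{(N)\prime})^\prime$ according to the block structure of $\bV^{(d)} = \diag(\bV_1, \ldots, \bV_N)$, so that $\norm{\bV^{(d)}\bDelta}_2^2 = \sum_{i=1}^N \norm{\bV_i \bDelta^{(i)}}_2^2$. Each $\bV_i$ is formed by selecting a column subset $T_i \subseteq \{1,\ldots,2N\}$ of $\bV$; by Assumption \ref{assump:bandedness}(a) the bandwidth restriction $k_0 < \lfloor N/4 \rfloor$ yields $|T_i| \leq 4k_0 + 1 \leq N$ (since $a_{ii}=0$ removes one coefficient from row $i$ of $\bA$). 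Embedding $\bDelta^{(i)}$ into $\mathbb{R}^{2N}$ as $\tilde{\bDelta}^{(i)}$ with zeros outside $T_i$, I obtain $\bV_i \bDelta^{(i)} = \bV\tilde{\bDelta}^{(i)}$ with $\mathcal{M}(\tilde{\bDelta}^{(i)}) \leq N$. Assumption \ref{assump:min_eigen_bound} therefore gives $\norm{\bV\tilde{\bDelta}^{(i)}}_2 \geq \phi_0 \norm{\tilde{\bDelta}^{(i)}}_2 = \phi_0 \norm{\bDelta^{(i)}}_2$. Squaring and summing across $i$ yields the block-diagonal lower bound $\norm{\bV^{(d)}\bDelta}_2 \geq \phi_0 \norm{\bDelta}_2$.

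For the second step, I bound the two parts of $P_{\alpha,S}(\bDelta)$ separately. For the $\ell_1$ part, Cauchy–Schwarz and the definition of $\bar{\omega}_\alpha$ give $\alpha\norm{\bDelta_S}_1 \leq \alpha\sqrt{|S|}\norm{\bDelta_S}_2 \leq \alpha\sqrt{|S|}\norm{\bDelta}_2 \leq \bar{\omega}_\alpha\norm{\bDelta}_2$. For the group part, I use $\norm{\bDelta_g}_2 \leq \norm{\bDelta}_2$ for each group and pull this factor out of the sum, giving $(1-\alpha)\sum_{g \in \mathcal{G}_S}\sqrt{|g|}\norm{\bDelta_g}_2 \leq (1-\alpha)\bigl(\sum_{g \in \mathcal{G}_S}\sqrt{|g|}\bigr)\norm{\bDelta}_2 \leq \bar{\omega}_\alpha\norm{\bDelta}_2$. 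Adding the two estimates yields $P_{\alpha,S}(\bDelta) \leq 2\bar{\omega}_\alpha \norm{\bDelta}_2$.

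Combining the two bounds delivers
\[
\frac{\bar{\omega}_\alpha\norm{\bV^{(d)}\bDelta}_2}{P_{\alpha,S}(\bDelta)} \geq \frac{\bar{\omega}_\alpha \phi_0 \norm{\bDelta}_2}{2\bar{\omega}_\alpha \norm{\bDelta}_2} = \frac{\phi_0}{2},
\]
uniformly in $\bDelta$, which is stronger than the statement restricted to the cone. The main technical hurdle is the embedding argument in Step 1: one must verify that the column set $T_i$ used to construct $\bV_i$ has at most $N$ elements, which is the precise role of the bandwidth bound $k < \lfloor N/4 \rfloor$ in Assumption \ref{assump:bandedness}(a). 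Everything else reduces to Cauchy–Schwarz and the definitions.
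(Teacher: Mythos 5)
Your proof is correct and follows essentially the same route as the paper's: bound $P_{\alpha,S}(\bDelta)\leq 2\bar{\omega}_\alpha\norm{\bDelta}_2$ via Cauchy–Schwarz and the definition of $\bar{\omega}_\alpha$, then lower-bound $\norm{\bV^{(d)}\bDelta}_2$ by $\phi_0\norm{\bDelta}_2$ using block-diagonality and the embedding of each block's coordinates as an $N$-sparse vector in $\mathbb{R}^{2N}$ so that Assumption \ref{assump:min_eigen_bound} applies (the paper likewise discards the cone constraint at this point). Your unified treatment of the two penalty terms is in fact slightly cleaner than the paper's case split between $\alpha<1$ and $\alpha=1$, but the substance is identical.
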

\begin{proof}
First, we show that \eqref{eq:SGL_comp_cond} is bounded from below by 0.5 times the smallest singular value of $\bV^{(d)}$. For $0 \leq \alpha < 1$, it holds that
\begin{equation*}
    \frac{(1-\alpha)\sum_{g \in \mathcal{G}_S}\sqrt{\abs{g}}\norm{\bDelta_g}_2}{\bar{\omega}_\alpha} \leq \frac{\norm{\bDelta}_2\sum_{g \in \mathcal{G}_S}\sqrt{\abs{g}}}{\sum_{g \in \mathcal{G}_S}\sqrt{\abs{g}}} = \norm{\bDelta}_2
\end{equation*}
Alternatively, for $\alpha = 1$, we have $\frac{\alpha\norm{\bDelta_S}_1}{\bar{\omega}_\alpha} = \frac{\alpha \abs{S} \norm{\bDelta_S}_2}{\bar{\omega}_\alpha} \leq \norm{\bDelta_S}_2 \leq \norm{\bDelta}_2$.
Combining the two previous results yields
\begin{equation}
    \frac{P_{\alpha,S}(\bDelta)}{\bar{\omega}_\alpha} = \frac{(1-\alpha)\sum_{g \in \mathcal{G}_S}\sqrt{\abs{g}}\norm{\bDelta_g}_2+\alpha\norm{\bDelta_S}_1}{\bar{\omega}_\alpha}\leq 2\norm{\bDelta}_2,
\label{equation:boundonPOmegaRatio}
\end{equation}
for any $0 \leq \alpha \leq 1$. 

Next, the result in Lemma \ref{Lemma:min_vs_restricted_eigen} follows by noting that $\bV^{(d)} = \diag\left(\bV_1,\ldots,\bV_N\right)$ is a block-diagonal matrix whose singular values correspond to those of its sub-blocks. Let $N_i$ denote the number of columns of $\bV_i$ and note that $N_i < N$ by construction. Then,
\begin{equation*}
\begin{aligned}
\min_{\bDelta \in \mathcal{C}_{N_c}(\mathcal{G},S)} \frac{\omega_\alpha\norm{\bV^{(d)}\bDelta}_2}{P_{\alpha,S}(\bDelta)} &\geq \min_{\bDelta \in \mathbb{R}^{N_c}} \frac{\norm{\bV^{(d)}\bDelta}_2}{2\norm{\bDelta}_2}
    = \min_{1 \leq i \leq N} \min_{\bDelta \in \mathbb{R}^{N_i}} \frac{\norm{\bV_i\bDelta}_2}{2\norm{\bDelta}_2} \\
    &\overset{(i)}{\geq} \min_{\bDelta \in \mathbb{R}^{2N}: \mathcal{M}(\bDelta) \leq N} \frac{\norm{\bV\bDelta}_2}{2\norm{\bDelta}_2} \overset{(ii)}{\geq} \frac{\phi_0}{2},
\end{aligned}
\end{equation*}
where (i) follows since $N_i < N$ for all $i = 1,\ldots,N$ and (ii) holds by Assumption \ref{assump:min_eigen_bound}.
\end{proof}

\begin{lemma}\label{Lemma:V_transfer}
Define the set $\mathcal{V}(x) := \left\{ \big\|\hat{\bV}_h - \bV\big\|_2 \leq x \right\}$. Then, under Assumption \ref{assump:min_eigen_bound}, it holds on $\mathcal{V}\left(\frac{\phi_0}{4}\right)$ that
\begin{equation*}
    \min_{\bx \in \mathcal{C}_{N_c}(\mathcal{G},S)} \frac{\bar\omega_\alpha\norm{\hat{\bV}_h^{(d)}\bx}_2}{P_{\alpha,S}(\bx)} \geq \frac{\phi_0}{4}.
\end{equation*}
\end{lemma}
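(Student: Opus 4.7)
The plan is to transfer the population restricted eigenvalue bound from Lemma \ref{Lemma:min_vs_restricted_eigen} to the sample matrix $\hat{\bV}_h^{(d)}$ by a purely deterministic perturbation argument, using the event $\mathcal{V}(\phi_0/4)$ to control the spectral-norm deviation.

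First I would argue that $\|\hat{\bV}_h^{(d)} - \bV^{(d)}\|_2 \leq \|\hat{\bV}_h - \bV\|_2$. Since $\hat{\bV}_h^{(d)} - \bV^{(d)} = \diag\bigl(\hat{\bV}_{1,h}-\bV_1,\ldots,\hat{\bV}_{N,h}-\bV_N\bigr)$ is block diagonal and each diagonal block is a column submatrix of $\hat{\bV}_h - \bV$, the spectral norm of the block-diagonal matrix equals the maximum spectral norm over its blocks, which is in turn bounded by the spectral norm of the parent matrix. On the event $\mathcal{V}(\phi_0/4)$ this immediately yields $\|\hat{\bV}_h^{(d)} - \bV^{(d)}\|_2 \leq \phi_0/4$.

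Next I would recycle the intermediate estimate already proved inside Lemma \ref{Lemma:min_vs_restricted_eigen}, namely $\|\bV^{(d)} \bx\|_2 \geq \phi_0 \|\bx\|_2$ for every $\bx \in \mathbb{R}^{N_c}$. This holds because $\bV^{(d)}$ is block diagonal, each block contains fewer than $N$ columns of $\bV$, and Assumption \ref{assump:min_eigen_bound} therefore applies block-wise. Combining this with the reverse triangle inequality,
\begin{equation*}
\|\hat{\bV}_h^{(d)} \bx\|_2 \;\geq\; \|\bV^{(d)} \bx\|_2 - \|\hat{\bV}_h^{(d)} - \bV^{(d)}\|_2\, \|\bx\|_2 \;\geq\; \tfrac{3\phi_0}{4}\|\bx\|_2.
\end{equation*}
The last step is to invoke the companion bound $P_{\alpha,S}(\bx) \leq 2\bar{\omega}_\alpha \|\bx\|_2$ (also established inside the proof of Lemma \ref{Lemma:min_vs_restricted_eigen}) to convert $\|\bx\|_2$ into $P_{\alpha,S}(\bx)/(2\bar{\omega}_\alpha)$. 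Multiplying through by $\bar{\omega}_\alpha$ and dividing by $P_{\alpha,S}(\bx)$ gives $\bar{\omega}_\alpha \|\hat{\bV}_h^{(d)} \bx\|_2 / P_{\alpha,S}(\bx) \geq 3\phi_0/8 \geq \phi_0/4$, first pointwise and therefore uniformly on $\mathcal{C}_{N_c}(\mathcal{G},S)$.

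I do not anticipate any real obstacles. The only point requiring some care is the interaction between the spectral norm and the block-diagonal plus column-submatrix structure, which is handled by the variational characterization of the spectral norm. It is worth noting that the cone restriction is not actually needed to obtain the constant $\phi_0/4$; the argument delivers this lower bound uniformly over $\mathbb{R}^{N_c}$, and restricting to the cone only costs tightness, not validity.
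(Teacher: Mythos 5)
Your proof is correct and follows essentially the same route as the paper: bound $\|\hat{\bV}_h^{(d)}-\bV^{(d)}\|_2$ by $\|\hat{\bV}_h-\bV\|_2$ via the block-diagonal/column-submatrix structure, apply the reverse triangle inequality together with the population bound and the estimate $P_{\alpha,S}(\bx)\leq 2\bar{\omega}_\alpha\|\bx\|_2$ from Lemma \ref{Lemma:min_vs_restricted_eigen}. The only difference is cosmetic: you track the perturbation as $\phi_0/4$ and land at $3\phi_0/8$, whereas the paper loosens it to $\phi_0/2$ to land exactly at $\phi_0/4$; your observation that the cone restriction is not needed for this constant also matches the paper's argument.
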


\begin{proof}
First, recall the construction of $\hat{\bV}_h^{(d)} = \diag\left(\hat{\bV}_{1,h},\ldots,\hat{\bV}_{N,h}\right)$ with $\hat{\bV}_{i,h}$ containing at most $N-1$ columns of the matrix $\hat{\bV}_h = \left(\mathcal{B}_h\left(\hat{\bSigma}_1^\prime\right),\mathcal{B}_h\left(\hat{\bSigma}_0\right)\right)$. From the block-diagonal construction, it follows that
\begin{equation*}
    \norm{\hat{\bV}_h^{(d)}-\bV^{(d)}}_2 = \max_{1 \leq i \leq N} \norm{\hat{\bV}_{i,h}-\bV_i}_2 \leq \norm{\hat{\bV}_h - \bV}_2.
\end{equation*}
Then,
\begin{equation}\label{eq:Vhat_norm_bound}
    \norm{\hat{\bV}_h^{(d)}\bx}_2 \geq \norm{\bV^{(d)}\bx}_2 - \norm{\left(\hat{\bV}_h^{(d)} - \bV^{(d)}\right)\bx}_2 \geq \norm{\bV^{(d)}\bx}_2 - \norm{\hat{\bV} - \bV}_2 \norm{\bx}_2 \geq \norm{\bV^{(d)}\bx}_2 - \frac{\phi_0}{2}\norm{\bx}_2,
\end{equation}
where the last inequality follows holds on the set $\mathcal{V}\left(\frac{\phi_0}{4}\right)$. Consequently,
\begin{equation*}
    \min_{\bx \in \mathcal{C}_{N_c}(\mathcal{G},S)} \frac{\omega_\alpha\norm{\hat{\bV}_h^{(d)}\bx}_2}{P_{\alpha,S}(\bDelta)} \overset{(i)}{\geq} \min_{\bx \in \mathbb{R}^{N_c}} \frac{\norm{\hat{\bV}_h^{(d)}\bx}_2}{2\norm{\bx}_2} \overset{(ii)}{\geq} \min_{\bx \in \mathbb{R}^{N_c}} \frac{\norm{\bV^{(d)}\bx}_2}{2\norm{\bx}_2} - \frac{\phi_0}{4} \overset{(iii)}{\geq} \frac{\phi_0}{4},
\end{equation*}
where (i) follows from the proof of Lemma \ref{Lemma:min_vs_restricted_eigen}, (ii) holds by \eqref{eq:Vhat_norm_bound}, and (iii) from Lemma \ref{Lemma:min_vs_restricted_eigen}.
\end{proof}

\section{Proofs of Main Results}
\begin{proof}[\textbf{Proof of Theorem \ref{th:diag_approx}}]
We first prove various intermediate results, see (a)--(d) below. We afterwards combine these results and recover Theorem \ref{th:diag_approx}.
\begin{enumerate}[(a)]
 \item The matrix $\widetilde\bC_s=\left( \sum_{j=0}^{s-1} \bA^j \right) \bB =: \widetilde\bD_s \bB$ has a maximum bandwidth of $(s+1)(k_0-1)+1$ and satisfies
 $$
    \normoneinf{\widetilde{\bC}_s - \bC} \leq C_1 \delta_A^s.
 $$
  \item Define $\bSigma_0^{r,s} = \sum_{j=0}^r \widetilde{\bC}_s^j \widetilde\bD_s \bSigma_\epsilon \widetilde\bD_s\tran (\widetilde{\bC}_s^\prime)^j$ with $\widetilde \bC_s$ as in Theorem \ref{th:diag_approx}(a). The matrix $\bSigma_0^{r,s}$ is a banded matrix with bandwidth no larger than $2(rs+r+s)(k_0-1)+2 l_0 +1$. Moreover,
  $$
   \normoneinf{ \bSigma_0^{r,s} - \bSigma_0 }
   \leq C_2 \left( \frac{\delta_A^s}{\big[ 1 - ( C_1 \delta_A^s+ \delta_C)^2 \big]^3}+ \delta_C^{2(r+1)} \right),
  $$
  whenever $s$ is large enough such that $C_1 \delta_A^s+\delta_C<1$.
  \item Define $\bSigma_1^{r,s} = \widetilde \bC_s \bSigma_0^{r,s}$ with $\widetilde \bC_s$ and $\bSigma_0^{r,s}$ as in Theorems \ref{th:diag_approx}(a) and \ref{th:diag_approx}(b), respectively. The matrix $\bSigma_1^{r,s}$ is a banded matrix with bandwidth no larger than $(2rs+2r+3s+1)(k_0-1)+2 l_0+1$. Moreover,
  $$
   \normoneinf{ \bSigma_1^{r,s} - \bSigma_1 }
   \leq C_3 \left( \frac{\delta_A^s}{\big[ 1 - ( C_1 \delta_A^s+ \delta_C)^2 \big]^3}+ \delta_C^{2(r+1)} \right),
  $$
  whenever $s$ is large enough such that $C_1 \delta_A^s+\delta_C<1$.
  \item Take any $h_1\geq 2(rs+r+s)(k_0-1)+2 l_0 +1$ , then
  $$
   \big\|\hbanded{\widehat\bSigma_0}{1} - \bSigma_0 \big\|_2 \leq \epsilon + 2 C_2 \left( \frac{\delta_A^s}{\big[ 1 - ( C_1 \delta_A^s+ \delta_C)^2 \big]^3}+ \delta_C^{2(r+1)} \right),
  $$
  with a probability of at least $1-\mathcal{P}_1(\epsilon,N,T)$ (for polynomial tail decay) or $1-\mathcal{P}_2(\epsilon,N,T)$ (for exponential tail decay).
  \item Take any $h_2 \geq (2rs+2r+3s+1)(k_0-1)+2 l_0+1$ , then
  $$
  \big\|\hbanded{\widehat\bSigma_1}{2} - \bSigma_1 \big\|_2 \leq \epsilon + 2 C_3  \left( \frac{\delta_A^s}{\big[ 1 - ( C_1 \delta_A^s+ \delta_C)^2 \big]^3}+ \delta_C^{2(r+1)} \right),
  $$
  with a probability of at least $1-\mathcal{P}_1(\epsilon,N,T)$ (for polynomial tail decay) or $1-\mathcal{P}_2(\epsilon,N,T)$ (for exponential tail decay).
\end{enumerate}
Explicit expressions for $C_1$, $C_2$, $C_3$, and $0\leq\delta_C<1$ are provided in the proofs below.

\bigskip
 \textbf{(a)} The proof builds upon results from \cite{guowangyao2016} on banded vector autoregressions. Recall that $\bC=\bD \bB$ with $\bD = (\bI_N-\bA)^{-1}$. $\widetilde\bD_s = \sum_{j=0}^{s-1} \bA^j$ has a bandwidth of at most $s(k_0-1)+1$ and satisfies\footnote{If matrices $\bF_1$ and $\bF_2$ are banded matrices with bandwidths $k_1$ and $k_2$, respectively, then the product $\bF_1 \bF_2$ is again a banded matrix with a bandwidth of at most $k_1+k_2-1$.\label{ftnt:productbanded}}
 $$
  \normoneinf{ \widetilde\bD_s - \bD } = \normoneinf{\sum_{j=0}^{s-1} \bA^j - (\bI_N - \bA)^{-1} } = \normoneinf{ - \sum_{j=s}^\infty \bA^j} \leq \sum_{j=s}^\infty \normoneinf{\bA}^j \leq \frac{\delta_A^s}{1-\delta_A}.
 $$
 The product $\widetilde{\bC}_s = \widetilde{\bD}_s \bB$ has a maximal bandwidth of $(s+1)(k_0-1)+1$. Since $\normoneinf{\bB} \leq C_B$ (Assumption \ref{assump:bandedness}(c)), we also have
$$
 \normoneinf{ \widetilde{\bC}_s - \bC  }
  = \normoneinf{ \left( \widetilde\bD_s - \bD \right) \bB }
  \leq \normoneinf{ \widetilde\bD_s - \bD } \normoneinf{ \bB }
  \leq C_B \frac{\delta_A^s}{1-\delta_A}.
$$
 and the claim follows with $C_1 = \frac{C_B}{1-\delta_A}$. \textbf{(b)} Iterating on the observation in footnote \ref{ftnt:productbanded}, we conclude that the bandwidth of $\widetilde \bC_s^r$ is at most $r\Big[(s+1)(k_0-1)+1\Big]-(r-1)= r(s+1)(k_0-1)+1$. The bandwidth of $\bSigma_0^{r,s}$ therefore does not exceed
$$
 2\Big[r(s+1)(k_0-1)+1\Big] + 2(s(k_0-1)+1) +(2l_0+1) - 4 = 2(rs+r+s)(k_0-1)+2 l_0 +1.
$$
 We now bound $\norm{ \bSigma_0^{r,s} - \bSigma_0 }_1$. Assumption \ref{assump:stability}(b), imposes $\normoneinf{\bC}\leq \delta_C$ for some $0\leq\delta_C<1$. Because $\bSigma_0 = \sum_{j=0}^\infty \bC^j \bD \bSigma_\epsilon \bD\tran(\bC^\prime)^j$, it holds that
\begin{equation}
\begin{aligned}
 &\normoneinf{\bSigma_0^{r,s} - \bSigma_0}
  = \normoneinf{ \sum_{j=0}^r \left[ \widetilde{\bC}_s^j \widetilde\bD_s \bSigma_\epsilon \widetilde \bD_s\tran (\widetilde{\bC}_s^\prime)^j - \bC^j \bD \bSigma_\epsilon \bD\tran (\bC^\prime)^j  \right] - \sum_{j=r+1}^\infty\bC^j \bD \bSigma_\epsilon \bD\tran (\bC^\prime)^j} \\
  &\leq
  \sum_{j=0}^r \normoneinf{\bSigma_\epsilon} \normoneinf{\widetilde \bC_s^j}^2 \normoneinf{\widetilde \bD_s - \bD }^2
  + 2 \sum_{j=0}^r \normoneinf{\bSigma_\epsilon} \normoneinf{\widetilde \bC_s^j}^2 \normoneinf{\bD} \normoneinf{\widetilde \bD_s - \bD}^2  \\
  &\qquad + \sum_{j=1}^r \normoneinf{\bSigma_\epsilon} \normoneinf{\widetilde \bC_s^j - \bC^j}^2 \normoneinf{\bD}^2 + 2 \sum_{j=1}^r \normoneinf{\bSigma_\epsilon} \normoneinf{\widetilde \bC_s^j - \bC^j} \normoneinf{\bC^j} \normoneinf{\bD}^2 \\
  &\qquad + \normoneinf{\sum_{j=r+1}^\infty\bC^j \bD \bSigma_\epsilon \bD\tran (\bC^\prime)^j}.
\end{aligned}
\label{eq:sigma0decomp}
\end{equation}
 An inspection of \eqref{eq:sigma0decomp} shows that additional upper bounds are required on $\normoneinf{\widetilde{\bC}_s^j}$, $\normoneinf{\bC^j-\widetilde{\bC}_s^j}$, and $\normoneinf{\sum_{j=r+1}^\infty\bC^j \bD \bSigma_\epsilon \bD\tran (\bC^\prime)^j}$. For $\normoneinf{\widetilde{\bC}_s^j}$, using properties of matrix norms and Theorem \ref{th:diag_approx}(a), we obtain
 \begin{equation}
\normoneinf{\widetilde{\bC}_s^j}
 = \normoneinf{\left(\widetilde \bC_s - \bC + \bC \right)^j}
 \leq \left( \normoneinf{\widetilde\bC_s- \bC} + \normoneinf{\bC} \right)^j
 \leq \Big[ C_1 \delta_A^s+ \delta_C\Big]^j.
\label{eq:mSigmaterm1}
\end{equation}
Furthermore, expanding the matrix powers provides
\begin{equation}
\begin{aligned}
 &\normoneinf{\bC^j-\widetilde{\bC}_s^j}
  = \normoneinf{ \bC^j -\left( \widetilde{\bC}_s -\bC + \bC \right)^j} \\
  &=  \normoneinf{ \bC^j -\left((\widetilde{\bC}_s -\bC)^j + (\widetilde{\bC}_s -\bC)^{j-1}\bC + (\widetilde{\bC}_s -\bC)^{j-2}\bC (\widetilde{\bC}_s -\bC)  +\ldots + \bC (\widetilde{\bC}_s -\bC)^{j-1} \ldots + \bC^j \right)} \\
  &\leq \sum_{k=1}^j\binom{j}{k}\normoneinf{\bC}^{j-k}\normoneinf{\widetilde\bC_s - \bC}^k
  = \sum_{k=0}^{j-1} \binom{j}{k+1} \normoneinf{\bC}^{(j-1)-k} \normoneinf{\widetilde\bC_s - \bC}^{k+1} \\
  &= \normoneinf{\widetilde\bC_s - \bC} \sum_{k=0}^{j-1} \frac{j}{k+1} \binom{j-1}{k} \normoneinf{\bC}^{(j-1)-k} \normoneinf{\widetilde\bC_s - \bC}^{k}
  \leq C_1 j \delta_A^s \left[ \normoneinf{\widetilde\bC_s- \bC} + \normoneinf{\bC} \right]^{j-1} \\
  &\leq C_1 j \delta_A^s \Big[ C_1 \delta_A^s+ \delta_C\Big]^{j-1}
\end{aligned}
\label{eq:mSigmaterm2}
\end{equation}
Finally, since $\normoneinf{\bSigma_\epsilon} \leq C_\epsilon$ and $\normoneinf{\bD}= \normoneinf{\sum_{j=0}^\infty \bA^j}=  \sum_{j=0}^\infty \normoneinf{\bA}^j\leq c_D$ (Assumption \ref{assump:stability}(a)), we have
\begin{equation}
    \normoneinf{\sum_{j=r+1}^\infty\bC^j \bD \bSigma_\epsilon \bD\tran (\bC^\prime)^j}
    \leq \sum_{j=r+1}^\infty \normoneinf{\bSigma_\epsilon} \normoneinf{\bD}^2 \normoneinf{\bC}^{2j}
    \leq C_\epsilon c_D^2 \sum_{j=r+1}^\infty\delta_C^{2j}
    = C_\epsilon c_D^2 \frac{\delta_C^{2(r+1)}}{1-\delta_C^2}.
\label{eq:mSigmaterm3}
\end{equation}
Returning to \eqref{eq:sigma0decomp} and inserting the upper bounds in \eqref{eq:mSigmaterm1}--\eqref{eq:mSigmaterm3}, we find
$$
\begin{aligned}
 &\normoneinf{\bSigma_0^{r,s} - \bSigma_0}
  \leq
  \frac{C_\epsilon}{(1-\delta_A)^2}  \delta_A^{2s} \sum_{j=0}^r \left( C_1 \delta_A^s + \delta_C\right)^{2j}
  + 2 \frac{C_\epsilon c_D}{(1-\delta_A)^2} \delta_A^{2s}  \sum_{j=0}^r \left( C_1 \delta_A^s + \delta_C\right)^{2j} \\
  &+ C_\epsilon C_1^2 c_D^2 \delta_A^{2s} \sum_{j=1}^r j^2 \left( C_1 \delta_A^s+ \delta_C\right)^{2(j-1)}
  + 2 C_\epsilon C_1 c_D^2 \delta_A^s \sum_{j=1}^r j \delta_C^j \left( C_1 \delta_A^s+ \delta_C\right)^{j-1}
  + C_\epsilon c_D^2 \frac{\delta_C^{2(r+1)}}{1-\delta_C^2}.
\end{aligned}
$$
Assuming $s$ is sufficient large, that is assuming $C_1 \delta_A^s+\delta_C<1$, we subsequently use result on geometric series and conclude\footnote{Specifically, $\sum_{j=1}^\infty j z^j = \frac{z}{(1-z)^2}$ and $\sum_{j=1}^\infty j^2 z^j = \frac{z}{(1-z)^3}$ for $|z|<1$.}
$$
\begin{aligned}
 &\normoneinf{\bSigma_0^{r,s}- \bSigma_0}
 \leq \frac{C_\epsilon}{(1-\delta_A)^2} \delta_A^{2s} \frac{1}{1-( C_1 \delta_A^s + \delta_C)^2}
 + 2 \frac{C_\epsilon c_D}{(1-\delta_A)^2} \delta_A^{2s} \frac{1}{1-( C_1 \delta_A^s + \delta_C)^2} \\
 &\qquad + \frac{C_\epsilon C_1^2 c_D^2 \delta_A^{2s}}{( C_1 \delta_A^s+ \delta_C)^2} \frac{( C_1 \delta_A^s+ \delta_C)^2}{\big[ 1 - ( C_1 \delta_A^s+ \delta_C)^2 \big]^3}
 + 2 \frac{C_\epsilon C_1 c_D^2 \delta_A^s}{C_1 \delta_A^s+ \delta_C} \frac{\delta_C(C_1 \delta_A^s +\delta_C)}{\big[1 - \delta_C(C_1 \delta_A^s +\delta_C) \big]^2}
 +C_\epsilon c_D^2 \frac{\delta_C^{2(r+1)}}{1-\delta_C^2} \\
 &=  \frac{C_\epsilon}{(1-\delta_A)^2}\frac{1}{1-( C_1 \delta_A^s + \delta_C)^2} \delta_A^{2s}
 + 2 \frac{C_\epsilon c_D}{(1-\delta_A)^2}\frac{1}{1-( C_1 \delta_A^s + \delta_C)^2} \delta_A^{2s}
 + \frac{C_\epsilon C_1^2 c_D^2}{\big[ 1 - ( C_1 \delta_A^s+ \delta_C)^2 \big]^3} \delta_A^{2s} \\
 &\qquad+ 2\frac{ C_\epsilon C_1 c_D^2 \delta_C}{\big[1 - \delta_C(C_1 \delta_A^s +\delta_C) \big]^2} \delta_A^s
 + C_\epsilon c_D^2 \frac{\delta_C^{2(r+1)}}{1-\delta_C^2} \\
 &\leq \frac{1}{(1-\delta_A)^2} \frac{  C_\epsilon + 2 C_\epsilon  c_D }{1-( C_1 \delta_A^s + \delta_C)^2} \delta_A^{2s} +\frac{  C_\epsilon C_1^2 c_D^2 + 2 C_\epsilon C_1 c_D^2 \delta_C }{\big[ 1 - ( C_1 \delta_A^s+ \delta_C)^2 \big]^3} \delta_A^s+ \frac{C_\epsilon c_D^2}{1-\delta_C^2} \delta_C^{2(r+1)} \\
 &\leq \left[ \frac{C_\epsilon + 2 C_\epsilon c_D}{(1-\delta_A)^2} +  C_\epsilon C_1^2 c_D^2 + 2 C_\epsilon C_1 c_D^2 \delta_C \right]\frac{\delta_A^s}{\big[ 1 - ( C_1 \delta_A^s+ \delta_C)^2 \big]^3} + \frac{C_\epsilon c_D^2}{1-\delta_C^2} \delta_C^{2(r+1)} .
\end{aligned}
$$
The claim is thus indeed valid with $C_2= \max\left\{ \frac{C_\epsilon + 2 C_\epsilon c_D}{(1-\delta_A)^2} +  C_\epsilon C_1^2 c_D^2 + 2 C_\epsilon C_1 c_D^2 \delta_C , \frac{C_\epsilon c_D^2}{1-\delta_C^2} \right\}$. \textbf{(c)} We have $\bSigma_1 = (\bI_N - \bA)^{-1} \bB \bSigma_0 = \bC \bSigma_0$, and hence
\begin{equation}
\normoneinf{ \bSigma_1^{r,s} - \bSigma_1 }
 = \normoneinf{ \widetilde \bC_s (\bSigma_0^{r,s}-\bSigma_0) + (\widetilde \bC_s -\bC) \bSigma_0}
 \leq \normoneinf{\widetilde \bC_s} \normoneinf{\bSigma_0^{r,s}-\bSigma_0} + \normoneinf{\widetilde \bC_s -\bC} \normoneinf{\bSigma_0}
\label{eq:boundingmatrixnorms}
\end{equation}
Now combine $\normoneinf{\widetilde \bC_s} \leq C_1 \delta_\kappa^s+\delta_C \leq 1$ (since $s$ is taken sufficiently large), Theorem \ref{th:diag_approx}(b), \eqref{eq:mSigmaterm2} with $j=1$, and $\normoneinf{\bSigma_0}\leq \sum_{j=0}^\infty \normoneinf{ \bC^j \bD \bSigma_\epsilon \bD\tran (\bC\tran)^j} \leq \frac{C_\epsilon c_D^2}{1-\delta_C^2}$ to obtain the stated result with $C_3= C_2 + \frac{C_1 C_\epsilon c_D^2}{1-\delta_C^2}$. \textbf{(d)} We have
\begin{equation}
\begin{aligned}
    \normoneinf{\hbanded{\widehat\bSigma_0}{1} - \bSigma_0 }
    &\leq \normoneinf{\hbanded{\widehat\bSigma_0}{1} - \hbanded{\bSigma_0}{1}}
        + \normoneinf{\hbanded{\bSigma_0}{1} - \hbanded{\bSigma_0^{r,s}}{1}}
        + \normoneinf{\hbanded{\bSigma_0^{r,s}}{1} - \bSigma_0} \\
    &= \normoneinf{\hbanded{\widehat\bSigma_0-\bSigma_0}{1}}
        + \normoneinf{\hbanded{\bSigma_0 - \bSigma_0^{r,s}}{1}}
        + \normoneinf{\bSigma_0^{r,s} - \bSigma_0} \\
    &\leq \normoneinf{\hbanded{\widehat\bSigma_0-\bSigma_0}{1}} + 2 \normoneinf{\bSigma_0^{r,s} - \bSigma_0}
\end{aligned}
\label{eq:bandedSigma0decomp}
\end{equation}
because $\bSigma_0^{r,s}$ is a banded matrix already and banding can only decrease the norm difference between $\bSigma_0^{r,s}$ and $\bSigma_0$. We consider the three terms in the RHS of \eqref{eq:bandedSigma0decomp} separately. There are at most $2h_1+1$ nonzero elements in any column/row of $\hbanded{\widehat\bSigma_0-\bSigma_0}{1}$ and thus
\begin{equation}
 \normoneinf{\hbanded{\widehat\bSigma_0-\bSigma_0}{1}}
  \leq (2 h_1 + 1) \norm{\widehat\bSigma_0-\bSigma_0 }_{max}
\label{eq:myerrorboundforlater}
\end{equation}
and
\begin{equation}
\begin{aligned}
 &\Prob\left(\normoneinf{\hbanded{\widehat\bSigma_0-\bSigma_0}{1}} \leq x\right) \geq \Prob\left( (2h_1+1) \norm{\widehat\bSigma_0-\bSigma_0 }_{max} \leq x\right)
   = 1 - \Prob\left( \norm{\widehat\bSigma_0-\bSigma_0 }_{max} > \frac{x}{2h_1+1} \right)\\
  &\geq 1- \Prob\left(\bigcup_{1\leq i,j \leq N} \left|\left[\widehat\bSigma_0-\bSigma_0\right]_{ij} \right|  > \frac{x}{2h_1+1}\right)
  \geq 1 - \sum_{i=1}^N \sum_{j=1}^N \Prob\left(\left| \sum_{t=2}^T \xi_{it} \xi_{jt} - \E(\xi_{it} \xi_{jt}) \right| > \frac{T x}{2h_1+1} \right) \\
  &
  \geq
  \begin{cases}
    1 - N^2\left[ \left(b_1 T^{(1-\delta)/3}+\frac{(2h_1+1)b_3}{x}\right) \exp\left(-\frac{T^{(1-\delta)/3}}{2b_1^2}\right) + \frac{b_2 (2h_1+1)^d}{x^d T^{\frac{\delta}{2}(d-1)}}\right] &\qquad (\text{polynomial tails}), \\
    1 - N^2 \left[\frac{\kappa_1(2h_1+1)}{x} + \frac{2}{\kappa_2}\left(\frac{T x^2}{(2h_1+1)^2}\right)^{1/7} \right] \exp\left(- \frac{1}{\kappa_3}\left(\frac{T x^2}{(2h_1+1)^2}\right)^{1/7} \right) &\qquad (\text{exponential tails}),
  \end{cases}
\end{aligned}
\label{eq:probstatements}
\end{equation}
where the last inequality exploits Lemma \ref{lemma:covmatrixelements} (see Supplement). Note that the probabilities in \eqref{eq:probstatements} coincide with the probabilities defined as $1-\mathcal{P}_1(x,N,T)$ and $1-\mathcal{P}_2(x,N,T)$ in Theorem \ref{th:diag_approx}. Overall, if $\normoneinf{\hbanded{\widehat\bSigma_0-\bSigma_0}{1}} \leq \epsilon$ holds, then continuing from \eqref{eq:bandedSigma0decomp},
$$
\begin{aligned}
 \normoneinf{\hbanded{\widehat\bSigma_0}{1} - \bSigma_0 }
 \leq \epsilon + 2 \normoneinf{\bSigma_0^{r,s} - \bSigma_0}
 \leq \epsilon + 2 C_2 \left( \frac{\delta_A^s}{\big[ 1 - ( C_1 \delta_A^s+ \delta_C)^2 \big]^3}+ \delta_C^{2(r+1)} \right),
\end{aligned}
$$
where the last inequality follows from part (b) of this proof. A lower bound on the probability of $\left\{\normoneinf{\hbanded{\widehat\bSigma_0-\bSigma_0}{1}} \leq \epsilon\right\}$ occurring is immediately available from \eqref{eq:probstatements}. 

\textbf{(e)} Mimicking the steps from (d), we find
$$
 \normoneinf{\hbanded{\widehat\bSigma_1}{2} - \bSigma_1 }
  \leq \epsilon + 2 \normoneinf{ \bSigma_1^{r,s} - \bSigma_1 }
  \leq \epsilon + 2 C_3 \left( \frac{\delta_A^s}{\big[ 1 - ( C_1 \delta_A^s+ \delta_C)^2 \big]^3}+ \delta_C^{2(r+1)} \right)
$$
if we use part (c) and if $\Big\{\normoneinf{\hbanded{\widehat\bSigma_1-\bSigma_1}{2}} \leq \epsilon\Big\}$ holds. The probability of the latter event is $1-\mathcal{P}_1(\epsilon,N,T)$ (polynomial tails) or $1-\mathcal{P}_2(\epsilon,N,T)$ (exponential tails).

\bigskip
We now combine all these intermediate results to recover the result from the theorem. Parts (d) and (e) are both applicable since
$$
\begin{aligned}
h&=\max\{ h_1,h_2\}=h_2= (2rs +2r +3s +1)(k_0-1)+2 l_0+1 \\
 &\leq \Big(2r(s+1) +3(s+1) \Big)(k_0-1)+2l_0 +1\leq (s+1)(2r+3) (k_0-1)+2l_0 +1.
\end{aligned}
$$
Lemma \ref{lemma:L2normineqs} implies that
$$
\begin{aligned}
 \normoneinf{\widehat{\bV}_h - \bV }
  &\leq \normoneinf{\hbanded{\widehat\bSigma_1}{2} - \bSigma_1 } +\normoneinf{\hbanded{\widehat\bSigma_0}{1} - \bSigma_0 } \\
  & \leq 2 \epsilon + 2 C_4 \left( \frac{\delta_A^s}{\big[ 1 - ( C_1 \delta_A^s+ \delta_C)^2 \big]^3}+ \delta_C^{2(r+1)} \right),
\end{aligned}
$$
when $\left\{\normoneinf{\hbanded{\widehat\bSigma_1-\bSigma_1}{2}} \leq \epsilon\right\}\cap\left\{\normoneinf{\hbanded{\widehat\bSigma_0-\bSigma_0}{1}} \leq \epsilon\right\}$ takes place and $C_4=C_2+C_3$. From $\Prob(A\cap B)= 1 - \Prob(A^c\cup B^c)\geq 1 - \Prob(A^c)-\Prob(B^c)$ it follows that this joint event takes place with probabilities $1-2\mathcal{P}_1(\epsilon,N,T)$ and $1-2\mathcal{P}_2(\epsilon,N,T)$ in the cases of polynomial and exponential tail decay, respectively.

It remains to determine $s$ and $r$ such that $C_4 \delta_A^s/\big[ 1 - ( C_1 \delta_A^s+ \delta_C)^2 \big]^3\leq \epsilon$ and $C_4 \delta_C^{2(r+1)}\leq \epsilon$ hold. For any $s\geq s^*$ with $s^*= \log\left( \frac{1}{C_1} \big[ (1 - (1-\delta_C)^{1/3})^{1/2} - \delta_C\big] \right)/\log(\delta_A)$, we have $1/[1-(C_1 \delta_A^s + \delta_C)^2]^3\leq 1/[1-\delta_C]$. Under the latter assumption, we determine the $s$ such that
$$
\begin{aligned}
 \frac{C_4 \delta_A^s}{1-\delta_C} \leq \epsilon
 \iff
 s \log(\delta_A) \leq \log\left( \frac{(1-\delta_C)\epsilon}{C_4} \right)
 \iff
 s \geq \frac{\log\left(C_4/(1-\delta_C)\epsilon \right)}{|\log(\delta_A)|}.
\end{aligned}
$$
Similarly, for $r$, the choice $r= \log(C_4/\epsilon)/2|\log(\delta_C)|$ suffices.
\end{proof}

\begin{proof}[\textbf{Proof of Theorem \ref{Thm:sgl}}]
The proof of the theorem relies on the properties of dual norms. Recall $P_\alpha(\bc) = \alpha\sum_{g \in \setG} \sqrt{|g|} \norm{\bc_g}_2 + (1-\alpha)\norm{\bc}_1$. Exploiting the properties of $\norm{\cdot}_1$ and $\norm{\cdot}_2$, it is straightforward to verify that $P_\alpha(\cdot)$ is a norm for any $0\leq\alpha\leq 1$. For any norm $\norm{\cdot}$, we define its dual norm $\norm{\cdot}^*$ through $\norm{\bc}^* = \sup_{\bx\neq \bzeros} \frac{\abs{\bc^\prime\bx}}{\norm{\bx}}$. The dual-norm inequality states that
\begin{equation}
 \bc\tran\bx \leq \norm{\bc}^*\,\norm{\bx}
 \qquad\qquad \text{for all conformable vectors $\bc$ and $\bx$}.
\label{eq:dualnormineq}
\end{equation}
For the norm $P_\alpha(\bc)$, its dual norm $P_\alpha^*(\bc)$ is bounded by
\begin{equation}
\begin{aligned}
 P_\alpha^*(\bc)
 &= \sup_{\bx\neq \bzeros} \frac{|\bc\tran\bx|}{P_\alpha(\bx)}
 = \sup_{\bx\neq \bzeros} \frac{|\bc\tran\bx|}{\alpha\sum_{g \in \setG}\sqrt{|g|} \norm{\bx_g}_2 + (1-\alpha)\norm{\bx}_1} \\
 &\stackrel{(i)}{\leq} \alpha \sup_{\bx\neq \bzeros} \frac{|\bc\tran\bx|}{\sum_{g \in \setG}\sqrt{|g|} \norm{\bx_g}_2}+ (1-\alpha) \sup_{\bx\neq \bzeros} \frac{|\bc\tran\bx|}{\norm{\bx}_1}
 \stackrel{(ii)}{\leq} \alpha \max_{g \in \setG} \frac{\norm{\bc_g}_2}{\sqrt{\abs{g}}} + (1-\alpha)\norm{\bc}_\infty,
 \end{aligned}
\label{eq:Pstarupperbound}
\end{equation}
by convexity of the function $f(x)=x^{-1}$ in step (i), and using for step (ii) both $\norm{\bc}_1^*=\norm{\bc}_\infty$ and
\begin{equation*}
\begin{aligned}
    \sup_{\bx\neq \bzeros }\frac{\abs{\bc\tran\bx}}{\sum_{g \in \setG}\sqrt{\abs{g}}\norm{\bx_g}_2}
    &= \sup_{\bx\neq\bzeros , \sum_{g \in \setG}\sqrt{\abs{g}}\norm{\bx_g}_2 = 1} \abs{\sum_{g \in \setG}\bc_g\tran\bx_g}\\
    &\leq \sup_{\bx\neq \bzeros , \sum_{g \in \setG}\sqrt{\abs{g}}\norm{\bx_g}_2 = 1} \sum_{g \in \setG}\norm{\frac{\bc_g}{\sqrt{\abs{g}}}}_2\norm{\sqrt{\abs{g}}\bx_g}_2
    \leq \max_{g \in \setG}\frac{\norm{\bc_g}_2}{\sqrt{\abs{g}}}.
\end{aligned}
\end{equation*}

We now start the actual proof. Recall $\hat{\bsigma}_h = \vect\left( \hbanded{\hat\bSigma_1}{}\tran \right)$, $\hat{\bV}_h = \begin{bmatrix} \hbanded{\hat{\bSigma}_1}{}\tran & \hbanded{\hat{\bSigma}_0}{}\end{bmatrix}$, $\bV= \begin{bmatrix} \bSigma_1\tran & \bSigma_0 \end{bmatrix}$, and $\bC = \begin{bmatrix} \bA & \bB \end{bmatrix}$. Exploiting standard properties of $\vect(\cdot)$, we find
\begin{equation}
\begin{aligned}
\hat{\bsigma}_h-\hat{\bV}^{(d)}_h\bc
    &= \vect\left(\hbanded{\hat\bSigma_1}{}\tran- \hat{\bV}_h\bC\tran\right) = \vect\left(\left[\hbanded{\hat{\bSigma}_1}{} - \bSigma_1\right]\tran - \left[\hat{\bV}_h-\bV\right]\bC\tran + \underbrace{\left[\bSigma_1\tran - \bV\bC\tran\right]}_{=\bZeros,\text{ see }\eqref{eq:YWpopulation}}\right)\\
    &= \underbrace{\vect\left( \hbanded{\hat{\bSigma}_1}{}\tran - \bSigma_1\tran\right)}_{\hat{\bDelta}_\Sigma} - \underbrace{\left[\hat{\bV}_h^{(d)} - \bV^{(d)}\right]}_{\hat{\bDelta}_V}\bc =  \hat{\bDelta}_\Sigma - \hat{\bDelta}_V\bc.
\end{aligned}
\label{eq:basic_deviation}
\end{equation}
Using \eqref{eq:basic_deviation}, we rewrite
\begin{equation*}
\begin{aligned}
    \norm{\hat{\bsigma}_h - \hat{\bV}_h^{(d)}\hat{\bc}}_2^2
&= \norm{ \Big[\hat{\bsigma}_h - \hat{\bV}_h^{(d)}\bc\Big] - \Big[\hat{\bV}_h^{(d)}(\hat{\bc} - \bc)\Big]}_2^2 \\
&= \norm{\hat{\bsigma}_h - \hat{\bV}_h^{(d)}\bc}_2^2 + \norm{\hat{\bV}_h^{(d)}(\hat{\bc}-\bc)}_2^2  - 2\left(\hat{\bc}-\bc\right)^\prime\hat{\bV}_h^{(d)\prime}\left(\hat{\bDelta}_\Sigma - \hat{\bDelta}_V\bc\right).
\end{aligned}
\end{equation*}
Recalling the objective function $\mathcal{L}_\alpha(\bc;\lambda)=\norm{\hat{\bsigma}_h - \hat{\bV}_h^{(d)}\bc}_2^2 + \lambda P_\alpha(\bc)$ and noting that $\mathcal{L}_\alpha(\hat\bc;\lambda) \leq \mathcal{L}_\alpha(\bc;\lambda)$ by construction, it follows that 
\begin{equation}
\begin{aligned}
\norm{\hat{\bV}_h^{(d)}(\hat{\bc}-\bc)}_2^2 + \lambda P_\alpha\left(\hat{\bc}\right)
    &\leq 2\left(\hat{\bc}-\bc\right)^\prime\hat{\bV}_h^{(d)\prime}\left(\hat{\bDelta}_\Sigma - \hat{\bDelta}_V\bc\right) + \lambda P_\alpha\left(\bc\right)\\
    &\leq 2P_\alpha\left(\hat{\bc}-\bc\right)P_\alpha^*\left(\hat{\bV}_h^{(d)\prime}\left(\hat{\bDelta}_\Sigma - \hat{\bDelta}_V\bc\right)\right) + \lambda P_\alpha\left(\bc\right)\\
    &\leq  P_\alpha\left(\hat{\bc}-\bc\right)\left[2P_\alpha^*\left(\hat{\bV}_h^{(d)\prime}\hat{\bDelta}_\Sigma\right) + 2P_\alpha^*\left(\hat{\bV}_h^{(d)\prime}\hat{\bDelta}_V\bc\right)\right] + \lambda P_\alpha\left(\bc\right),
\end{aligned}
\label{eq:SGL_basic_ineq}
\end{equation}
where we used the dual-norm inequality (see \eqref{eq:dualnormineq}) and the triangle property of (dual) norms in the second and third inequality, respectively. Define the sets
\begin{equation}
 \mathcal{H}_1(x) = \left\{ 2 P_\alpha^*\left( \hat{\bV}_h^{(d)\prime}\hat{\bDelta}_\Sigma\right) \leq x \right\}
 \qquad
 \text{ and }
 \qquad
 \mathcal{H}_2(x) = \left\{ 2 P_\alpha^*\left(\hat{\bV}_h^{(d)\prime}\hat{\bDelta}_V\bc\right) \leq x \right\}.
\end{equation}
On the set $\mathcal{H}_1(\frac{\lambda}{4})\cap \mathcal{H}_2(\frac{\lambda}{4})$, we can scale \eqref{eq:SGL_basic_ineq} by a factor 2 to obtain
\begin{equation}
 2\norm{\hat{\bV}_h^{(d)}(\hat{\bc}-\bc)}_2^2 + 2\lambda P_\alpha\left(\hat{\bc}\right)
 \leq \lambda P_\alpha(\hat{\bc}-\bc) + 2 \lambda P_\alpha (\bc).
 \label{eq:SGL_basic_ineq2}
\end{equation}
We subsequently manipulate $P_\alpha(\hat{\bc})$ and $P_\alpha\left(\hat{\bc}-\bc\right)$. Using the reverse triangle inequality, we have
\begin{equation}
\begin{aligned}
&P_\alpha(\hat{\bc})
    = \alpha\sum_{g \in \mathcal{G}} \sqrt{\abs{g}} \norm{\hat{\bc}_g}_2 + (1-\alpha)\norm{\hat{\bc}}_1\\
    &\geq \alpha\sum_{g \in \mathcal{G}_S} \sqrt{\abs{g}} \Big[\norm{\bc_g}_2 - \norm{\hat{\bc}_g - \bc_g}_2 \Big] + \alpha\sum_{g \in \mathcal{G}_{S^c}} \sqrt{\abs{g}} \norm{\hat{\bc}_g}_2
    + (1-\alpha)\Big[\norm{\bc_S}_1 - \norm{\hat{\bc}_S - \bc_S}_1 + \norm{\hat{\bc}_{S^c}}_1\Big] \\
    &= P_{\alpha,S}(\bc) + P_{\alpha,S^c}\left(\hat{\bc} - \bc\right) - P_{\alpha,S}\left(\hat{\bc} - \bc\right),
\end{aligned}
\label{eq:P_rewrite_1}
\end{equation}
where $\mathcal{G}_S$ and $\mathcal{G}_{S^c}$ are defined in Lemma \ref{Lemma:min_vs_restricted_eigen}. Simple rewriting provides
\begin{equation}
\begin{aligned}
P_\alpha\left(\hat{\bc}-\bc\right)
    &= \alpha\sum_{g \in \mathcal{G}} \sqrt{\abs{g}} \norm{\hat{\bc}_g - \bc_g}_2 +
        (1-\alpha)\norm{\hat{\bc}_S - \bc_S}_1  + \alpha \sum_{g \in \mathcal{G}_{S^c}} \sqrt{\abs{g}} \norm{\hat{\bc}_g}_2 + (1-\alpha)\norm{\hat{\bc}_{S^c}}_1 \\
    &= P_{\alpha,S}\left(\hat{\bc}-\bc\right) + P_{\alpha,S^c}\left(\hat{\bc}-\bc\right).
\end{aligned}
\label{eq:P_rewrite_2}
\end{equation}
Combining results \eqref{eq:SGL_basic_ineq2}--\eqref{eq:P_rewrite_2} yields
\begin{equation}
2\norm{\hat{\bV}_h^{(d)}(\hat{\bc}-\bc)}_2^2 + \lambda P_{\alpha,S^c}\left(\hat{\bc}-\bc\right) \leq 3\lambda P_{\alpha,S}(\hat{\bc} - \bc),
\label{eq:RECconsequence}
\end{equation}
and $\hat\bc-\bc$ is thus a member of the set $\mathcal{C}_{N_c}(\mathcal{G},S)$ as defined in Lemma \ref{Lemma:min_vs_restricted_eigen}. 

In combination with Lemma \ref{Lemma:V_transfer}, thus requiring $\mathcal{H}_1(\frac{\lambda}{4})\cap \mathcal{H}_2(\frac{\lambda}{4})\cap\mathcal{V}\left(\frac{\phi_0}{2}\right)$ to hold, we conclude
\begin{equation}
\begin{aligned}
    2\norm{\hat{\bV}_h^{(d)}(\hat{\bc}-\bc)}_2^2 &+ \lambda P_\alpha(\hat{\bc}-\bc)
    = 2\norm{\hat{\bV}_h^{(d)}(\hat{\bc}-\bc)}_2^2 + \lambda P_{\alpha,S}(\hat{\bc}-\bc) + \lambda P_{\alpha,S^c}(\hat{\bc}-\bc)\\
    &\overset{(i)}{\leq} 4\lambda  P_{\alpha,S}(\hat{\bc}-\bc) \overset{(ii)}{\leq} 16 \norm{\hat{\bV}_h^{(d)}(\hat{\bc}-\bc)}_2 \left(\frac{\bar{\omega}_\alpha \lambda}{\phi_0}\right) \overset{(iii)}{\leq} \norm{\hat{\bV}_h^{(d)}(\hat{\bc}-\bc)}_2^2 + \frac{64\bar{\omega}_\alpha^2\lambda^2}{\phi_0^2},
\end{aligned}
\label{eq:finalineqSGL}
\end{equation}
where step (i) follows from \eqref{eq:RECconsequence}, step (ii) is implied by $P_{\alpha,S}(\hat\bc - \bc)\leq \frac{4\bar{\omega}_0 \norm{\hat{\bV}_h^{(d)}(\hat{\bc}-\bc)}_2}{\phi_0}$ for $\hat\bc-\bc\in\mathcal{C}_{N_c}(\mathcal{G},S)$ (Lemma \ref{Lemma:V_transfer}), and step (iii) uses the elementary inequality $16 uv \leq u^2 + 64 v^2$ (i.e. manipulating $(u-8v)^2\geq0$).
A straightforward rearrangement of \eqref{eq:finalineqSGL} provides the inequality of Theorem \ref{Thm:sgl}.

It remains to determine a lower bound on the probability of $\mathcal{H}_1(\frac{\lambda}{4})\cap \mathcal{H}_2(\frac{\lambda}{4})\cap\mathcal{V}\left(\frac{\phi_0}{2}\right)$. We rely on the elementary inequality $\Prob\left(\mathcal{H}_1(\frac{\lambda}{4})\cap \mathcal{H}_2(\frac{\lambda}{4})\cap\mathcal{V}\left(\frac{\phi_0}{2}\right)\right) \geq 1 - \Prob\left(\mathcal{H}_1(\frac{\lambda}{4})^c\right) - \Prob\left(\mathcal{H}_2(\frac{\lambda}{4})^c \right)- \Prob\left(\mathcal{V}\left(\frac{\phi_0}{2}\right)^c \right)$ to bound the individual probabilities.

We start with $\Prob\left(\mathcal{H}_1(\frac{\lambda}{4})^c\right)= \Prob\left( 2 P_\alpha^*\left( \hat{\bV}_h^{(d)\prime}\hat{\bDelta}_\Sigma\right) > \frac{\lambda}{4}\right) \leq \Prob\left( \norm{\hat{\bV}_h^{(d)\prime}\hat{\bDelta}_\Sigma}_\infty > \frac{\lambda}{8} \right)$. The last inequality is true because continuing from \eqref{eq:Pstarupperbound}, we have
\begin{equation}
P_\alpha^*(\bc)\leq \alpha \max_{g \in \setG}\frac{\norm{\bc_g}_2}{\sqrt{\abs{g}}} + (1-\alpha)\norm{\bc}_\infty \leq \alpha \max_{g \in \setG}\frac{\sqrt{|g|} \norm{\bc_g}_{\infty}}{\sqrt{\abs{g}}} + (1-\alpha) \norm{\bc}_\infty = \norm{\bc}_\infty
\label{eq:maxnorm}
\end{equation}
for any vector $\bc$. Subsequently, we have
\begin{equation}
\begin{aligned}
 \Prob&\left( \norm{\hat{\bV}_h^{(d)\prime}\hat{\bDelta}_\Sigma}_\infty > \frac{\lambda}{8} \right)
 = \Prob\left( \norm{ \left[(\hat{\bV}_h^{(d)} - \bV^{(d)}) + \bV^{(d)}\right]\tran \hat{\bDelta}_\Sigma}_\infty > \frac{\lambda}{8} \right) \\
 &\leq \Prob\left( \norm{\hat{\bV}_h - \bV}_1\norm{\hat{\bDelta}_\Sigma}_\infty + C_V \norm{\hat{\bDelta}_\Sigma}_\infty > \frac{\lambda}{8} \right)
 \leq \Prob\left( \normoneinf{\hat{\bV}_h - \bV}^2 + C_V\normoneinf{\hat{\bV}_h - \bV} > \frac{\lambda}{8} \right)\\
 &\leq \Prob\left( \normoneinf{\hat{\bV}_h - \bV} > \frac{\lambda^{1/2}}{4} \right) + \Prob\left(\normoneinf{\hat{\bV}_h - \bV} > \frac{\lambda}{16C_V} \right),
\end{aligned}
\label{eq:H1}
\end{equation}
exploiting block-diagonality of $\hat{\bV}_h^{(d)} - \bV^{(d)}$ and $\bV^{(d)}$ such that $\norm{\hat{\bV}_h^{(d)} - \bV^{(d)}}_1 = \max_{1\leq i \leq N} \norm{\hat\bV_{i,h} - \bV_i}_1 \leq \normoneinf{\hat\bV_h - \bV}$ and $\norm{\bV^{(d)}}_1\leq \normoneinf{\bV}\leq C_V$ (explicitly assumed in Theorem \ref{Thm:sgl}). Bounds for the final RHS terms in \eqref{eq:H1} are available from Theorem \ref{th:diag_approx}.

Second, we have
\begin{equation}
\begin{aligned}
 \Prob\left(\mathcal{H}_2\left(\frac{\lambda}{4}\right)^c \right)
 &\leq  \Prob\left( \norm{\hat{\bV}_h^{(d)\prime}\hat\bDelta_V \bc}_\infty > \frac{\lambda}{8} \right)
 \leq \Prob\left( \norm{\hat{\bV}_h^{(d)\prime}\left[\hat{\bV}_h^{(d)} - \bV^{(d)}\right]}_\infty > \frac{\lambda}{8\norm{\bc}_\infty} \right)\\
 &\leq \Prob\left( \norm{\hat{\bV}_h - \bV}_1\norm{\hat{\bV}_h - \bV}_\infty +  C_V\norm{\hat{\bV}_h - \bV}_\infty> \frac{\lambda}{8\norm{\bc}_\infty} \right)\\
 &\leq \Prob\left( \normoneinf{\hat{\bV}_h - \bV} > \frac{\lambda^{1/2}}{4} \right) + \Prob\left( \normoneinf{\hat{\bV}_h - \bV} > \frac{\lambda}{16 C_V} \right),
\end{aligned}
\label{eq:ProbabilitySetBoundH2}
\end{equation}
where the last line relies on the union bound and the fact that $\norm{\bc}_\infty < 1$ (implied by Assumption \ref{assump:stability}). Hence, the sets $\mathcal{H}_1\left(\frac{\lambda}{4}\right)^c$ and $\mathcal{H}_2\left(\frac{\lambda}{4}\right)^c$ admit the same probability bound.

Finally, $\Prob\left(\mathcal{V}\left(\frac{\phi_0}{2}\right)^c\right)=\Prob\left(\norm{\hat\bV_h -\bV}_1 > \frac{\phi_0}{2}\right)\leq \Prob\left(\normoneinf{\hat\bV_h -\bV} > \frac{\phi_0}{2}\right)$. Combining all previous results, we conclude
\begin{equation}
\begin{aligned}
\Prob &\left( \mathcal{H}_1\left(\frac{\lambda}{4} \right) \cap \mathcal{H}_2\left(\frac{\lambda}{4}\right) \cap \mathcal{V} \left(\frac{\phi_0}{2}\right) \right) \\ &\geq 1 - 2\Prob\left( \normoneinf{\hat{\bV}_h - \bV} > \frac{\lambda^{1/2}}{4} \right) - 2\Prob\left( \normoneinf{\hat{\bV}_h - \bV} > \frac{\lambda}{16C_V} \right) - \Prob\left(\normoneinf{\hat\bV_h -\bV} > \frac{\phi_0}{2}\right)\\
&\geq 1 - 5\Prob\left(\normoneinf{\hat\bV_h -\bV} > 6f(\lambda,\phi_0)\right),
\end{aligned}
\label{eq:probability_combined_sets}
\end{equation}
where $f(\lambda,\phi_0) = \min \left(\frac{\lambda^{1/2}}{24},\frac{\lambda}{96C_V},\frac{\phi_0}{12}\right)$. The proof is completed by evaluating the final probability in \eqref{eq:probability_combined_sets} using Theorem \ref{th:diag_approx}.
\end{proof}
    
\begin{proof}[\textbf{Proof of Corollary \ref{cor:splash_rates}}]
First, we derive the conditions under which the set on which the performance bound in Theorem \ref{Thm:sgl} holds occurs with probability converging to one. Under Assumption \ref{assump:momentcond}(b1), along with the remaining assumptions in Theorem \ref{Thm:sgl}, this probability is given by $1-\mathcal{P}_1(f(\lambda,\phi_0),N,T)$, where we recall from Theorem \ref{th:diag_approx} that
\begin{equation}\label{eq:P_1}
\begin{split}
    \mathcal{P}_1(f(\lambda,\phi_0),N,T) &= N^2\left(b_1 T^{(1-\delta)/3}+\frac{[2h(f(\lambda,\phi_0))+1]b_3}{f(\lambda,\phi_0)}\right) \exp\left(-\frac{T^{(1-\delta)/3}}{2b_1^2}\right)\\ 
    &\qquad + \frac{b_2 N^2[2h(f(\lambda,\phi_0))+1]^d}{f(\lambda,\phi_0)^d T^{\frac{\delta}{2}(d-1)}}\\
\end{split}
\end{equation}
Given that $\lambda \in O\left(T^{-q_\lambda}\right)$ with $q_\lambda > 0$, it follows immediately that
\begin{equation}\label{eq:f_order}
    f(\lambda,\phi_0) = \min \left(\frac{\lambda^{1/2}}{24},\frac{\lambda}{96C_V},\frac{\phi_0}{12}\right) = O\left(T^{-q_\lambda}\right).
\end{equation}
In addition, following the remark below Theorem \ref{th:diag_approx}, it holds that
\begin{equation}\label{eq:h_order}
    h\left(\frac{\lambda}{96C_V}\right) = O\left(\log\left(\lambda^{-1}\right)^2k_0\right) = O\left(\log(T)^2T^{q_k}\right).
\end{equation}
Based on \eqref{eq:f_order} and \eqref{eq:h_order}, it follows that the first RHS-term in \eqref{eq:P_1} converges to zero exponentially in $T$ for any $\delta<1$. The second RHS-term, however, converges to zero at most at a polynomial rate. Accordingly,
\begin{equation}\label{eq:P_1_orders}
\begin{split}
    \mathcal{P}_1(f(\lambda,\phi_0),N,T) = O\left(\frac{N^2[h\left(\frac{\lambda}{96C_V}\right)+1]^d}{\lambda^d T^{\frac{\delta}{2}(d-1)}}\right) = O\left(\log(T)^{2d}T^{2q_N + dq_k+dq_\lambda - \frac{\delta(d-1)}{2}}\right),
\end{split}
\end{equation}
where the second equality holds from the observation that for any $\delta<1$ the first two RHS terms in the first equality converge to zero at an exponential rate, whereas the third term may converge to zero at most at a polynomial rate. From \eqref{eq:P_1_orders}, it follows that $\mathcal{P}_1(f(\lambda,\phi_0),N,T) \to 0$ if 
\begin{equation*}
    2q_n + dq_k + dq_\lambda - \frac{\delta(d-1)}{2} < 0 \Rightarrow q_\lambda < -\frac{2q_N}{d} - q_k + \frac{\delta(d-1)}{2d}.
\end{equation*}
In a similar fashion, we derive the conditions under which $\mathcal{P}_2(f(\lambda,\phi_0) \to 0$, by noting that
\begin{equation}
\begin{split}
    \mathcal{P}_2(f(\lambda,\phi_0),N,T) = &N^2 \left[\frac{\kappa_1[2h(f(\lambda,\phi_0))+1]}{f(\lambda,\phi_0)} + \frac{2}{\kappa_2}\left(\frac{T f(\lambda,\phi_0)^2}{[2h(f(\lambda,\phi_0))+1]^2}\right)^{\frac{1}{7}} \right] \times \\
    &\exp\left(- \frac{1}{\kappa_3}\left(\frac{T f(\lambda,\phi_0)^2}{[2h(f(\lambda,\phi_0))+1]^2}\right)^{\frac{1}{7}} \right)
\end{split}
\end{equation}
converges to zero exponentially fast in $T$ if $\frac{T f(\lambda,\phi_0)^2}{[2h(f(\lambda,\phi_0))+1]^2}$ diverges at a polynomial rate in $T$. Making use of \eqref{eq:f_order} and \eqref{eq:h_order}, it follows that
\begin{equation*}
    \frac{T f(\lambda,\phi_0)^2}{[2h(f(\lambda,\phi_0))+1]^2} = O\left(\log(T)^{-4}T^{1-2q_\lambda - 2q_k}\right),
\end{equation*}
which translates to the condition $1-2q_\lambda - 2q_k > 0$, or $q_\lambda < \frac{1}{2} - q_k$. This establishes conditions (i) and (ii) in Corollary \ref{cor:splash_rates}.

We proceed by deriving the order of the performance bound in Theorem \ref{Thm:sgl}. Noting that
\begin{equation*}
    \sum_{g \in \mathcal{G}_S}\sqrt{\abs{g}} \leq \abs{\mathcal{G}_S}\max_{g \in \mathcal{G}_S} \sqrt{\abs{g}} = O\left(T^{q_g +q_N/2}\right),
\end{equation*}
it follows that
\begin{equation*}
    \bar{\omega} = O\left((1-\alpha)T^{q_g +q_N/2} + \alpha T^{q_S/2}\right). 
\end{equation*}
Then, by Theorem \ref{Thm:sgl},
\begin{equation*}
    \norm{\hat{\bV}_h^{(d)}(\hat{\bc}-\bc)}_2^2 \leq \frac{4\bar{\omega}_\alpha^2\lambda^2}{\phi_0^2} = O\left((1-\alpha)T^{2q_g +q_N - 2q_\lambda} + \alpha T^{q_S - 2q_\lambda}\right)
\end{equation*}
and
\begin{equation*}
    (1-\alpha)\sum_{g \in \mathcal{G}} \sqrt{\abs{g}}\norm{\hat{\bc}_g - \bc_g}_2 + \alpha\norm{\hat{\bc}-\bc}_1 \leq \frac{4\bar{\omega}_\alpha^2\lambda}{\phi_0^2} = O\left((1-\alpha)T^{2q_g +q_N - q_\lambda} + \alpha T^{q_S - q_\lambda}\right),
\end{equation*}
on a set with probability $1-\mathcal{P}_1(f(\lambda,\phi_0),N,T)$ or $1-\mathcal{P}_2(f(\lambda,\phi_0),N,T)$, depending on whether Assumption \ref{assump:momentcond}(b1) or \ref{assump:momentcond}(b2) applies, respectively. Since we have shown that both $\mathcal{P}_1(f(\lambda,\phi_0),N,T) \to 0$ and $\mathcal{P}_2(f(\lambda,\phi_0),N,T) \to 0$ under the conditions imposed in Corollary \ref{cor:splash_rates}, the proof is complete.
\end{proof}
\end{small}

\end{appendices}

\makeatletter\@input{xx.tex}\makeatother
\end{document}


\maketitle

\tableofcontents

\appendix 
\section{Auxiliary Lemmas related to Theorem \ref{th:diag_approx}}

\emph{\textbf{Note:} Lemmas \ref{lemmamomenttruncation}--\ref{lemma:covmatrixelements} are strongly related to \cite{Masini2019}. That is, small differences occur because there are additional constants that originate from rewriting our Spatial VAR into reduced form. To keep track of all constants, and in order to have all results in the notation of the main paper, we decided to keep the full derivations in this Supplement.}

\begin{lemma}[Moment and Truncation Bounds]\label{lemmamomenttruncation} \
Define $\by_t^{(k)}$ by truncating the VMA($\infty$) representation for $\by_t$ at the $k$\textsuperscript{th} lag, that is $\by_t^{(k)}= \sum_{j=0}^{k-1} \bC^j \bD \bepsilon_{t-j}$. We have
$$
\text{(a) }\norm{y_{it}}_\psi \leq \mu_\psi c_D c_1,
\qquad 
\text{(b) }\|y_{it}^{(k)}\|_\psi \leq \mu_\psi c_D c_1,
\qquad
\text{(c) }\|y_{i,t}-y_{i,t}^{(k)}\|_\psi \leq \mu_\psi c_D c_1 e^{-\gamma_c k}.
$$
under Assumptions \ref{assump:stability} and \ref{assump:momentcond}.
\label{lemma:boundsonmoments}
\end{lemma}
\begin{proof}
 \textbf{(a)} We have $y_{it} = \be_i\tran \by_t$. Using the VMA($\infty$) representation of the spatio-temporal vector autoregression, we have for any Orlicz norm
 \begin{equation}
 \begin{aligned}
  &\norm{y_{it}}_\psi
    = \norm{\be_i\tran \by_t}_\psi
    =  \norm{ \sum_{j=0}^\infty \be_i\tran \bC^j \bD \bepsilon_{t-j} }_\psi
    \leq \sum_{j=0}^\infty \sum_{k=1}^N \left|\be_i^\prime \bC^j \bD \be_k \right| \norm{\epsilon_{k,t-j}}_\psi, \\
    &\sup_{i,t} \norm{y_{it}}_\psi \leq \left( \sup_{i,t} \norm{\epsilon_{it}}_\psi \right)\sum_{j=0}^\infty \norm{\bC^j \bD}_\infty
    \leq  \left( \sup_{i,t} \norm{\epsilon_{it}}_\psi \right) \norm{\bD}_\infty \sum_{j=0}^\infty \norm{\bC^j}_\infty
    \leq \mu_\psi c_D c_1,
 \end{aligned}
 \label{eq:orliczbound}
 \end{equation}
 by the norm properties of the Orlicz norm, Assumption \ref{assump:stability}(a)-(b), and Assumption \ref{assump:momentcond}(b). \textbf{(b)} $\|y_{it}^{(k)}\|_\psi = \|\be_i\tran \by_t^{(k)} \|_\psi \leq \sum_{j=0}^{k-1} \sum_{k=1}^N \left|\be_i^\prime \bC^j \bD \be_k \right| \norm{\epsilon_{k,t-j}}_\psi \leq \sum_{j=0}^\infty \sum_{k=1}^N \left|\be_i^\prime \bC^j \bD \be_k \right| \norm{\epsilon_{k,t-j}}_\psi$ and continue as in \eqref{eq:orliczbound}. \textbf{(c)} A small variation on \eqref{eq:orliczbound} provides
 $$
 \norm{y_{i,t}-y_{i,t}^{(k)}}_\psi
  = \norm{\sum_{j=k}^\infty \be_i^\prime \bC^j \bD \bepsilon_{t-j}}_\psi
  \leq \left( \sup_{i,t} \norm{\epsilon_{it}}_\psi \right)\norm{\bD}_\infty \left(\sum_{j=k}^\infty \norm{\bC^j}_\infty \right)\leq \mu_\psi c_D c_1 e^{-\gamma_c k}.
$$
The proof is complete.
 \end{proof}
%

\begin{lemma}[Sample Covariance Matrices 1]
Define the $N(p+1)$ vector $\bxi_t = (\by_t\tran,\by_{t-1}\tran,\ldots,\by_{t-p}\tran)\tran$. For all $j=1,\ldots,p$, the elements of $\widehat{\bSigma}_j = \frac{1}{T}\sum_{t=p+1}^T\by_t \by_{t-j}\tran$ can all be expressed as $\frac{1}{T}\sum_{t=p+1}^T \xi_{it} \xi_{jt}$ after an appropriate choice of $(i,j)$. We will look at $\sum_{t=p+1}^T \xi_{it} \xi_{jt}$ (so ignoring the multiplicative factor $\frac{1}{T}$). 
\begin{enumerate}[(a)]
    \item We have $\sum_{t=p+1}^T \xi_{it} \xi_{jt} - \E(\xi_{it}\xi_{jt}) = S_{1T} + S_{2T}$ where $S_{1T} = \sum_{s=0}^{m-1} \sum_{t=p+1}^T V_{t,s}$ with
    $$
     V_{t,s}=\E\left(\left. \xi_{it} \xi_{jt} \ind{ |\xi_{it} \xi_{jt} | < C}\right| \calF_{t-s} \right) -  \E\left(\left.\xi_{it} \xi_{jt} \ind{ |\xi_{it} \xi_{jt} | < C}\right| \calF_{t-s-1} \right)
    $$
    (a martingale difference sequence satisfying $|V_{t,s}|\leq 2 C$), and
    \begin{equation}
     S_{2T}= \sum_{t=p+1}^T \xi_{it} \xi_{jt} \ind{|\xi_{it} \xi_{jt} | \geq C} + \sum_{t=p+1}^T \E\left(\left.\xi_{it} \xi_{jt} \ind{ |\xi_{it} \xi_{jt} | < C} \right| \calF_{t-m} \right) - \E\left( \xi_{it} \xi_{jt} \right).
    \label{eq:S2}
    \end{equation}
    \item $\Prob\left(|S_{1T}| \geq \frac{T \epsilon}{2}  \right) \leq 2m \exp\left( - \frac{T \epsilon^2}{8 m^2 C^2} \right)$.
    \item We have $\frac{1}{T}\E|S_{2T}|\leq I_T + II_T + III_T + IV_T$ with
     \begin{itemize}
         \item $I_T =\frac{1}{T}\E\left|\sum_{t=p+1}^T \xi_{it} \xi_{jt} \ind{|\xi_{it} \xi_{jt} | \geq C}- \E\left(\left. \xi_{it}^{(k)} \xi_{jt}^{(k)} \ind{|\xi_{it} \xi_{jt} | \geq C}\right| \calF_{t-m} \right) \right|$,
         \item $II_T = \frac{1}{T}\E\left|\sum_{t=p+1}^T \E\left(\left. \xi_{it} \xi_{jt} \ind{ |\xi_{it} \xi_{jt} | < C} \right| \calF_{t-m} \right) - \E\left(\left. \xi_{it}^{(k)} \xi_{jt}^{(k)} \ind{|\xi_{it} \xi_{jt} | < C}\right| \calF_{t-m} \right) \right|$,
         \item $III_T = \frac{1}{T}\left|\sum_{t=p+1}^T \E(\xi_{it}^{(k)} \xi_{jt}^{(k)})- \E\left( \xi_{it} \xi_{jt} \right)\right|$,
         \item $IV_T= \frac{1}{T} \E\left| \sum_{t=p+1}^T \E\left(\left. \xi_{it}^{(k)} \xi_{jt}^{(k)}\right| \calF_{t-m} \right) - \E(\xi_{it}^{(k)} \xi_{jt}^{(k)})  \right|$.
     \end{itemize}
     \item If Assumptions \ref{assump:stability} and \ref{assump:momentcond}(b1) (polynomial tails) hold, then we have: $I_T\leq \frac{2 (\mu_{2d} c_D c_1)^{2d} }{C^{d-1}}$, $II_T\leq 2 (\mu_2 c_D c_1)^2 e^{-\gamma_c k}$, $III_T\leq 2 (\mu_2 c_D c_1)^2 e^{-\gamma_c k}$, and $IV_T \leq 6 c_2 (\mu_{4} c_D c_1)^2 e^{-\frac{\gamma_\alpha}{2}(m-k-p)}$.
     \item If Assumptions \ref{assump:stability} and \ref{assump:momentcond}(b2) (subexponential tails) hold, then we have:
     $$
      I_T\leq 2304 (\mu_\infty c_D c_1)^2 \exp\left( - \sqrt{C}/\mu_\infty c_D c_1 \right),
      $$
      $II_T\leq 8 (\mu_\infty c_D c_1)^2 e^{-\gamma_c k}$, $III_T\leq 8 (\mu_\infty c_D c_1)^2 e^{-\gamma_c k}$, and $IV_T\leq 3456 c_2 (\mu_{\infty} c_D c_1)^2 e^{-\frac{\gamma_\alpha}{2}(m-k-p)}$.
\end{enumerate}
\label{lemma:covarelements}
\end{lemma}
\begin{proof}
\textbf{(a)} Start from $\xi_{it} \xi_{jt} = \xi_{it} \xi_{jt} \ind{|\xi_{it} \xi_{jt}| <C}+\xi_{it} \xi_{jt} \ind{|\xi_{it} \xi_{jt}| \geq C}$ and develop the first term as
$$
\begin{aligned}
 \xi_{it} \xi_{jt} &\ind{|\xi_{it} \xi_{jt}| <C}
  = \E\left(\left. \xi_{it} \xi_{jt} \ind{|\xi_{it} \xi_{jt} | < C} \right| \calF_t \right) \\
  &= \sum_{s=0}^{m-1} \Bigg[ \underbrace{\E\left(\left. \xi_{it} \xi_{jt} \ind{|\xi_{it} \xi_{jt} | < C} \right| \calF_{t-s} \right) - \E\left(\left.\xi_{it} \xi_{jt} \ind{|\xi_{it} \xi_{jt} | < C} \right| \calF_{t-s-1} \right)}_{V_{t,s}} \Bigg] \\
  &\qquad+ \E\left(\left.\xi_{it} \xi_{jt} \ind{|\xi_{it} \xi_{jt} | < C} \right| \calF_{t-m} \right).
\end{aligned}
$$
Subsequently, combine these results and sum over $t=p+1,\ldots,T$. \textbf{(b)} We have
$$
\begin{aligned}
 \Prob\left( |S_{1T}| \geq \frac{T\epsilon}{2} \right)
  &= \Prob\left( \left| \sum_{s=0}^{m-1} \sum_{t=p+1}^T V_{t,s} \right| \geq \frac{T\epsilon}{2} \right)
  \leq \sum_{s=0}^{m-1} \Prob\left(\left| \sum_{t=p+1}^T V_{t,s} \right| \geq \frac{T\epsilon}{2m} \right) \\
  &\leq \sum_{s=0}^{m-1} 2\exp\left( - \frac{2(T\epsilon/2m)^2}{(2C)^2(T-p)} \right)
  \leq 2 m \exp\left( - \frac{T\epsilon^2}{ 8 m^2 C^2} \right),
\end{aligned}
$$
where the second line follows from an application of the Azuma-Hoeffding inequality (see, e.g. corollary 2.20 in \cite{Wainwright2019}). \textbf{(c)} The result requires the use of the triangle inequality and some subtracting and adding of identical terms:
$$
\begin{aligned}
 &\frac{1}{T}\E|S_{2T}|
  = \frac{1}{T}\E\left|\sum_{t=p+1}^T \xi_{it} \xi_{jt} \ind{|\xi_{it} \xi_{jt} | \geq C} + \sum_{t=p+1}^T \E\left(\left.\xi_{it} \xi_{jt} \ind{ |\xi_{it} \xi_{jt} | < C} \right| \calF_{t-m} \right) - \E\left( \xi_{it} \xi_{jt} \right) \right| \\
  &\leq  \underbrace{\frac{1}{T}\E\left|\sum_{t=p+1}^T \xi_{it} \xi_{jt} \ind{|\xi_{it} \xi_{jt} | \geq C}- \E\left(\left. \xi_{it}^{(k)} \xi_{jt}^{(k)} \ind{|\xi_{it} \xi_{jt} | \geq C}\right| \calF_{t-m} \right) \right|}_{I_T} \\
  &\qquad+\frac{1}{T}\E\left|\sum_{t=p+1}^T  \E\left(\left. \xi_{it} \xi_{jt} \ind{ |\xi_{it} \xi_{jt} | < C} \right| \calF_{t-m} \right) + \E\left(\left. \xi_{it}^{(k)} \xi_{jt}^{(k)} \ind{|\xi_{it} \xi_{jt} | \geq C}\right| \calF_{t-m} \right) - \E\left( \xi_{it} \xi_{jt} \right) \right| \\
  &= I_T + \frac{1}{T}\E\Bigg|\sum_{t=p+1}^T \E\left(\left.\xi_{it} \xi_{jt} \ind{ |\xi_{it} \xi_{jt} | < C} \right| \calF_{t-m} \right) - \E\left(\left. \xi_{it}^{(k)} \xi_{jt}^{(k)} \ind{|\xi_{it} \xi_{jt} | < C}\right| \calF_{t-m} \right) \\
  &\qquad+ \E\left(\left. \xi_{it}^{(k)} \xi_{jt}^{(k)}\right| \calF_{t-m} \right) - \E\left( \xi_{it} \xi_{jt} \right) \Bigg| \\
  &\leq I_T + \underbrace{\frac{1}{T}\E\left|\sum_{t=p+1}^T \E\left(\left. \xi_{it} \xi_{jt} \ind{ |\xi_{it} \xi_{jt} | < C} \right| \calF_{t-m} \right) - \E\left(\left. \xi_{it}^{(k)} \xi_{jt}^{(k)} \ind{|\xi_{it} \xi_{jt} | < C}\right| \calF_{t-m} \right) \right|}_{II_T} \\
  &\qquad+\frac{1}{T}\E\left|\sum_{t=p+1}^T \E\left(\left. \xi_{it}^{(k)} \xi_{jt}^{(k)}\right| \calF_{t-m} \right) - \E\left( \xi_{it} \xi_{jt} \right) \right| \\
  &= I_T + II_T + \frac{1}{T}\E\left| \sum_{t=p+1}^T \E( \xi_{it}^{(k)} \xi_{jt}^{(k)})- \E\left( \xi_{it} \xi_{jt} \right) + \E\left(\left. \xi_{it}^{(k)} \xi_{jt}^{(k)}\right| \calF_{t-m} \right) - \E( \xi_{it}^{(k)} \xi_{jt}^{(k)}) \right| \\
  &\leq I_T + II_T + \underbrace{\frac{1}{T}\left|\sum_{t=p+1}^T \E( \xi_{it}^{(k)} \xi_{jt}^{(k)})- \E\left( \xi_{it} \xi_{jt} \right)  \right|}_{III_T}
  + \underbrace{\frac{1}{T} \E\left| \sum_{t=p+1}^T \E\left(\left. \xi_{it}^{(k)} \xi_{jt}^{(k)}\right| \calF_{t-m} \right) - \E( \xi_{it}^{(k)} \xi_{jt}^{(k)})  \right|}_{IV_T}
\end{aligned}
$$
The indicated terms coincide with those defined in Lemma \ref{lemma:covarelements}(c). \textbf{(d)} For any random variable $X$ with $\norm{X}_d<\infty$ ($d>1$), it holds that
\begin{equation}
 \E\Big(|X|\ind{|X|\geq C} \Big)
  \leq \norm{X}_d \Big[ \Prob(|X|\geq C) \Big]^{(d-1)/d} \leq 
  \norm{X}_d \left( \frac{\norm{X}_d^d}{C^d} \right)^{(d-1)/d}
    = \frac{\norm{X}_d^d}{C^{d-1}},
\label{eq:boundXind}
\end{equation}
using H\"{o}lder's and Markov's inequality in the first and second inequality, respectively. By the conditional Jensen inequality, \eqref{eq:boundXind}, Cauchy-Schwartz and Lemma \ref{lemma:boundsonmoments}, we can indeed bound $I_T$ as
$$
\begin{aligned}
 I_T &\leq \frac{1}{T} \sum_{t=p+1}^T \E \big|\xi_{it} \xi_{jt} \ind{|\xi_{it} \xi_{jt} | \geq C} \big| + \E\big| \xi_{it}^{(k)} \xi_{jt}^{(k)} \ind{|\xi_{it} \xi_{jt} | \geq C}\big|
 \stackrel{\eqref{eq:boundXind}}{\leq} \frac{1}{T} \sum_{t=p+1}^T\frac{\| \xi_{it} \xi_{jt}\|_d^d+\|\xi_{it}^{(k)}\xi_{jt}^{(k)}\|_d^d}{C^{d-1}} \\
 &\leq \frac{1}{T} \sum_{t=p+1}^T\frac{\|\xi_{it}\|_{2d}^d \;\|\xi_{jt}\|_{2d}^d + \|\xi_{it}^{(k)}\|_{2d}^d \; \|\xi_{jt}^{(k)}\|_{2d}^d}{C^{d-1}} \stackrel{\eqref{lemma:boundsonmoments}}{\leq} \frac{1}{T} \sum_{t=p+1}^T \frac{2 (\mu_{2d} c_D c_1)^{2d} }{C^{d-1}} \leq \frac{2 (\mu_{2d} c_D c_1)^{2d} }{C^{d-1}}.
\end{aligned}
$$
For $II_T$, the triangle and the conditional Jensen's inequality give
$$
II_T
 \leq \frac{1}{T}\sum_{t=p+1}^T \E \left| \E\left(\left. ( \xi_{it} \xi_{jt} - \xi_{it}^{(k)} \xi_{jt}^{(k)} ) \ind{|\xi_{it} \xi_{jt}|<C}\right| \calF_{t-m}\right) \right|
 \leq \frac{1}{T} \sum_{t=p+1}^T \E\left|\xi_{it}\xi_{jt} - \xi_{it}^{(k)}\xi_{jt}^{(k)}\right|,
$$
and its RHS is further bounded as
\begin{equation}
\begin{aligned}
  \frac{1}{T} &\sum_{t=p+1}^T \E\left|\xi_{it}\xi_{jt} - \xi_{it}^{(k)}\xi_{jt}^{(k)}\right|
   = \frac{1}{T} \sum_{t=p+1}^T \E\left|\xi_{it}\xi_{jt} - \xi_{it} \xi_{jt}^{(k)} + \xi_{it} \xi_{jt}^{(k)} - \xi_{it}^{(k)}\xi_{jt}^{(k)}\right| \\
   &\leq \frac{1}{T} \sum_{t=p+1}^T \E\left| \xi_{it}(\xi_{jt}-\xi_{jt}^{(k)}) \right| + \E\left| (\xi_{it}-\xi_{it}^{(k)}) \xi_{jt}^{(k)}\right| \\
   &\leq \frac{1}{T} \sum_{t=p+1}^T \norm{\xi_{it} }_2 \norm{\xi_{jt}-\xi_{jt}^{(k)}}_2 + \norm{\xi_{it}-\xi_{it}^{(k)}}_2 \norm{ \xi_{jt}^{(k)} }_2 \leq 2 (\mu_2 c_D c_1)^2 e^{-\gamma_c k}.
\end{aligned}
\label{eq:boundsIITandIIIT}
\end{equation}
The last inequality uses the results in Lemma \ref{lemma:boundsonmoments} based on the Orlicz norm induced by $\psi(x)=x^2$. As $III_T\leq \frac{1}{T}\sum_{t=p+1}^T \E\left| \xi_{it}^{(k)} \xi_{jt}^{(k)} - \xi_{it} \xi_{jt} \right|$ (triangle inequality), the bound for $III_T$ follows immediately from \eqref{eq:boundsIITandIIIT}. Finally, for $IV_T$, we have $IV_T\leq \frac{1}{T} \sum_{t=p+1}^T \E\Big| \E\big[ \xi_{it}^{(k)} \xi_{jt}^{(k)} \big| \calF_{t-m} \big] -\E\big[ \xi_{it}^{(k)} \xi_{jt}^{(k)} \big] \Big|$. Since $\{\xi_{it}^{(k)} \xi_{jt}^{(k)}\}$ is $\alpha$-mixing with exponentially decaying mixing coefficients $\alpha_m^{(k)}$, we will use theorem 14.2 from \cite{Davidson1994} with $p=1$ and $r=2$:\footnote{Each $\xi_{it}^{(k)} \xi_{jt}^{(k)}$ is a weighted linear combination of $\epsilon_t,\epsilon_{t-1},\ldots,\epsilon_{t-p-k+1}$. According to theorem 14.1 in \cite{Davidson1994}, the mixing coefficients of $\xi_{it}^{(k)} \xi_{jt}^{(k)}$, $\alpha_m^{(k)}$, satisfy $\alpha_m^{(k)}\leq \alpha_{m-p-k+1}$.}
\begin{equation}
\begin{aligned}
\E\Big| \E &\big[ \xi_{it}^{(k)} \xi_{jt}^{(k)} \big| \calF_{t-m} \big] -\E\big[ \xi_{it}^{(k)} \xi_{jt}^{(k)} \big] \Big|
 \leq 6 \; [\alpha_m^{(k)}]^{1-1/2} \;  \| \xi_{it}^{(k)} \xi_{jt}^{(k)} \|_2 \\
 &\leq 6 c_2 e^{-\frac{\gamma_\alpha}{2} (m-k-p+1)} \norm{\xi_{it}^{(k)}}_{4}  \norm{\xi_{jt}^{(k)}}_{4}
 \leq 6 c_2 (\mu_{4} c_D c_1)^2 e^{-\frac{\gamma_\alpha}{2}(m-k-p)}.
\end{aligned}
\end{equation}
\textbf{(e)} We now use $\norm{\cdot}_\psi$ to denote the Orlicz norm induced by the function $\psi(x)=\exp(x)-1$. We first derive an upper bound for $\E\left| \xi_{it}\xi_{jt} \ind{|\xi_{it}\xi_{jt}|>C} \right|$. By the Cauchy-Schwartz inequality, we find
$$
\begin{aligned}
 \E\left| \xi_{it}\xi_{jt} \ind{|\xi_{it}\xi_{jt}|>C} \right|
  &\leq \norm{\xi_{it} \xi_{jt} }_2 \big[ \Prob( |\xi_{it} \xi_{jt}| \geq C) \big]^{1/2} \\
  &\leq \norm{\xi_{it}}_4 \norm{\xi_{jt}}_4 \Big[ \Prob\big( |\xi_{it}| \geq \sqrt{C}\big) + \Prob\big( |\xi_{jt}| \geq \sqrt{C}\big) \Big]^{1/2}.
\end{aligned}
$$
The moment bound on page 95 of \cite{vandervaartwellner1996} and the tail bound as in Exercise 2.18 in \cite{Wainwright2019} provide $\norm{\xi_{it}}_4 \leq \mu_4 c_D c_1 \leq 4! \mu_\infty c_D c_1$ and $\Prob(|\xi_{it}|\geq \sqrt{C}) \leq 2 \exp(-\sqrt{C}/\mu_\infty c_D c_1)$, respectively. The upper bound on $\E\left| \xi_{it}^{(k)} \xi_{jt}^{(k)} \ind{|\xi_{it} \xi_{jt}|>C} \right|$ is the same and follows along the same steps. Overall, we have
$$
\begin{aligned}
 I_T
 &\leq \frac{1}{T} \sum_{t=p+1}^T \E \big|\xi_{it} \xi_{jt} \ind{|\xi_{it} \xi_{jt} | \geq C} \big| + \E\big| \xi_{it}^{(k)} \xi_{jt}^{(k)} \ind{|\xi_{it} \xi_{jt} | \geq C}\big| \\
 &\leq \frac{1}{T}\sum_{t=p+1}^T 4 (4! \mu_\infty c_D c_1)^2 \exp\left( - \sqrt{C}/2\mu_\infty c_D c_1 \right)
 \leq 2304 (\mu_\infty c_D c_1)^2 \exp\left( - \sqrt{C}/2\mu_\infty c_D c_1 \right).
\end{aligned}
$$
For $II_T$, $III_T$ and $IV_T$, we use the same methods of proof but express the moments in terms of $\mu_\infty$, e.g. $\mu_2 \leq 2! \mu_{\infty}$. The claims are immediate.
\end{proof}

\begin{lemma}[Sample Covariance Matrices 2]\
\label{lemma:covmatrixelements}
\begin{enumerate}[(a)]
    \item If Assumptions \ref{assump:stability} and \ref{assump:momentcond}(b1) (polynomial tails) hold, then we have:
    $$
     \Prob\Big( \Big| \sum_{t=p+1}^T \xi_{it} \xi_{jt} -\E(\xi_{it} \xi_{jt}) \Big| >  T \epsilon \Big)
  \leq \left[b_1 T^{(1-\delta)/3}+\frac{b_3}{\epsilon}\right] \exp\left(-\frac{T^{(1-\delta)/3}}{2b_1^2}\right) + \frac{b_2}{\epsilon^d T^{\frac{\delta}{2}(d-1)} },
    $$ for some $0 < \delta < 1$.
    \item If Assumptions \ref{assump:stability} and \ref{assump:momentcond}(b2) (subexponential tails) hold, then we have:
    $$
     \Prob\Big( \Big| \sum_{t=p+1}^T \xi_{it} \xi_{jt} - \E(\xi_{it} \xi_{jt}) \Big| >  T \epsilon \Big)
 \leq \left[\frac{\kappa_1}{\epsilon} + \frac{2(T\epsilon^2)^{1/7}}{\kappa_2} \right] \exp\left(- \frac{(T\epsilon^2)^{1/7}}{\kappa_3} \right).
    $$
\end{enumerate}
Explicit expressions for the constants $b_1$, $b_2$, $b_3$, $\kappa_1$, $\kappa_2$ and $\kappa_3$ are provided in the proof below.
\end{lemma}
\begin{proof}
The starting point is
\begin{equation}
\begin{aligned}
 \Prob\Big( \Big| &\sum_{t=p+1}^T \xi_{it} \xi_{jt} -\E(\xi_{it} \xi_{jt}) \Big| >  T \epsilon \Big)
  = \Prob\left( \big| S_{1T} + S_{2T} \big| >  T \epsilon \right)
  \leq \Prob\left( |S_{1T}| > \frac{T \epsilon }{2} \right) + \Prob\left( |S_{2T}| > \frac{T \epsilon}{2} \right) \\
  &\leq \Prob\left( |S_{1T}| > \frac{T \epsilon }{2} \right) + \frac{2}{T\epsilon} \E|S_{2T}| \leq 2m \exp\left( - \frac{T \epsilon^2}{8 m^2 C^2} \right) + \frac{2}{\epsilon}\left( I_T + II_T + III_T + IV_T \right).
\end{aligned}
\label{eq:decompositionofprob}
\end{equation}
Upper bounds on $I_T$--$IV_T$ in the RHS of \eqref{eq:decompositionofprob} are already available in Lemma \ref{lemma:covarelements}. It remains to decide on: the tail cut-off $C$, the number of terms in the finite order VMA approximation $k$, and the number of martingales differences that approximate the mixingale $m$. \textbf{(a)} From Lemma \ref{lemma:covarelements}(d), we obtain
\begin{equation}
\label{eq:polynomialboundstep1}
\begin{aligned}
 II_T+III_T + IV_T &\leq 4 (\mu_2 c_D c_1)^2 e^{-\gamma_c k}+6 c_2 (\mu_{4} c_D c_1)^2 e^{-\frac{\gamma_\alpha}{2}(m-k-p)} \\
 &\leq 4 (\mu_2 c_D c_1)^2 e^{-\gamma_c k}+6 c_2 (\mu_{4} c_D c_1)^2 e^{-\frac{\gamma_\alpha}{2}(\frac{m}{d}-k)},
\end{aligned}
\end{equation}
where we assumed that $p$ is not too large, more specifically $p\leq \frac{d-1}{d}m$ (thus $\frac{m}{d}\leq m-p$). If we match the exponents (to get the fastest decay rate), then we take $\gamma_c k = \frac{\gamma_\alpha}{2}(\frac{m}{d}-k)$, or $k = \frac{1}{d}\frac{\gamma_\alpha}{\gamma_\alpha + 2\gamma_c} m$. Defining $a_1=2 (\mu_{2d} c_D c_1)^{2d}$, $a_2 = 4 (\mu_2 c_D c_1)^2+6 c_2 (\mu_{4} c_D c_1)^2$, and $a_3 = \frac{\gamma_c}{d}\frac{\gamma_\alpha}{\gamma_\alpha + 2\gamma_c}$, we conclude from \eqref{eq:decompositionofprob} that
$$
 \Prob\Big( \Big| \sum_{t=p+1}^T \xi_{it} \xi_{jt} -\E(\xi_{it} \xi_{jt}) \Big| >  T \epsilon \Big)
  \leq
  2m \exp\left( - \frac{T \epsilon^2}{8 m^2 C^2} \right)
  + \frac{2}{\epsilon}\left(\frac{a_1}{C^{d-1}} + a_2 \exp(-a_3 m) \right).
$$
Again matching exponents, we get $m=\frac{1}{2}\left(\frac{T \epsilon^2}{a_3 C^2}\right)^{1/3}$ and hence
$$
 \Prob\Big( \Big| \sum_{t=p+1}^T \xi_{it} \xi_{jt} -\E(\xi_{it} \xi_{jt}) \Big| >  T \epsilon \Big)
  \leq \left[b_1 \left( \frac{T\epsilon^2}{C^2} \right)^{1/3}+\frac{b_3}{\epsilon}\right] \exp\left(-\frac{1}{2b_1^2}\left(\frac{T\epsilon^2}{C^2} \right)^{1/3}\right) + \frac{b_2}{\epsilon C^{d-1}},
$$
with $b_1 = a_3^{-1/3}$, $b_2=2 a_1$, and $b_3=2 a_2$. Set $C=\epsilon T^{\delta/2}$ for some $0 < \delta <1$ to complete the proof. \textbf{(b)} This time we use the upper bounds in Lemma \ref{lemma:covarelements}(e) to obtain
$$
\begin{aligned}
 I_T&+II_T+III_T+IV_T\\
 &\leq 2304 (\mu_\infty c_D c_1)^2 e^{- \sqrt{C}/2\mu_\infty c_D c_1}
 + 16 (\mu_\infty c_D c_1)^2 e^{-\gamma_c k}
 + 3456 c_2 (\mu_{\infty} c_D c_1)^2 e^{-\frac{\gamma_\alpha}{2}(m-k-p)}, \\
 &\leq \left[2304 \vee 3456 c_2 \right](\mu_\infty c_D c_1)^2
 \left[e^{- \sqrt{C}/2\mu_\infty c_D c_1}+e^{-\gamma_c k}+ e^{-\frac{\gamma_\alpha}{2}(\frac{3}{4}m-k)} \right]
 ,
\end{aligned}
$$
where we (again) assumed that $p$ is not too large, i.e. $p \leq \frac{1}{4}m$. We subsequently set $\gamma_c k = \frac{\gamma_\alpha}{2}(\frac{3}{4} m -k)$ (or $k=\frac{3}{4} \frac{\gamma_\alpha}{\gamma_\alpha+2\gamma_c} m$) and $\gamma_c k = \frac{3}{4} \frac{\gamma_\alpha \gamma_c}{\gamma_\alpha+2\gamma_c} m = \sqrt{C}/2\mu_\infty c_D c_1$ (or $\sqrt{C}=\frac{3}{2} \mu_\infty c_D c_1 \frac{\gamma_\alpha \gamma_c}{\gamma_\alpha+2\gamma_c} m $). Defining $c_\psi=\frac{3}{4} \frac{\gamma_\alpha \gamma_c}{\gamma_\alpha+2\gamma_c}$, we find $I_T+II_T+III_T+IV_T \leq 3\left[2304 \vee 3456 c_2 \right](\mu_\infty c_D c_1)^2 e^{-c_\psi m}$, and
$$
\begin{aligned}
 \Prob\Big( \Big| \sum_{t=p+1}^T \xi_{it} \xi_{jt} - &\E(\xi_{it} \xi_{jt}) \Big| >  T \epsilon \Big)
 \leq 2m \exp\left( - \frac{T \epsilon^2}{8 m^2 C^2} \right)
 + \frac{6}{\epsilon}\left[2304 \vee 3456 c_2 \right](\mu_\infty c_D c_1)^2 e^{- c_\psi m} \\
 &= 2m \exp\left( - \frac{T\epsilon^2}{8 (\mu_\infty c_D c_1 c_\psi)^4 m^6} \right)
 +\frac{6}{\epsilon}\left[2304 \vee 3456 c_2 \right](\mu_\infty c_D c_1)^2 e^{- c_\psi m}.
\end{aligned}
$$
The exponents are now equal when $m=\left( \frac{T\epsilon^2}{8 c_\psi (\mu_\infty c_D c_1 c_\psi)^4} \right)^{1/7}$. For this choice of $m$, we conclude
$$
 \Prob\Big( \Big| \sum_{t=p+1}^T \xi_{it} \xi_{jt} - \E(\xi_{it} \xi_{jt}) \Big| >  T \epsilon \Big)
 \leq \left[\frac{\kappa_1}{\epsilon} + \frac{2(T\epsilon^2)^{1/7}}{\kappa_2} \right] \exp\left(- \frac{(T\epsilon^2)^{1/7}}{\kappa_3} \right),
$$
with $\kappa_1=6\left[2304 \vee 3456 c_2 \right](\mu_\infty c_D c_1)^2$, $\kappa_2=[8 c_\psi (\mu_\infty c_D c_1 c_\psi)^4]^{1/7}$, and $\kappa_3 = \kappa_2/c_\psi$.
\end{proof}

\section{Miscellaneous Lemmas}
\begin{lemma}[Matrix $L_2$-norm Inequalities] \
 \begin{enumerate}[(a)]
     \item Let $\bA\in \sR^{N\times M_1}$ and $\bB\in \sR^{N \times M_2}$ and construct $\bC\in\sR^{N\times(M_1+M_2)}$ as $\bC = [\bA \; \bB]$. We have $\norm{\bC}_2 \leq \norm{\bA}_2+\norm{\bB}_2$.
     \item If $\bA\in \sR^{N\times M}$, and if the matrix $\bD\in\sR^{N\times P}$ ($P\leq M$) by deleting $M-P$ columns from the matrix $\bA$, then $\norm{\bD}_2 \leq \norm{\bA}_2$.
 \end{enumerate}
\label{lemma:L2normineqs}
\end{lemma}
\begin{proof}
\textbf{(a)} By definition, we have $\norm{\bC}_2^2 = \lambda_{max}(\bC\tran\bC)$. The nonzero eigenvalues of $\bC\tran\bC$ and $\bC\bC\tran$ are identical (see, e.g. Exercise 7.25(b) in \cite{Abadirmagnus2005}), and therefore
$$
\begin{aligned}
 \norm{\bC}_2^2
 &= \lambda_{max}(\bC\tran\bC)
 = \lambda_{max}(\bC\bC\tran)
 = \lambda_{max}(\bA \bA\tran + \bB\bB\tran)
 \leq \lambda_{max}(\bA\bA\tran) + \lambda_{max}(\bB\bB\tran) \\
 &= \lambda_{max}(\bA\tran\bA) + \lambda_{max}(\bB\tran\bB)
 = \norm{\bA}_2^2 + \norm{\bB}_2^2
\end{aligned}
$$
The result follows since $\norm{\bC}_2 = \sqrt{\norm{\bA}_2^2 + \norm{\bB}_2^2}\leq \norm{\bA}_2 + \norm{\bB}_2$ by the $c_r$-inequality. \textbf{(b)} The matrix $\bD\tran\bD$ is a principal submatrix of $\bA\tran\bA$. From Exercise 12.48 in \cite{Abadirmagnus2005} we conclude that $\norm{\bD}_2^2=\lambda_{max}(\bD\tran\bD)\leq \lambda_{max}(\bA\tran\bA)=\norm{\bA}_2^2$.
\end{proof}

\section{Assumptions and Proof for the Setting with Exogeneous Regressors}\label{appendix:exogeneousvariables}

\subsection{Assumptions governing the DGP with exogeneous regressors}
The proof of Theorem \ref{Thm:Sglwithexo} requires assumptions that are similar to those encountered in the main text. For the reader's convenience, we make the correspondence explicit by adhering to the original numbering while adding a ``*''. A short discussion of these assumption is found at the end of this section. 
\begin{customassumption}{1*}\label{assump:stability_exogeneous}\
\begin{enumerate}[(a)]
    \item $\normoneinf{\bA}=\max\left\{ \norm{\bA}_1,\norm{\bA}_\infty \right\} \leq \delta_A <1$. 
    \item $\normoneinf{\bB}\leq C_B$ and $\frac{C_B}{1-\delta_A}<1$.
    \item $ \norm{\bbeta^*}_1 = \max\{\norm{\bbeta_1}_1,\ldots,\norm{\bbeta_K}_1\} \leq C_\beta$.
\end{enumerate}
\end{customassumption}

\begin{customassumption}{2*}\label{assump:momentcond_exogeneous}\
\begin{enumerate}[(a)]
    \item The regressor $x_{i_1t_1,k}$ and innovation $\epsilon_{i_2,t_2}$ are independent of each other for all $1\leq i_1 < i_2\leq N$, all $1\leq t_1,t_2 \leq T$, and all $k=1,\ldots,K$.
    \item The sequence $\{\bepsilon_t\}$ is a covariance stationary, martingale difference process with respect to the filtration $\calF_{t-1}=\sigma\left(\bepsilon_{t-1},\bepsilon_{t-2},\ldots\right)$, and geometrically strong mixing ($\alpha$-mixing). That is, its mixing coefficients $\{\alpha_m\}$ satisfy $\alpha_m\leq c_2 e^{-\gamma_\alpha m }$ for all $m$ and some constants $c_2,\gamma_\alpha>0$. The largest and smallest eigenvalues of $\bSigma_\epsilon= \E(\bepsilon_1 \bepsilon_1^\prime)=(\sigma_{ij})_{i,j=1}^N$ are bounded away from $0$ and $\infty$. 
    \item For each $k=1,\ldots,K$, the sequence $\{\bx_{t,k}\}$ is covariance stationary and geometrically strong mixing ($\alpha$-mixing). That is, its mixing coefficients $\{\alpha_m^*\}$ satisfy $\alpha_m^*\leq c_2^* e^{-\gamma_\alpha^* m }$ for all $m$ and some constants $c_2^*,\gamma_\alpha^*>0$.
    \item Either one of the following assumptions holds:
    \begin{enumerate}
     \item[(d1)] For $\psi(x) = x^d$ and $d\geq 4$, we require $\sup_{i,t} \norm{\epsilon_{it}}_\psi = \left(\E |\epsilon_{it}|^d \right)^{1/d} \leq \mu_d < \infty$ and $\sup_{i,t} \norm{x_{it,k}}_\psi = \left(\E |x_{it,k}|^d \right)^{1/d} \leq \mu_d^* < \infty$ ($k=1,\ldots,K$).
     \item[(d2)] For $\psi(x) = \exp(x)-1$, we have $\sup_{i,t} \norm{\epsilon_{it}}_\psi \leq \mu_\infty < \infty$ and $\sup_{i,t} \norm{x_{it,k}}_\psi \leq \mu_\infty < \infty$ ($k=1,\ldots,K$).
    \end{enumerate}
 \end{enumerate}
\end{customassumption}

\begin{customassumption}{3*}\label{assump:bandedness_exogeneous}\
Recall $\bA = (a_{ij})_{i,j=1}^N$, $\bB = (b_{ij})_{i,j=1}^N$, and $\bSigma_\epsilon = (\sigma_{ij})_{i,j=1}^N$. We have: (a) $a_{ij} = b_{ij} = 0$ for all $\abs{i-j} > k$ with $k < \lfloor N/4 \rfloor$, and (b) $\sigma_{ij} = 0$ for all $\abs{i-j} > l_0$.
\end{customassumption}

\begin{customassumption}{4*}[Restricted minimum eigenvalue]\label{assump:min_eigen_bound_exogeneous}
Define the $((K+1)N\times(K+2)N)$ matrix
$$
 \bQ =
 \begin{bmatrix}
  \bSigma_1\tran    & \bSigma_0             & (\bSigma_1^{x_1 y})\tran  & \hdots    &(\bSigma_1^{x_K y})\tran\\  
  \bSigma_0^{x_1 y} & \bSigma_1^{x_1 y}     & \bSigma_0^{x_1 x_1}       & \hdots    &\bSigma_0^{x_1 x_K}\\ 
  \vdots            & \vdots                & \vdots                    & \ddots   & \vdots\\
  \bSigma_0^{x_K y} & \bSigma_1^{x_K y}     & \bSigma_0^{x_K x_1}       & \hdots    & \bSigma_0^{x_K x_K}
 \end{bmatrix}.
$$
We assume that
$$
\phi_\min^*(\bx) := \min_{\bx \in \mathbb{R}^{(K+2)N}: \mathcal{M}(\bx) \leq N+K} \frac{\norm{\bQ\bx}_2}{\norm{\bx}_2} \geq \phi_0^*.
$$
\end{customassumption}

Assumptions \ref{assump:stability_exogeneous}(a)--(b) are also encountered in the main paper. Including exogenous regressors, the reduced-form representation of the model becomes $\by_t =\bC \by_{t-1} +\bD\left( \bepsilon_t + \sum_{k=1}^K \diag(\bbeta_k) \bx_{t,k} \right)$. Assumption \ref{assump:stability_exogeneous}(c) merely assumes an upper bound on the magnitude of the exogenous regressor coefficients. Assumption \ref{assump:momentcond_exogeneous} controls dependencies over time, in the cross-section, and with the exogenous regressors. Similarly to assumption A8(i) in \cite{MaGuoWang2021}, we enforce exogeneity through Assumption \ref{assump:momentcond_exogeneous}(a). Assumption \ref{assump:momentcond_exogeneous}(b) is repeated from the main paper and its counterpart for the $\{\bx_{t,k}\}$'s is encountered as Assumption \ref{assump:momentcond_exogeneous}(c). All original moment conditions are also transferred to the exogenous regressors (Assumption \ref{assump:bandedness_exogeneous}). Assumptions \ref{assump:stability_exogeneous}--\ref{assump:min_eigen_bound_exogeneous} allow for an easy analogy with the assumptions in the main paper (at the cost of possibly being more restrictive than strictly necessary). That is, define $\bepsilon_t^*=\bepsilon_t + \sum_{k=1}^K \diag(\bbeta_k) \bx_{t,k}$ and note the linearity of $\bepsilon_t^*$ in $\bepsilon_t$ and $\{\bx_{t,1},\ldots,\bx_{t,K}\}$. Due to fixed $K$, all mixing properties and moments conditions simply carry over to $\bepsilon_t^*$ causing Lemma \ref{lemma:covmatrixelements} to remain valid.

\subsection{Proof of Theorem \ref{Thm:Sglwithexo}}
Set $\bbeta^*=(\bbeta_1\tran,\ldots,\bbeta_K\tran)\tran$, $\bU^{(d)}=\begin{bmatrix}\bW_1^{*(d)} &\cdots & \bW_K^{*(d)}\end{bmatrix}$, and $\bQ^{(d)}=\begin{bmatrix} \bV^{*(d)} & \bU^{(d)} \end{bmatrix}$ (and similar quantities with ``hats''). The residuals entering the $L_2$ component of the SPLASHX objective function are
\begin{equation}
\begin{aligned}
 \hat{\bsigma}^* - &\hat{\bV}^{*(d)}\bc - \sum_{k=1}^K \hat\bW_k^{*(d)} \bbeta_k
 = \hat{\bsigma}^* - \begin{bmatrix} \hat{\bV}^{*(d)} & \hat\bU^{(d)} \end{bmatrix} \begin{bmatrix} \bc \\ \bbeta^* \end{bmatrix} \\
 &= \underbrace{\bsigma^* - \begin{bmatrix} \bV^{*(d)} & \bU^{(d)} \end{bmatrix}\bq}_{=\bZeros}
 +\underbrace{\big[\hat{\bsigma}^*-\bsigma^*\big]}_{:=\hat\bDelta_1}
 - \underbrace{\big[\hat{\bV}^{*(d)}-\bV^{*(d)}\big]}_{:=\hat\bDelta_2} \bc
 - \underbrace{\big[\hat\bU^{(d)}-\bU^{(d)}\big]}_{:=\hat\bDelta_3}\bbeta^* \\
 &= \hat\bDelta_1 - \hat\bDelta_2 \bc - \hat\bDelta_3 \bbeta^*.
\end{aligned}
\label{eq:exostar}
\end{equation}
Exploiting the previously defined notation and \eqref{eq:exostar}, we have
\begin{equation}
\begin{aligned}
\Big\|\hat{\bsigma}^* - &\hat{\bV}^{*(d)}\hat\bc - \sum_{k=1}^K \hat\bW_k^{*(d)} \hat\bbeta_k\Big\|_2^2
=
 \norm{\hat{\bsigma}^* - \hat\bQ^{(d)}\hat\bq}_2^2 
=
\norm{ \left\{\hat{\bsigma}^* - \hat\bQ^{(d)} \bq\right\}-\left\{\hat\bQ^{(d)} \big(\hat\bq - \bq \big) \right\}}_2^2 \\
&= \norm{\hat{\bsigma}^* - \hat\bQ^{(d)} \bq}_2^2
+ \norm{\hat\bQ^{(d)} \big(\hat\bq - \bq \big)}_2^2
- 2 (\hat\bq-\bq)\tran \hat\bQ^{(d)\prime}\left(\hat\bDelta_1 - \hat\bDelta_2 \bc - \hat\bDelta_3 \bbeta^*\right).
\end{aligned}
\end{equation}

We subsequently adjust the penalty function $P_\alpha(\bc)$ to incorporate the penalty on the coefficients in front of the exogenous variables. Define the index set $g_k$ such that $\bq_{g_k}=\bbeta_k$ ($k=1\ldots,K$) and enlarge $\mathcal{G}$ to $\mathcal{G}^*=\mathcal{G}\cup\bigcup_{k=1}^K g_k$. As $\abs{\bbeta_k}=\abs{g_k}=N$, we have
$$
 P_\alpha(\bc) + \sum_{k=1}^K (1-\alpha)\sqrt{N}\norm{\bbeta_k}_2 + \alpha \norm{\bbeta_k}_1
 = (1-\alpha) \sum_{g\in\mathcal{G}^*} \sqrt{\abs{g}} \norm{\bq_g}_2 + \alpha \norm{\bq}_1 =: \widetilde{P}_\alpha(\bq).
$$
Using this newly defined norm $\widetilde{P}_\alpha(\bq)$, a more concise notation for the SPLASHX objective function is $\mathcal{L}_\alpha^*(\bq;\lambda)=\big\|\hat{\bsigma}^* - \hat\bQ^{(d)} \bq\big\|_2^2+\lambda \widetilde{P}_\alpha(\bq)$ and its dual norm $\widetilde{P}_{\alpha}^*(\bq)$ satisfies $\widetilde{P}_\alpha^*(\bq)\leq \norm{\bq}_\infty$ (see \eqref{eq:maxnorm} in the main text). From $\mathcal{L}_\alpha^*(\hat\bq;\lambda)\leq\mathcal{L}_\alpha^*(\bq;\lambda)$, it follows that
\begin{equation}
\begin{aligned}
    &\norm{\hat\bQ^{(d)} \big(\hat\bq - \bq \big)}_2^2+\lambda \widetilde{P}_\alpha(\hat\bq)
    \leq2 (\hat\bq-\bq)\tran \hat\bQ^{(d)\prime}\left(\hat\bDelta_1 - \hat\bDelta_2 \bc - \hat\bDelta_3 \bbeta^*\right)+\lambda \widetilde{P}_\alpha(\bq) \\
    &\quad\leq \widetilde{P}_\alpha(\hat\bq-\bq) \left[2\widetilde{P}_\alpha^*\left(\hat\bQ^{(d)\prime}\hat\bDelta_1\right) +2\widetilde{P}_\alpha^*\left(\hat\bQ^{(d)\prime}\hat\bDelta_2 \bc\right) +2\widetilde{P}_\alpha^*\left(\hat\bQ^{(d)\prime}\hat\bDelta_3 \bbeta^* \right)\right] + \lambda \widetilde{P}_\alpha(\bq) \\
    &\quad\leq \widetilde{P}_\alpha(\hat\bq-\bq)
    \left[2\norm{\hat\bQ^{(d)\prime}\hat\bDelta_1}_\infty
    +2\norm{\hat\bQ^{(d)\prime}\hat\bDelta_2 \bc}_\infty
    +2\norm{\hat\bQ^{(d)\prime}\hat\bDelta_3 \bbeta^*}_\infty\right] + \lambda \widetilde{P}_\alpha(\bq).
\end{aligned}
\label{eq:startingpointLassoInequality}
\end{equation}
We subsequently define the following three sets:
$$
\mathcal{H}_1^*(x) = \left\{2\norm{\hat\bQ^{(d)\prime}\hat\bDelta_1}_\infty \leq x \right\},\qquad\qquad
\mathcal{H}_2^*(x) = \left\{2\norm{\hat\bQ^{(d)\prime}\hat\bDelta_2 \bc}_\infty\leq x\right\}
$$
and
$$
\mathcal{H}_3^*(x) = \left\{2\norm{\hat\bQ^{(d)\prime}\hat\bDelta_3 \bbeta^*}_\infty\leq x \right\}.
$$
Rescaling \eqref{eq:startingpointLassoInequality} by a factor 2 and on $\mathcal{H}_1^*(\tfrac{\lambda}{6})\cap\mathcal{H}_2^*(\tfrac{\lambda}{6})\cap\mathcal{H}_3^*(\tfrac{\lambda}{6})$, we get
\begin{equation}
 2\norm{\hat\bQ^{(d)} \big(\hat\bq - \bq \big)}_2^2
 + 2 \lambda \widetilde{P}_\alpha(\hat\bq)
 \leq \lambda\widetilde{P}_\alpha(\hat\bq-\bq) +  2 \lambda \widetilde{P}_\alpha(\bq).
\label{eq:maininequalityexo}
\end{equation}
We first manipulate the term $\widetilde{P}_\alpha(\hat\bq)$ in the LHS of \eqref{eq:maininequalityexo}. As in \eqref{eq:P_rewrite_1} of the main paper, we get
$$
\begin{aligned}
 \widetilde{P}_\alpha(\hat\bq) \geq
\widetilde{P}_{\alpha,S^*}(\bq) + \widetilde{P}_{\alpha,S^{*c}}(\hat\bq-\bq)
 - \widetilde{P}_{\alpha,S^*}(\hat\bq-\bq),
\end{aligned}
$$
where $S^*$ is the index sets of all nonzero coefficients in $\bq$. For the term $\widetilde{P}_\alpha(\hat\bq-\bq)$ in the RHS of \eqref{eq:maininequalityexo}, if follows from \eqref{eq:P_rewrite_2} that $\widetilde{P}_\alpha\left(\hat{\bq}-\bq\right)=\widetilde{P}_{\alpha,S^*}\left(\hat{\bq}-\bq\right) +\widetilde{P}_{\alpha,S^{*c}}\left(\hat{\bq}-\bq\right)$. We obtain
\begin{equation}
 2\norm{\hat\bQ^{(d)} \big(\hat\bq - \bq \big)}_2^2
  + \lambda \widetilde{P}_{\alpha,S^{*c}}(\hat\bq-\bq)
  \leq 3 \lambda \widetilde{P}_{\alpha,S^*}(\hat\bq-\bq).
\label{eq:RECconsequenceEXO}
\end{equation}
Accordingly, with $N^*=N_c+NK$,
$$
(\hat\bq-\bq) \in \mathcal{C}_{N^*}^*(\mathcal{G},S) := \left\{ \bDelta \in \mathbb{R}^{N^*} : \widetilde{P}_{\alpha,S^{*c}}(\bDelta) \leq 3\widetilde{P}_{\alpha,S^*}(\bDelta) \right\}.
$$
Lemma \ref{Lemma:min_vs_restricted_eigen} now goes through with $\bar\omega_\alpha^*$ as in Theorem \ref{Thm:Sglwithexo}. For applicability of Lemma \ref{Lemma:V_transfer}, we define $\mathcal{V}^*(x)=\big\{ \big\|\hat\bQ - \bQ\big\|_2 \leq x\big\}$ and assume $\mathcal{V}^*(\tfrac{\phi_0^*}{2})$. Derivations identical to those in \eqref{eq:finalineqSGL} prove the main inequality of Theorem \ref{Thm:Sglwithexo}. 

To derive the probability of the inequality being true, we need the probability of the occurrence of the event $\mathcal{H}_1^*(\tfrac{\lambda}{6})\cap\mathcal{H}_2^*(\tfrac{\lambda}{6})\cap\mathcal{H}_3^*(\tfrac{\lambda}{6})\cap \mathcal{V}^*(\tfrac{\phi_0^*}{2})$. This probability is no smaller than
\begin{equation}
 1- \Prob\left(\mathcal{H}_1^*(\tfrac{\lambda}{6})^c\right) - \Prob\left(\mathcal{H}_2^*(\tfrac{\lambda}{6})^c\right) - \Prob\left(\mathcal{H}_3^*(\tfrac{\lambda}{6})^c\right) - \Prob\left(\mathcal{V}^*(\tfrac{\phi_0^*}{2})^c\right) ,
\label{eq:ExogeneousProbLowerBound}
\end{equation}
and all probabilities in \eqref{eq:ExogeneousProbLowerBound} can be retraced to probabilities involving $\|\hat\bQ-\bQ\|_{\vdash}$. That is, bounding terms as in \eqref{eq:H1}--\eqref{eq:ProbabilitySetBoundH2}, we find
\begin{equation}
\begin{aligned}
\Prob\left(\mathcal{H}_1^*(\tfrac{\lambda}{6})^c\right)
 &= \Prob\left( \norm{\hat\bQ^{(d)\prime}\hat\bDelta_1}_\infty  > \frac{\lambda}{12} \right)
 \leq \Prob\left( \normoneinf{\hat\bQ -\bQ}^2 + C_Q \normoneinf{\hat\bQ -\bQ} > \frac{\lambda}{12} \right) \\
 &\leq \Prob\left( \normoneinf{\hat\bQ -\bQ} > \frac{\lambda^{1/2}}{\sqrt{24}} \right)
 + \Prob\left( \normoneinf{\hat\bQ -\bQ} > \frac{\lambda}{24 C_Q} \right),
\end{aligned}
\label{eq:ProbH1Exo}
\end{equation}
\begin{equation}
\begin{aligned}
\Prob\left(\mathcal{H}_2^*(\tfrac{\lambda}{6})^c\right)
 &=\Prob\left( \norm{\hat\bQ^{(d)\prime}\hat\bDelta_2 \bc}_\infty > \frac{\lambda}{12}  \right) \leq \Prob\left( \norm{\hat\bQ^{(d)\prime} \left[\hat\bQ^{(d)} -\bQ^{(d)} \right] }_\infty > \frac{\lambda}{12}  \right) \\
   &\leq \Prob\left( \normoneinf{\hat\bQ -\bQ} > \frac{\lambda^{1/2}}{\sqrt{24}} \right)
 + \Prob\left( \normoneinf{\hat\bQ -\bQ} > \frac{\lambda}{24 C_Q} \right),
\end{aligned}
\label{eq:ProbH2Exo}
\end{equation}
and
\begin{equation}
\begin{aligned}
\Prob\left(\mathcal{H}_3^*(\tfrac{\lambda}{6})^c\right)
 &=\Prob\left( \norm{\hat\bQ^{(d)\prime}\hat\bDelta_3 \bbeta^*}_\infty > \frac{\lambda}{12}  \right) \\
 &\leq \Prob\left( \normoneinf{\hat\bQ -\bQ} > \frac{\lambda^{1/2}}{\sqrt{24} C_\beta} \right)
 + \Prob\left( \normoneinf{\hat\bQ -\bQ} > \frac{\lambda}{24 C_Q C_\beta} \right)
\end{aligned}
\label{eq:ProbH3Exo}
\end{equation}
For \eqref{eq:ProbH2Exo} and \eqref{eq:ProbH3Exo}, Assumptions \ref{assump:stability_exogeneous}(a)--(b) and Assumption \ref{assump:stability_exogeneous}(c) are needed to guarantee $\norm{\bc}_1\leq 1$ and $\norm{\bbeta^*}_1\leq C_\beta$, respectively. Also, $\Prob\left(\mathcal{V}^*(\tfrac{\phi_0^*}{4})^c \right)=\Prob\left(\norm{\hat\bQ - \bQ}_2 > \tfrac{\phi_0^*}{2} \right)\leq \Prob\left(\normoneinf{\hat\bQ - \bQ} > \tfrac{\phi_0^*}{2} \right)$. Given these bounds, \eqref{eq:ExogeneousProbLowerBound} translates to
$$
 \Prob\Big(\mathcal{H}_1^*(\tfrac{\lambda}{6})\cap\mathcal{H}_2^*(\tfrac{\lambda}{6})\cap\mathcal{H}_3^*(\tfrac{\lambda}{6})\cap \mathcal{V}^*(\tfrac{\phi_0^*}{2}) \Big)
 \geq 1 - 7 \Prob\left( \normoneinf{\hat\bQ -\bQ} > 6 f^*(\lambda, \phi_0^*)\right),
$$
where $f^*(\lambda,\phi_0^*) = \min\left\{\tfrac{\lambda^{1/2}}{12\sqrt{6}} , \tfrac{\lambda}{144C_Q}, \tfrac{\lambda^{1/2}}{12\sqrt{6} C_\beta} , \tfrac{\lambda}{144C_Q C_\beta}, \tfrac{\phi_0^*}{12} \right\}$.

All that remains is a lower bound for $\Prob\left( \normoneinf{\hat\bQ -\bQ} > x\right)$. We instead derive an upper bound for $\Prob\left( \normoneinf{\hat\bQ -\bQ} \leq x\right)$ as follows
$$
\begin{aligned}
 \Prob\Bigg( &\normoneinf{\hat\bQ -\bQ} \leq x\Bigg)
  =  \Prob\left( \max\left\{ \norm{\hat\bQ -\bQ}_1, \norm{\hat\bQ -\bQ}_\infty \right\} \leq x\right) \geq \Prob\left( (K+2)N \norm{\hat\bQ -\bQ}_{max} \leq x\right) \\
  &= 1 -  \Prob\left(\norm{\hat\bQ -\bQ}_{max} > \frac{x}{(K+2)N}\right) \\
  &\geq 1 -  \Prob\left( \bigcup_{1\leq i \leq (K+1)N, 1\leq j \leq (K+2)N } \left| \left[\hat\bQ -\bQ \right]_{ij} \right| > \frac{x}{(K+2)N}\right) \\
  &\geq 1 - \sum_{i=1}^{(K+1)N} \sum_{j=1}^{(K+2)N} \Prob\left(\left| \sum_{t=2}^T \xi_{it} \xi_{jt} - \E(\xi_{it} \xi_{jt}) \right| > \frac{T x}{(K+2)N} \right) \\
  &\geq
  \begin{cases}
    1 - (K+1)(K+2)N^2\left[ \left(b_1 T^{(1-\delta)/3}+\frac{(K+2)Nb_3}{x}\right) \exp\left(-\frac{T^{(1-\delta)/3}}{2b_1^2}\right) + \frac{b_2 (K+2)^d N^d}{x^d T^{\frac{\delta}{2}(d-1)}}\right] \\ \qquad\qquad\qquad\qquad\qquad\qquad\qquad\qquad\qquad\qquad\qquad\qquad (\text{polynomial tails}), \\
    1 - (K+1)(K+2)N^2 \left[\frac{\kappa_1(K+2)N}{x} + \frac{2}{\kappa_2}\left(\frac{T x^2}{(K+2)N}\right)^{1/7} \right] \exp\left(- \frac{1}{\kappa_3}\left(\frac{T x^2}{(K+2)^2 N^2}\right)^{1/7} \right) \\
    \qquad\qquad\qquad\qquad\qquad\qquad\qquad\qquad\qquad\qquad\qquad\qquad (\text{exponential tails}),
  \end{cases}
\end{aligned},
$$
where $\xi_{it}$ denotes a generic element of an autocovariance matrix (as in Lemma \ref{lemma:covmatrixelements}). In accordance with Theorem \ref{Thm:Sglwithexo}, these RHS probabilities are equivalent to $1-\mathcal{P}_1^*(x,N,T)$ (polynomial tails) and $1-\mathcal{P}_2^*(x,N,T)$ (exponential tails).

\clearpage
\section{Miscellaneous Figures}

\subsection{Additional Simulation Results}

\begin{figure}[h]
    \centering
    \includegraphics[width=\textwidth]{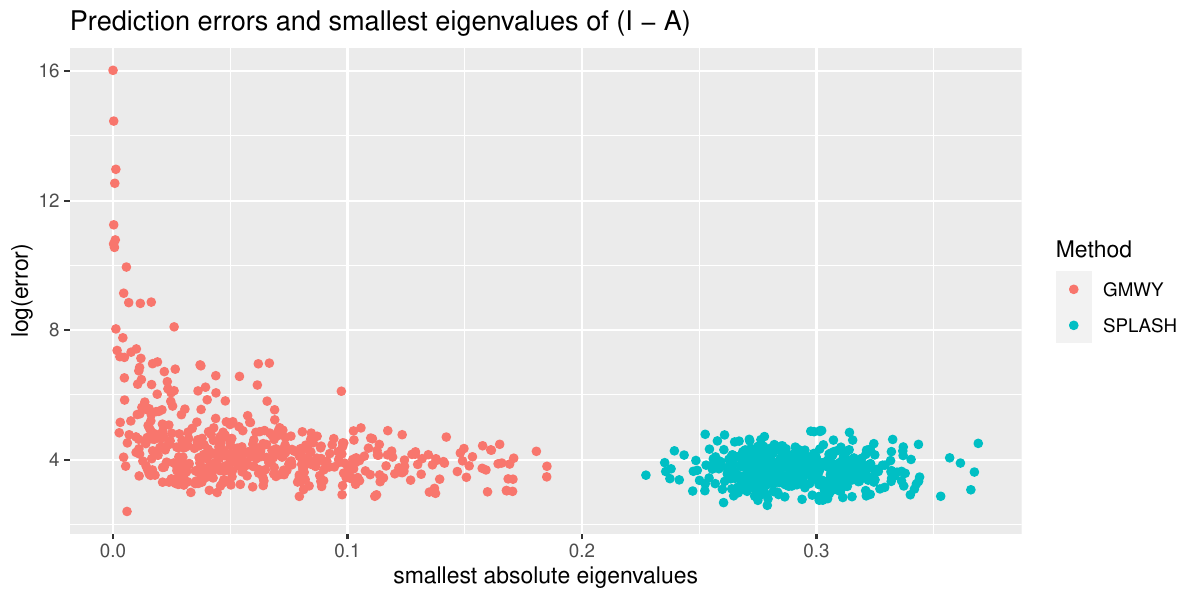}
    \caption{The scatterplot of the (logarithmic) squared prediction errors and the minimum eigenvalues of $(\bI-\bA)$ for GMWY($k_0$) and SPLASH($0$,$\lambda$). The figure relates to Design B without exogenous regressors with $N=25$ and $T=500$.}
    \label{fig:error_vs_eigen}
\end{figure}

\clearpage
\subsection{Satellite Data}\label{App:Satellite_Data}

\begin{figure}[h]
\center
\begin{tikzpicture}
    [
        box/.style={rectangle,draw=black,thick, minimum size=1.25cm},
    ]
\foreach \x in {1,...,9}{
    \foreach \y in {1,...,5}
        \node[box] at (1.25*\x,1.25*\y){};
}
\node[box,fill=green] at (1.25*1,1.25*1){};
\node[box,fill=green] at (1.25*1,1.25*2){};
\node[box,fill=green] at (1.25*1,1.25*3){};
\node[box,fill=green] at (1.25*1,1.25*4){};
\node[box,fill=black!40!green] at (1.25*1,1.25*5){};
\node[box,fill=green] at (1.25*2,1.25*3){};
\node[box,fill=green] at (1.25*2,1.25*4){};
\node[box,fill=green] at (1.25*2,1.25*5){};
\node[box,fill=green] at (1.25*3,1.25*5){};
\node[box,fill=white!75!blue] at (1.25*3,1.25*3){};
\node[box,fill=white!75!blue] at (1.25*4,1.25*1){};
\node[box,fill=white!75!blue] at (1.25*4,1.25*2){};
\node[box,fill=white!75!blue] at (1.25*4,1.25*3){};
\node[box,fill=white!75!blue] at (1.25*4,1.25*4){};
\node[box,fill=white!75!blue] at (1.25*4,1.25*5){};
\node[box,fill=white!75!blue] at (1.25*5,1.25*1){};
\node[box,fill=white!75!blue] at (1.25*5,1.25*2){};
\node[box,fill=white!40!blue] at (1.25*5,1.25*3){};
\node[box,fill=white!75!blue] at (1.25*5,1.25*4){};
\node[box,fill=white!75!blue] at (1.25*5,1.25*5){};
\node[box,fill=white!75!blue] at (1.25*6,1.25*1){};
\node[box,fill=white!75!blue] at (1.25*6,1.25*2){};
\node[box,fill=white!75!blue] at (1.25*6,1.25*3){};
\node[box,fill=white!75!blue] at (1.25*6,1.25*4){};
\node[box,fill=white!75!blue] at (1.25*6,1.25*5){};
\node[box,fill=white!75!blue] at (1.25*7,1.25*3){};

\node (A1) at (1*1.25,5*1.25) {$y_1$};
\node at (1*1.25,4*1.25) {$y_2$};
\node at (1*1.25,3*1.25) {$y_3$};
\node at (1*1.25,2*1.25) {$y_4$};
\node (A2) at (1*1.25,1*1.25) {$y_5$};
\node at (2*1.25,5*1.25) {$y_6$};
\node (A4) at (2*1.25,4*1.25) {$y_7$};
\node at (2*1.25,3*1.25) {$y_8$};
\node at (2*1.25,2*1.25) {$y_9$};
\node at (2*1.25,1*1.25) {$y_{10}$};
\node (A3) at (3*1.25,5*1.25) {$y_{11}$};
\node at (3*1.25,4*1.25) {$y_{12}$};
\node (B2) at (3*1.25,3*1.25) {$y_{13}$};
\node at (3*1.25,2*1.25) {$y_{14}$};
\node at (3*1.25,1*1.25) {$y_{15}$};
\node at (4*1.25,5*1.25) {$y_{16}$};
\node (B6) at (4*1.25,4*1.25) {$y_{17}$};
\node at (4*1.25,3*1.25) {$y_{18}$};
\node (B9) at (4*1.25,2*1.25) {$y_{19}$};
\node at (4*1.25,1*1.25) {$y_{20}$};
\node (B4) at (5*1.25,5*1.25) {$y_{21}$};
\node at (5*1.25,4*1.25) {$y_{22}$};
\node (B1) at (5*1.25,3*1.25) {$y_{23}$};
\node at (5*1.25,2*1.25) {$y_{24}$};
\node (B5) at (5*1.25,1*1.25) {$y_{25}$};
\node at (6*1.25,5*1.25) {$y_{26}$};
\node (B7) at (6*1.25,4*1.25) {$y_{27}$};
\node at (6*1.25,3*1.25) {$y_{28}$};
\node (B8) at (6*1.25,2*1.25) {$y_{29}$};
\node at (6*1.25,1*1.25) {$y_{30}$};
\node at (7*1.25,5*1.25) {$y_{31}$};
\node at (7*1.25,4*1.25) {$y_{32}$};
\node (B3) at (7*1.25,3*1.25) {$y_{33}$};
\node at (7*1.25,2*1.25) {$y_{34}$};
\node at (7*1.25,1*1.25) {$y_{35}$};
\node at (8*1.25,5*1.25) {$y_{36}$};
\node at (8*1.25,4*1.25) {$y_{37}$};
\node at (8*1.25,3*1.25) {$y_{38}$};
\node at (8*1.25,2*1.25) {$y_{39}$};
\node at (8*1.25,1*1.25) {$y_{40}$};
\node at (9*1.25,5*1.25) {$y_{41}$};
\node at (9*1.25,4*1.25) {$y_{42}$};
\node at (9*1.25,3*1.25) {$y_{43}$};
\node at (9*1.25,2*1.25) {$y_{44}$};
\node at (9*1.25,1*1.25) {$y_{45}$};

\end{tikzpicture}
\caption{The vertically ordered spatial grid. Colors indicate the admissible interactions for spatial units $y_1$ and $y_{23}$ under the identification constraint that the spatial units $y_i$ and $y_j$ can interact only if $\abs{i-j} < \lfloor N/4 \rfloor = 11$. Unit $y_1$ (dark green) possibly interacts with all spatial units with a lighter green color. The allowed interactions for $y_{23}$ (dark purple) are displayed similarly.}
\label{fig:spatial_grid_London}
\end{figure}

\newpage
\bibliographystyle{apalike}
\bibliography{Bibliography}

\makeatletter\@input{xx2.tex}\makeatother